\newcommand{\be}{\begin{equation*}}
\newcommand{\ee}{\end{equation*}}
\newcommand{\ben}[1]{\begin{equation}\label{#1}}
\newcommand{\een}{\end{equation}}
\newcommand{\bea}{\begin{eqnarray}}
\newcommand{\eea}{\end{eqnarray}}
\newcommand{\bean}{\begin{eqnarray*}}
\newcommand{\eean}{\end{eqnarray*}}
\newcommand{\R}{\mathbb{R}}
\newcommand{\rt}{\tilde{r}}
\renewcommand{\O}[1]{\mathcal{O}\left( #1 \right)}
\renewcommand{\H}{\underline{H}}
\newcommand{\df}{\tilde{\partial}}
\newcommand{\scri}{\mathscr{I}}
\newcommand{\abs}[1]{\left|#1 \right|} 
\newcommand{\norm}[2]{\left|\left |#1 \right| \right |_{#2}}
\newcommand{\tn}{\tilde{\nabla}}
\newcommand{\EKG}{Einstein--Klein-Gordon }
\newcommand{\eq}[1]{(\ref{#1})}
\newtheorem{Theorem}{Theorem}
\newtheorem{Lemma}{Lemma}
\newtheorem{Corollary}[Theorem]{Corollary}
\newtheorem{Definition}{Definition}
\newtheorem{Proposition}{Proposition}
\newtheorem{Remark}{Remark}
\numberwithin{Theorem}{section}
\numberwithin{Lemma}{section}
\title[Einstein--Klein-Gordon--AdS]{The Einstein--Klein-Gordon--AdS system for general boundary conditions}
\author{Gustav H. Holzegel}
\thanks{\texttt{g.holzegel@imperial.ac.uk} \\
\phantom{1   }\hspace{.05cm} Department of Mathematics, South Kensington Campus, Imperial College London, SW7 2AZ, UK}
\author{Claude M. Warnick}
\thanks{\vspace{.1cm} \texttt{c.warnick@warwick.ac.uk}\\
\phantom{1   }\hspace{.05cm} Department of Mathematics,  Gibbet Hill Rd, Coventry, West Midlands CV4 7AL, UK}
\date{\today \vspace{.1cm}}                                           
\begin{document}
\maketitle

\begin{abstract}
We construct unique local solutions for the spherically-symmetric Einstein--Klein-Gordon--AdS system subject to a large class of initial and boundary conditions including some considered in the context of the AdS-CFT correspondence. The proof relies on estimates developed for the linear wave equation by the second author and involves a careful renormalization of the dynamical variables, including a renormalization of the well-known Hawking mass. For some of the boundary conditions considered this system is expected to exhibit rich global dynamics, including the existence of hairy black holes. The present paper furnishes a starting point for such global investigations.
\end{abstract}

\section{Introduction}
Consider the coupled Einstein--Klein-Gordon system in the presence of a negative cosmological constant $\Lambda = - \frac{3}{l^2}$ and mass-squared $m^2= 2a$ for the Klein-Gordon field:
\begin{align}
R_{\mu \nu} - \frac{1}{2} g_{\mu \nu} R -\frac{3}{l^2} g_{\mu \nu} &= 8\pi T_{\mu \nu} \, , \nonumber \\
\Box_g \psi - \frac{2a}{l^2}\psi &=0 \, , \label{EKG}\\
 \partial_\mu \psi \partial_\nu \psi - \frac{1}{2} g_{\mu\nu} \partial_\sigma \psi \partial^\sigma \psi -\frac{a}{l^2} \psi^2 g_{\mu \nu} &= T_{\mu \nu}  \nonumber \, .
\end{align}
We wish to construct spherically symmetric solutions of (\ref{EKG}) in the class of spacetimes which are asymptotically anti de Sitter (aAdS) at infinity. The asymptotically-flat case (with $\Lambda=0$, $a=0$) has been considered in \cite{Christodoulou}  and the asymptotically-de Sitter case (with $\Lambda>0$, $a\geq 0$) in  \cite{Ringstroem, Costa} .

As is well-known, the study of hyperbolic systems  (linear or non-linear)  in aAdS spacetimes generally necessitates the prescription of boundary conditions at the timelike asymptotic infinity. In perhaps the simplest case, that of the linear wave equation 
\begin{align} \label{we}
\Box_g \psi - \frac{2a}{l^2}\psi=0
\end{align}
on a \emph{fixed} aAdS background $g$, the field has an expansion near infinity of the form:
\be
\psi \sim \psi_- \rho^{\frac{3}{2}-\kappa} + \psi_+ \rho^{\frac{3}{2}+\kappa}+\O{\rho^{\frac{5}{2}}},
\ee
where $\rho=0$ defines the conformal boundary and $\kappa = \sqrt{9/4+2a}$. For the mass-squared in the range $5/4<-2a<9/4$, the well posedness of the initial-boundary value problem with inhomogeneous Dirichlet ($\psi_-$ prescribed), Neumann ($\psi_+$ prescribed) or Robin (linear combination of $\psi_\pm$ prescribed) boundary conditions  was understood in the context of classical energy estimates in \cite{Warnick:2012fi}. For an earlier treatment of the Dirichlet case see \cite{Vasy,Holwp}. We remark that for $-2a<5/4$ there is no freedom in specifying boundary conditions, while for $-2a>9/4$, the Breitenlohner-Freedman bound \cite{BF}, the classical well-posedness theory based on energy estimates breaks down.

The connection between the linear problem (\ref{we}) and (\ref{EKG}) is that the linearization of the system (\ref{EKG}) around a \emph{fixed} spherically symmetric aAdS background $g$ yields the Klein-Gordon equation (\ref{we}).
 
 In the case of Dirichlet conditions imposed on $\psi$, in \cite{Holzegel:2011qk} the first author in collaboration with J.~Smulevici proved -- based on estimates for the linear problem \cite{Holwp} -- that the system (\ref{EKG}) was well-posed. A companion paper \cite{Holzegel:2011rk} established the stability of the Schwarzschild-AdS spacetime within this model. 
 
 With the recent results of \cite{Warnick:2012fi}, which ensure well-posedness of (\ref{we}) for general boundary conditions (and $5/4<-2a<9/4$), it is very natural to ask whether the \emph{non-linear} system (\ref{EKG}) is also well-posed for general boundary conditions. This is a non-trivial problem, because the weaker decay exhibited by $\psi$ for non-Dirichlet boundary conditions can lead to divergences in the equations for the metric coefficients.\footnote{In particular, the metrics we construct extend only at the $C^{1, \eta}$ level to the conformal boundary after rescaling for certain choices of $a$.  The problems this introduces will be resolved by a careful renormalization exploiting certain cancellations, see below.} At the level of applications, imposing these other boundary conditions will allow one to study more interesting \emph{global} dynamics such as non-trivial solitons, which are absent in the Dirichlet case.\footnote{Such boundary conditions are also of particular interest in the context of the AdS-CFT correspondence. See for instance \cite{Bhaseen:2012gg} where a black hole spacetime in a model incorporating an electromagnetic field is excited by imposing a time dependent inhomogeneous Dirichlet condition at the AdS boundary. See also \cite{Marolf:2012dr} for further examples of non-trivial boundary value problems in the AdS/CFT context.} A flavour of this is already provided by our \cite{Holzegel:2012wt}, which investigates the global dynamics of (\ref{we}) at the linear level, establishing among other things the existence of solitons for certain choices of boundary conditions. Therefore, the present paper opens the door for the mathematical analysis of the global evolution of asymptotically AdS spacetimes under physically more interesting boundary conditions, allowing, for instance, the study of stability of ``hairy black holes".

We finally mention that the global non-linear dynamics of the system (\ref{EKG}) has been intensely investigated numerically, see \cite{Bizon,Buchel:2012uh,Buchel:2013uba}.
\\ 
$\phantom{X}$ \\
{\bf Remark.} For $a=-1$, the scalar field is \emph{conformally coupled}. In this case (or more generally, for the Einstein equations coupled to any conformal matter model),  well-posedness of  the system (\ref{EKG}) can be proven without symmetry restrictions by the conformal method of Friedrich, see \cite{Friedrich, Lubbe}. However, it is not clear whether or how these methods extend to the general case.
\\ 
$\phantom{X}$ \\
{\bf New ideas and comparison with \cite{Holzegel:2011qk}.} 
In the remainder of this introduction we highlight the main difficulties and novel ideas in extending the results of \cite{Holzegel:2011qk} (homogeneous Dirichlet case) to general boundary conditions. 

We recall that a key ingredient of the argument in \cite{Holzegel:2011qk} was the consideration of a renormalized system whose well-posedness was equivalent to that of the original system. The solutions of the renormalized system were then constructed via a fixed point argument, which combined $L^2$-energy estimates for $\psi$ and (suitably weighted) pointwise estimates for the metric coefficients. Because the linear statement of \cite{Holwp} required $H^2$-regularity of solutions of the wave equation (\ref{we}), the contraction map was quite elaborate and required commutation of the wave equation, while carefully keeping track of the regularity of the metric coefficients.

The approach taken in this paper is similar (in particular the set-up of doing $L^2$-estimates for $\psi$ and pointwise estimates for the metric components 
is retained\footnote{If the boundary were at a finite distance, an approach based entirely on pointwise estimates would be possible using integration along characteristics for $\psi$. Here, unless one is in the conformally coupled case (which is essentially a ``finite" problem), it is not immediate whether and how this approach generalizes to the situation with the boundary being at infinity.}) but based on several new ingredients:
\begin{enumerate}
\item Unlike in the Dirichlet case, the energy estimates for $\psi$ have to be phrased in terms of the \emph{twisted derivatives} introduced in the linear context in \cite{Warnick:2012fi}. The twisting, while eventually enabling one to prove an energy estimate for non-Dirichlet conditions, introduces certain non-linear error-terms whose regularity and decay towards infinity has to be controlled. In addition, at several points (see for instance the formulation of the boundary condition in Section \ref{sec:bdycondition}) it becomes quite subtle whether the twisting is done with respect to a fixed boundary defining function or the inverse of the (dynamical!) geometric area radius. This difficulty is coupled with the low regularity we are working with, cf.~(4) below.
\item Unlike in the Dirichlet case, the equation for the Hawking mass \emph{also} needs to be renormalized. This may be viewed as a consequence of the fact that the usual $\partial_t$-energy for the linear problem diverges. With the renormalization one finally obtains a regular system (with a ``$\psi$-renormalized" Hawking mass (\ref{renhawk}) as a regular variable), whose contraction property can be established. 
\item Our contraction map scheme only uses the wave equation for the (inverse) area radius $\tilde{r}=\frac{1}{r}$ and the scalar field $\psi$ together with a first order equation for the renormalized Hawking mass which is integrated from the data towards the boundary. The Hawking mass can \emph{a posteriori} be shown to be constant along null-infinity in the homogeneous Dirichlet and Neumann case but 
remarkably, for Robin boundary conditions, it is in fact non-constant along the boundary, with the difference of renormalized Hawking mass between any two points on the boundary related to the (renormalized) energy flux of the scalar field  through the boundary familiar from the linear problem \cite{Warnick:2012fi}.  See Section \ref{sec:bdycondition}. We emphasize that having only three equations in the contraction map considerably simplifies the overdetermined scheme of \cite{Holzegel:2011qk}, where constancy of the Hawking mass is \emph{imposed} a-priori. 
\item Because the well-posedness statement of \cite{Warnick:2012fi} is proven at the $H^1$-level, we can close the argument with lower regularity for the contraction map than in \cite{Holzegel:2011qk}.\footnote{The lower regularity also allows us to work with the simple change of variables $\tilde{r} = 1/r$, while  \cite{Holzegel:2011qk} had to capture more detailed asymptotics.} The improvement of the regularity by commutation can be done \emph{a posteriori}. In particular, we obtain as a corollary an $H^1$-well-posedness result for the linear wave equation in a spherically symmetric background (with precise (low) regularity assumptions on the metric), see Section \ref{sec:well posed}. This may be useful for future applications. 
\item In addition, some novel estimates are obtained in the context of the contraction map, which can be directly used to simplify the proof of \cite{Holzegel:2011qk}. See Section \ref{sec:auxlemmas}.
\end{enumerate}
$\phantom{X}$ \\
{\bf Overview.} In the next section we derive the renormalized system culminating in the definition of a weak solution to the renormalized system (Definition 1). Initial and boundary data for this system are constructed in Section \ref{initdatasec} followed by a statement of the main theorem in Section \ref{sec:maintheo}. The proof of the main theorem is the content of Section 5: After definition of the relevant function spaces in Section \ref{funspacessec}, the contraction map is formulated in Section \ref{contractionsec}, with the contraction property being demonstrated in Sections  \ref{sec:auxlemmas}-\ref{sec:conmap}. In conjunction with a proposition about the propagation of the constraint equations (proven in Section \ref{sec:constraintprop}) the main theorem then follows. Generalizations of the main theorem are discussed in Section \ref{sec:generalizations} and an appropriate higher order regularity version is obtained a posteriori in Section \ref{sec:improreg}. The last section provides a useful Corollary for the linear wave equation in spherical symmetry with rough coefficients.
$\phantom{X}$ \\ \\
{\bf Acknowledgement.} The authors thank the Isaac Newton Institute for Mathematical Sciences and the organizers of the workshop ``Mathematics and Physics of the Holographic Principle" during which part of this research was carried out. We also thank Jacques Smulevici for discussions and Princeton University for its hospitality. GHH acknowledges support through NSF grant DMS-1161607.

\section{The Renormalized System of Equations}
Recall that $l^2 = -\frac{3}{\Lambda}$, where $\Lambda$ is the cosmological constant and define
\begin{equation} \label{kgmass}
\kappa = \sqrt{9/4+2a} \  \ \ \textrm{with $-9/4<2a<-5/4$} 
\end{equation}
where $2a$ is the mass-squared of the Klein-Gordon field, cf.~(\ref{EKG}).
We are interested in constructing spherically symmetric solutions of the \EKG system with a negative cosmological constant and with (possibly inhomogeneous) Dirichlet, Neumann or Robin boundary conditions. In \cite{Holzegel:2011qk} the same system was studied with homogeneous Dirichlet boundary conditions, so we may start from the following result of that paper:

\begin{Lemma}
Let $(\mathcal{M}, g, \psi)$, with $(\mathcal{M}, g)$ a four dimensional, smooth Lorentzian manifold with $C^2$-metric $g$ and $\psi\in C^2(\mathcal{M})$, be a solution to the EKG system (\ref{EKG}).
Assume that $(\mathcal{M}, g, \psi)$ is invariant under an effective action of $SO(3)$ with principal orbit type an $S^2$. Denote by $r$ the area-radius of the spheres of symmetry. Then, locally around any point of $\mathcal{M}$, there exist double-null coordinates $u, v$ such that the metric takes the form
\ben{symmet}
g = -\Omega^2 du dv + r^2 d\sigma_{S^2}
\een
where $\Omega:=\Omega(u,v)$ and $r:=r(u,v)$ are $C^2$ functions\footnote{In fact, it suffices that $r \in C^2$, $\Omega\in C^1$ with $\Omega_{uv} \in C^0$ for the metric to be $C^2$ in the sense that the Riemann tensor has $C^0$ components.} and $d\sigma_{S^2}$ is the standard round metric of unit radius on $S^2$. Let $\mathcal{Q} = \mathcal{M} / SO(3)$ be the quotient of the spacetime by the isometry group. Then, the \EKG equations reduce to:
\bea
\partial_u \left(\frac{r_u}{\Omega^2} \right) &=& -{4 \pi r} \frac{(\partial_u \psi)^2}{\Omega^2}, \label{EKG1} \\
\partial_v \left(\frac{r_v}{\Omega^2} \right) &=& -{4 \pi r} \frac{(\partial_v \psi)^2}{\Omega^2}, \label{EKG2} \\
r_{uv} &=& - \frac{\Omega^2}{4 r} -\frac{r_u r_v}{r} + \frac{2 \pi a r}{l^2}\Omega^2 \psi^2 - \frac{3}{4} \frac{r}{l^2} \Omega^2,\label{EKG3} \\
\left( \log \Omega \right)_{uv} &=& \frac{\Omega^2}{4 r^2} + \frac{r_u r_v}{r^2} - 4 \pi \partial_u \psi \partial_v \psi, \label{EKG4}\\
\partial_u \partial_v \psi &=& - \frac{r_u}{r} \psi_v - \frac{r_v}{r}\psi_u - \frac{\Omega^2a}{2 l^2} \psi .\label{EKG5}
\eea
\end{Lemma}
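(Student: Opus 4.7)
The statement is essentially classical and is established in \cite{Holzegel:2011qk}; I sketch the argument, which splits cleanly into a geometric reduction and a direct computation.

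For the reduction to the form (\ref{symmet}): since $SO(3)$ acts effectively on $(\mathcal{M},g)$ with principal orbit type $S^2$, on the open dense set of principal orbits the quotient $\mathcal{Q} = \mathcal{M}/SO(3)$ inherits the structure of a $2$-dimensional Lorentzian manifold with a $C^2$ metric $g_{\mathcal{Q}}$, and the area-radius $r$ becomes a $C^2$ function on $\mathcal{Q}$. The full metric is then the warped product $g = g_{\mathcal{Q}} + r^2\, d\sigma_{S^2}$. Since $g_{\mathcal{Q}}$ is a $2$-dimensional Lorentzian metric, locally one selects two transverse null line-fields and integrates them to obtain a pair of independent null defining functions $u,v$; in these coordinates $g_{\mathcal{Q}} = -\Omega^2\, du\, dv$ for some positive $\Omega$, yielding (\ref{symmet}). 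Note that $\psi \in C^2(\mathcal{M})$ together with spherical symmetry forces $\psi = \psi(u,v)$.

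For the derivation of (\ref{EKG1})--(\ref{EKG5}): one computes the Christoffel symbols of (\ref{symmet}) and then the nontrivial Ricci components. A direct calculation shows that $R_{uu}$ is proportional to $\partial_u(r_u/\Omega^2)$ (and similarly for $R_{vv}$), so the $uu$- and $vv$-Einstein equations, combined with $T_{uu}=(\partial_u\psi)^2$ and $T_{vv}=(\partial_v\psi)^2$ (the cosmological and potential terms drop out because $g_{uu}=g_{vv}=0$), reduce immediately to the Raychaudhuri equations (\ref{EKG1}), (\ref{EKG2}). The $uv$-Einstein equation, after inserting $T_{uv}$ from the scalar-field stress--energy tensor and isolating $r_{uv}$, produces (\ref{EKG3}), in which the cosmological constant and potential now do appear. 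Comparing the $uv$-Einstein equation with the angular ($\theta\theta$) component yields a relation that, once $r_{uv}$ has been eliminated via (\ref{EKG3}), isolates $(\log \Omega)_{uv}$ and gives (\ref{EKG4}). Finally, (\ref{EKG5}) is obtained by writing the Klein-Gordon equation $\Box_g \psi - \frac{2a}{l^2}\psi = 0$ in double-null coordinates, using that for spherically symmetric $\psi$
\begin{equation*}
\Box_g \psi = -\frac{4}{\Omega^2}\left[\psi_{uv} + \frac{r_u}{r}\psi_v + \frac{r_v}{r}\psi_u\right] + 0,
\end{equation*}
since the angular part of the d'Alembertian annihilates spherically symmetric functions.

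The main (if modest) obstacle is convention-tracking: the form $-\Omega^2\, du\, dv$ involves symmetric products, so one must keep careful track of factors of $2$ in $g_{uv}$ versus $g^{uv}$ and in the scalar-curvature trace; likewise, the angular Christoffel symbols are needed to disentangle $(\log\Omega)_{uv}$ correctly from the $\theta\theta$ and $uv$ equations, where both $r_{uv}$ and $(\log\Omega)_{uv}$ occur linearly. The stated regularity $r,\psi \in C^2$ and $\Omega \in C^1$ with $\Omega_{uv}\in C^0$ is precisely what is needed so that every equation in (\ref{EKG1})--(\ref{EKG5}) makes classical sense and the Riemann tensor has $C^0$ components.
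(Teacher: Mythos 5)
Your proposal is correct and matches the standard derivation: the paper itself gives no proof of this Lemma, importing it directly from \cite{Holzegel:2011qk}, and your sketch (warped-product reduction over $\mathcal{Q}=\mathcal{M}/SO(3)$, double-null coordinates on the two-dimensional Lorentzian quotient, then reading off \eq{EKG1}--\eq{EKG2} from the $uu$/$vv$ Einstein equations, \eq{EKG3}--\eq{EKG4} from the $uv$ and angular components, and \eq{EKG5} from $\Box_g\psi$ in double-null form) is precisely the computation carried out there. No gaps; the only care needed is the bookkeeping of factors of $2$ from $g_{uv}=-\Omega^2/2$, which you flag explicitly.
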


While the variables $\Omega, r, \psi$ have a clean geometrical interpretation, they are not very suitable for the purposes of solving the system of equations \eq{EKG1}-\eq{EKG5} because they become singular at the conformal boundary of anti-de Sitter, where we expect $r\to \infty$, with $\Omega^2 \sim r^{2}, \psi \sim r^{-\frac{3}{2} + \kappa}$. In order to capture the asymptotic behaviour more carefully, we introduce a renormalised system of equations. We follow \cite{Holzegel:2011qk} in first introducing the Hawking mass:
\ben{hawkdef}
\varpi = \frac{r}{2}\left(1+ \frac{4 r_u r_v}{\Omega^2} \right) + \frac{r^3}{2 l^2}.
\een
This is a scalar under changes of $(u,v)$ coordinates which fix the metric form \eq{symmet} and is simply a constant for the Schwarzschild--anti-de Sitter metric. The Hawking mass obeys the following transport equations, which hold assuming \eq{EKG1}-\eq{EKG5}:
\bea
\partial_u \varpi &=& - 8 \pi r^2 \frac{r_v}{\Omega^2} (\partial_u \psi)^2 + \frac{4 \pi r^2 a}{l^2} r_u \psi^2 \, , \label{hawkueqn} \\
\partial_v \varpi &=& - 8 \pi r^2 \frac{r_u}{\Omega^2} (\partial_v \psi)^2 + \frac{4 \pi r^2 a}{l^2} r_v \psi^2  \, .\label{hawkveqn}
\eea
We can replace some of the \EKG system of equations in the previous Lemma with equations involving $\varpi$. For the purposes of the following Lemma, we may assume all derivatives to be taken in the weak sense.
\begin{Lemma}
Suppose that \eq{EKG3}, \eq{EKG5}, \eq{hawkueqn}, \eq{hawkveqn} hold, where $\Omega$ is understood to be defined by \eq{hawkdef}. Then as a consequence, \eq{EKG1}, \eq{EKG2} also hold. If furthermore the right hand side of \eq{EKG3} may be differentiated in $u$, then \eq{EKG4} holds.
\end{Lemma}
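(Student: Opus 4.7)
The strategy is to view \eq{hawkdef} as a constraint whose differentiated form feeds \eq{EKG1} and \eq{EKG2}. More precisely, I would differentiate \eq{hawkdef} in $u$, eliminate $r_{uv}$ via \eq{EKG3}, and match the result against \eq{hawkueqn}; this yields \eq{EKG1}. The symmetric manipulation with $u \leftrightarrow v$ gives \eq{EKG2}. For \eq{EKG4} I would then exploit the Schwarz-type identity $r_{uvu} = r_{uuv}$, whose validity is exactly what the additional differentiability hypothesis secures.

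Concretely, rewrite \eq{hawkdef} as $\varpi = \tfrac{r}{2} + \tfrac{2rr_ur_v}{\Omega^2} + \tfrac{r^3}{2l^2}$ and apply $\partial_u$, using the Leibniz rule $\partial_u(r_ur_v/\Omega^2) = r_v\,\partial_u(r_u/\Omega^2) + r_ur_{uv}/\Omega^2$ and inserting $r_{uv}$ from \eq{EKG3}. After the cosmological and $\psi^2$ contributions cancel, the manipulation produces
\[
\partial_u\varpi = 2rr_v\,\partial_u\!\left(\frac{r_u}{\Omega^2}\right) + \frac{4\pi a r^2 r_u\psi^2}{l^2}.
\]
Comparison with \eq{hawkueqn} and division by the nonzero factor $2rr_v$ gives \eq{EKG1}; the derivation of \eq{EKG2} is verbatim the same with the roles of $u$ and $v$ swapped.

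For the second assertion, substituting $r_ur_v/\Omega^2$ from \eq{hawkdef} into \eq{EKG3} produces the compact ``Hawking form''
\[
\frac{r_{uv}}{\Omega^2} = -\frac{\varpi}{2r^2} - \frac{r}{2l^2} + \frac{2\pi ar\psi^2}{l^2}.
\]
Differentiating this identity in $u$ (permitted by hypothesis) and using \eq{hawkueqn} to eliminate $\varpi_u$ furnishes an expression for $r_{uvu}$. Independently, the just-established \eq{EKG1} rearranges to $r_{uu} = 2r_u\Omega_u/\Omega - 4\pi r\psi_u^2$, and its $v$-derivative yields
\[
r_{uuv} = 2r_{uv}\,\frac{\Omega_u}{\Omega} + 2r_u(\log\Omega)_{uv} - 4\pi r_v\psi_u^2 - 8\pi r\psi_u\psi_{uv}.
\]
Setting $r_{uvu} = r_{uuv}$, substituting \eq{EKG5} for $\psi_{uv}$, and reusing \eq{hawkdef} one last time to convert $r_ur_v/\Omega^2$ back to $\varpi$ on the right-hand side collapses the algebra to \eq{EKG4} after a final division by $2r_u$.

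The proof is purely algebraic, and the only ``obstacle'' is the careful bookkeeping of several independent cancellations---among the $r_u\psi^2/l^2$ terms, the bare $r_u$ and $r^2 r_u/l^2$ terms, and the $r_u^2 r_v/\Omega^2$ terms. In effect, the verification of \eq{EKG4} amounts to the statement that \eq{hawkdef} is compatible with \eq{EKG3}, \eq{EKG5}, \eq{hawkueqn} and \eq{hawkveqn}. The divisions by $r_u$ and $r_v$ at the final step are harmless in the characteristic initial value problem of the paper, where these derivatives retain a definite sign, so I anticipate no conceptual difficulty beyond the length of the computation.
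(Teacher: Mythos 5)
Your proposal is correct and is essentially the paper's own argument: \eq{EKG1}, \eq{EKG2} are obtained by differentiating the Hawking-mass relation \eq{hawkdef} in $u$ (resp.\ $v$) and eliminating $r_{uv}$ and $\varpi_u$ (resp.\ $\varpi_v$) via \eq{EKG3} and \eq{hawkueqn}--\eq{hawkveqn}, while \eq{EKG4} follows by equating the mixed derivative $r_{uuv}=r_{uvu}$ computed once from $\Omega^2\times$\eq{EKG1} differentiated in $v$ (inserting \eq{EKG5}) and once from \eq{EKG3} differentiated in $u$, with \eq{hawkdef} used at the end to restore $\varpi$ -- exactly the paper's scheme. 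Your intermediate identities check out, and the implicit divisions by $r_u$, $r_v$ are equally implicit in the paper's manipulation (which solves \eq{hawkdef} for $r_u/\Omega^2$), so there is no substantive difference.
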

\begin{proof}
We first show that \eq{EKG1} holds as a consequence of \eq{EKG3}, \eq{hawkdef}, \eq{hawkueqn}. Consider the left hand side. We can replace $\frac{r_u}{\Omega^2}$ with a term involving $r, r_v, \varpi$ using \eq{hawkdef}. Differentiating this in $u$, we can replace the $r_{uv}$ and $\varpi_u$ terms which appear by making use of \eq{EKG3} and \eq{hawkueqn}. Simplifying the resultant expression, we arrive at \eq{EKG1}. Similarly \eq{EKG2} holds as a consequence of  \eq{EKG3}, \eq{hawkdef}, \eq{hawkveqn}. To show that \eq{EKG3} holds, we can multiply \eq{EKG1} by $\Omega^2$ and then differentiate with respect to $v$. Doing so, we obtain a term involving $(\log \Omega)_{uv}$, a term involving $\psi_{uv}$ and one involving $r_{uuv}$ together with lower order terms. The first of these we retain, the second can be replaced by making use of \eq{EKG5}, and the final one we can write as $\partial_u(r_{uv})$ and substitute in \eq{EKG3}. Simplifying the resulting expression, we arrive at \eq{EKG4}.
\end{proof}
The Hawking mass may loosely be thought of as the mass-energy inside a sphere of radius $r$. In the case of homogeneous Dirichlet conditions, this approaches a constant on the conformal boundary. For other choices of boundary condition, $\varpi$ in fact diverges towards the conformal boundary. This is a consequence of the fact that the un-renormalised energy in the scalar field $\psi$ is infinite for such boundary conditions. In the linear problem one must renormalise the energy-momentum tensor to give a finite energy for the field \cite{Warnick:2012fi,Holzegel:2012wt,BF}. In much the same way, we shall renormalise $\varpi$ and render it finite by subtracting a term which grows towards the boundary. To do so, we recall that key to the construction of the renormalised energy-momentum tensor for the linear problem is the introduction of twisted derivatives. Consider equation \eq{hawkueqn}. We can replace $\partial_u\psi$ with a twisted derivative as follows:
\be
\partial_u \psi = f \partial_u\left( \frac{\psi}{f} \right) + \frac{f_u}{f} \psi
\ee
for some $C^1$ function $f$. From here we deduce:
\begin{align}
\left(\partial_u \psi\right)^2 = \left(f \partial_u \left(\frac{\psi}{f}\right) + \psi \frac{f_u}{f}\right)^2 &= \left[ f \partial_u \left(\frac{\psi}{f}\right) \right]^2 + 2 f_u \psi  \partial_u \left(\frac{\psi}{f}\right) + \psi^2 \left[\frac{f_u}{f} \right]^2 \nonumber \\
&= \left[ f \partial_u \left(\frac{\psi}{f}\right) \right]^2 -  \psi^2 \left[\frac{f_u}{f} \right]^2 + \frac{f_u}{f} \partial_u \psi^2. \label{twpsisq}
\end{align}
Our intuition from the linear case leads us to expect that a suitable choice for $f$ is to take $f = r^{g}$ for some $g$ to be determined below. After substituting \eq{twpsisq} into \eq{hawkueqn}, the term involving $\partial_u \psi^2$ can be moved to the left hand side, at the expense of introducing some new zero'th order terms in $\psi$. Doing this and using \eq{hawkdef} to replace terms involving $r_u r_v /\Omega^2$, we find
\begin{align} \label{renhm}
\partial_u \left( \varpi -2\pi g \frac{r^3}{l^2} \psi^2\right)    =& -8\pi r^2 \frac{r_v}{\Omega^2}  \left[ f \partial_u \left(\frac{\psi}{f}\right) \right]^2+ 4\pi g \left(r-2\varpi\right) \psi \left( f \partial_u \frac{\psi}{f} \right) \nonumber \\
& +  2\pi \psi^2  r_u \left(\frac{r^2}{l^2} \left[-g^2-3g+2a\right] + g^2 \left(1-\frac{2\varpi}{r}\right) \right) 
\end{align}
Now we see that the choice $$g = -\frac{3}{2} + \kappa \, ,$$ suggested by linear theory, indeed leads to a cancellation of the top order term on the right hand side and we will henceforth work with $g$ defined by this choice. We therefore introduce a renormalised Hawking mass by
\ben{renhawk}
\varpi_N = \varpi - 2 \pi g \frac{r^3}{l^2}{\psi^2} \, .
\een
If \eq{EKG1}-\eq{EKG5} hold, then $\varpi_N$ obeys the equations
\begin{equation} \label{e3}
\begin{split}
\partial_v \varpi_N    =& -8\pi r^2 \frac{{r}_u}{{\Omega}^2}  \left[ f \partial_v \left(\frac{\psi}{f}\right) \right]^2+ 4\pi g \left(r-2\varpi_N\right) \psi \left( f \partial_v \frac{\psi}{f} \right) \\
& +  2\pi \psi^2  r_v \left( g^2 \left(1-\frac{2\varpi_N}{r}\right) \right) -16\pi^2g^2 \frac{r^3}{l^2} \psi^3 \left( f \partial_v \frac{\psi}{f} \right) - 8\pi^2 g^3 \frac{r^2}{l^2} r_v \psi^4
\end{split}
\end{equation}
\begin{equation} \label{e4}
\begin{split}
\partial_u \varpi_N    =& -8\pi r^2 \frac{{r}_v}{{\Omega}^2}  \left[ f \partial_u \left(\frac{\psi}{f}\right) \right]^2+ 4\pi g \left(r-2\varpi_N\right) \psi \left( f \partial_u \frac{\psi}{f} \right) \\
& +  2\pi \psi^2  r_u \left( g^2 \left(1-\frac{2\varpi_N}{r}\right) \right) -16\pi^2g^2 \frac{r^3}{l^2} \psi^3 \left( f \partial_u \frac{\psi}{f} \right) - 8\pi^2 g^3 \frac{r^2}{l^2} r_u \psi^4
\end{split}
\end{equation}
which follow immediately from \eq{renhm}, together with the same equation after swapping $u, v$.

To renormalise the wave equation for $\psi$ \eq{EKG5}, we can simply follow the procedure applied in \cite{Warnick:2012fi,Holzegel:2012wt} for twisting a Klein-Gordon equation. We claim that by expanding the terms  (assuming $f=r^{-\frac{3}{2}+\kappa}$, $r\in C^1$, and that the equation \eq{EKG3} for $r_{uv}$ holds) the following equations are readily seen to be equivalent to one another and also to \eq{EKG5}.
\bea
\partial_v \left(f r \left(\partial_u \frac{{\psi}}{f}\right)\right) &=& - \partial_u \left(rf\right) \left(\partial_v \frac{{\psi}}{f}\right) - \frac{\Omega^2}{4} r V{\psi}, \label{e2} \\
\partial_u \left(f r \left(\partial_v \frac{{\psi}}{f}\right)\right) &=& - \partial_v \left(rf\right) \left(\partial_u \frac{{\psi}}{f}\right) - \frac{\Omega^2}{4} r V{\psi}, 
\eea
where the potential is given by:
\begin{align} \label{e2b}
V &= \frac{2\varpi}{r^3} \left(\kappa - \frac{3}{2}\right)^2 + 8\pi \left(\kappa - \frac{3}{2}\right) \frac{a}{l^2} \psi^2 - \frac{1}{r^2} \left(\kappa^2 - 2\kappa + \frac{3}{4}\right) \nonumber \\
&= \frac{2\varpi_N}{r^3} \left(\kappa - \frac{3}{2}\right)^2 + \frac{\psi^2}{l^2}\left(\kappa-\frac{3}{2}\right) \left[8\pi a +4\pi \left(\kappa-\frac{3}{2}\right)^2 \right]  - \frac{1}{r^2} \left(\kappa^2 - 2\kappa + \frac{3}{4}\right) \, .
\end{align}
Note that for $\kappa>\frac{1}{2}$, i.e.\ beyond the conformally coupled case $a=-1$, the potential decays \emph{slower} than $r^{-2}$. This is a consequence of the fact that, even assuming all the metric functions are smooth on the interior, the rescaled metric $r^{-2} g$ can no longer be extended as a $C^2$ metric across the conformal boundary, but rather only in $C^{1, 2-2\kappa}$. We shall be forced to confront this issue at various points in our arguments.

Finally, the radial coordinate $r$ may be simply renormalised by considering instead $\rt = \frac{1}{r}$. Making use of \eq{EKG3}, together with the expressions \eq{hawkdef}, \eq{renhawk} relating $\Omega, \varpi_N$, it is a matter of simple calculation to show that
\begin{align} \label{e1}
\rt_{uv} = 
  \frac{\Omega^2}{r^2} \left(\frac{3 \varpi_N}{2 r^2} -\frac{1}{2r} + \frac{2 \pi  r \psi^2}{l^2} \left(-a+\frac{3}{2}g\right)\right).
\end{align}

\subsection{Notation} 
In view of their importance, we introduce a notation for the twisted derivatives introduced above. We let $\rho=\frac{1}{2}\left(u-v\right)$ and define
\begin{equation} \label{htd2}
\hat{\partial}_u \psi := \rho^{\frac{3}{2}-\kappa}\frac{\partial}{\partial u} \left (\frac{\psi}{\rho^{\frac{3}{2}-\kappa}} \right)
\end{equation}
and
\begin{equation} \label{ttd}
\tilde{\partial}_u \psi := \rt^{\frac{3}{2}-\kappa}\frac{\partial}{\partial u} \left (\frac{\psi}{\rt^{\frac{3}{2}-\kappa}} \right) \, .
\end{equation}
Note that in (\ref{htd2}) we twist with the function $\rho$ known explicitly in terms of the coordinates $u$ and $v$, while in (\ref{ttd}) we twist with the geometric area radius $\tilde{r}$ which is itself a dynamical variable. The wave equation (\ref{e2}) twists naturally with $\tilde{r}$ while the norms are more cleanly expressed in terms of $\rho$-twisted derivatives. A relation between (\ref{htd2}) and (\ref{ttd}) in the context of the contraction map is established in Lemma \ref{lem:twisteq}.

We also denote $t=\frac{1}{2}\left(u+v\right)$ and observe that this is a useful coordinate along $\scri$.

\subsection{Restriction on $\kappa$} \label{sec:restrict}
Let us recapitulate what the above renormalization has achieved. We recall that from the linear theory we expect
\[
\psi \sim r^{-3/2+\kappa} \ \ \ , \ \ \  \tilde{\partial}_u{\psi}, \tilde{\partial}_v \psi \sim  r^{\max\left(-3/2+\kappa, -\frac{1}{2}-\kappa\right)} \ \ \ , \ \ \ \partial_u \psi , \partial_v \psi \sim  r^{-1/2+\kappa} \, .
\] 
Investigating the right hand side of (\ref{e3}) and (\ref{e4}) we see that (assuming the decay from the linear theory for the moment) all but the last of the five terms are integrable for $0<\kappa<1$, while the last is integrable only for $0<\kappa < \frac{3}{4}$. This situation can be remedied with an \emph{additional renormalization} to be discussed in Section \ref{sec:kap}. A further restriction on $\kappa$, $\kappa<\frac{2}{3}$, will arise when proving the energy estimate for the wave equation (\ref{e2}) in view of the $\psi^2$-term in the potential (\ref{e2b}) not decaying sufficiently strongly. This can also be remedied as shown in Section \ref{sec:kap}. However, to avoid cumbersome formulae and obscuring the main ideas, for the remainder of the paper we are simply going to assume 
\begin{align} \label{massrestrict}
0< \kappa < \frac{2}{3} \, .
\end{align}
In Section \ref{sec:kap} we outline a proof of the general case $0<\kappa<1$. 

It is not surprising that the problem becomes more technically challenging for $\kappa$ close to $1$. The solutions we shall construct at the $H^1$ level will have an expansion in a suitable coordinate chart of the form
 \bean
 \psi &=& \psi^-(t) \rho^{\frac{3}{2} - \kappa} + \O{\rho^{\frac{3}{2}}} \\
 g &=& \frac{l^2}{\rho^2} \left[ \left(1+\O{\rho^\eta} \right ) du dv +  \left(1+\O{\rho^\eta} \right ) d\sigma_{S^2}   \right]
 \eean
 where $\eta = \min(2, 3-2\kappa)$. Moreover this expansion is sharp: at the classical level of regularity one indeed sees terms in the metric proportional to $\rho^{3-2\kappa}$ and $\rho^2$ which cannot be removed by a coordinate choice. We thus see that the metric is only weakly asymptotically AdS for $\kappa >\frac{1}{2}$.

\subsection{The renormalised problem}
Motivated by the previous considerations, we are now ready to set up the problem which we will actually solve. Define the triangle $\Delta_{\delta,u_0} := \{(u, v) \in \R^2: u_0 \leq v \leq u_0+\delta, v<u \leq u_0+\delta \}$, and the conformal boundary $\scri := \overline{\Delta}_{\delta, u_0} \setminus \Delta_{\delta,u_0} = \{(u, v) \in \overline \Delta_{\delta,u_0} : u=v\}  $.  We shall allow ourselves to write $\Delta$ for $ \Delta_{\delta,u_0}$ as long as there is no ambiguity. We will take as our dynamical variables
\begin{align}
\tilde{r} : \Delta_{\delta,u_0} \longrightarrow \mathbb{R}^+ \ \ \ , \ \ \ \psi : \Delta_{\delta,u_0} \longrightarrow \mathbb{R} \ \ \ , \ \ \ \varpi_N : \Delta_{\delta,u_0} \longrightarrow \mathbb{R} \, ,
\end{align}
and treat these as \emph{defining} the auxiliary variables:
\ben{defaux}
r:=\frac{1}{\tilde{r}},  \quad \varpi := \varpi_N + 2\pi g \frac{r^3}{l^2} \psi^2,  \quad  \Omega^2 := -\frac{4r^4 \tilde{r}_u \tilde{r}_v}{1-\mu} \, , \quad 1-\mu := 1-\frac{2 \varpi}{r} + \frac{r^2}{l^2}.
\een
With these definitions, we can understand   \eq{e3}, \eq{e4},  \eq{e2}, \eq{e1} as equations for $\rt, \varpi_N, \psi$.

\begin{Definition} \label{def:wsEKG}
A weak solution to the renormalised \EKG equations is a triple $(\rt, \varpi_N, \psi)\in C^1_{loc.} \cap W^{1,1}_{loc.} \cap H^1_{loc.}$ such that $\psi_u, \rt_{uu}, (\varpi_N)_u\in C^0_{loc.}$ and which satisfies   \eq{e3}, \eq{e4},  \eq{e2}, \eq{e1} in a weak sense.
\end{Definition}
We note that as a consequence of the equations holding, a weak solution to the renormalised \EKG equations necessarily has $\rt_{uv}, \rt_{uuv}\in C^0_{loc.}$.
We justify considering the renormalised system of equations with the following result.
\begin{Lemma}
Suppose that we have a weak solution to the renormalised \EKG equations. Then in fact the equations \eq{EKG1}-\eq{EKG5} hold in a weak sense, and hence we can say that the metric \eq{symmet} satisfies the \EKG equations, \eq{EKG}, in a weak sense.
\end{Lemma}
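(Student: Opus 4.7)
The plan is to reverse the purely algebraic derivations of Section 2 that produced the renormalized system \eq{e1}--\eq{e4} from the original \EKG equations, and then to invoke the second Lemma of this section to obtain the remaining equations. Since the regularity prescribed in Definition \ref{def:wsEKG} supplies enough pointwise control to interpret all products and derivatives that appear, every identity below will hold in the weak sense. To recover \eq{EKG3} from \eq{e1}, I apply the chain rule $r_u = -r^2 \tilde{r}_u$ and $r_{uv} = 2r^3 \tilde{r}_u \tilde{r}_v - r^2 \tilde{r}_{uv}$, substitute \eq{e1}, and eliminate $\tilde{r}_u \tilde{r}_v$ using the definition $\Omega^2 = -4r^4 \tilde{r}_u \tilde{r}_v/(1-\mu)$ together with $\varpi = \varpi_N + 2\pi g r^3 \psi^2/l^2$; a direct computation then reproduces \eq{EKG3}.

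Next I would recover \eq{hawkueqn} (and symmetrically \eq{hawkveqn}) from \eq{e4} (resp.\ \eq{e3}). The chain $\eq{hawkueqn}\to\eq{renhm}\to\eq{e4}$ traced in Section 2 consists of four reversible algebraic steps: inserting the twisting identity \eq{twpsisq} with $f=r^g$, $g=-3/2+\kappa$; eliminating $r_u r_v/\Omega^2$ by means of \eq{hawkdef}; cancelling a term using the identity $-g^2-3g+2a=0$ (equivalent to $\kappa^2 = 9/4+2a$); and substituting $\varpi = \varpi_N + 2\pi g r^3 \psi^2/l^2$. Executed in reverse order this produces \eq{hawkueqn} from \eq{e4}. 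To recover \eq{EKG5} from \eq{e2}, I expand $f\partial_u(\psi/f) = \psi_u - g r_u \psi/r$ and the analogous expression in $v$; \eq{e2} then rearranges to
\[
r\psi_{uv} + r_v \psi_u + r_u \psi_v = \psi\Big[g\, r_{uv} + g(g+1)\frac{r_u r_v}{r} - \frac{\Omega^2 r}{4}V\Big].
\]
Substituting the \eq{EKG3} just obtained, using $r_u r_v = -\Omega^2(1-\mu)/4$, and applying the elementary identities $\kappa^2 - 2\kappa + \tfrac34 = g^2+g$ and $g^2 + 3g = 2a$ (both immediate from $g = -3/2+\kappa$), the bracketed expression collapses to $-\Omega^2 a r/(2l^2)$, so the equation reduces to \eq{EKG5} after dividing by $r$.

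With \eq{EKG3}, \eq{EKG5}, \eq{hawkueqn} and \eq{hawkveqn} in place, the preceding Lemma directly delivers \eq{EKG1}--\eq{EKG2}, and also \eq{EKG4} provided the right-hand side of \eq{EKG3} admits a weak $u$-derivative. This last hypothesis is precisely the content of the remark following Definition \ref{def:wsEKG}, which records $\tilde{r}_{uv}, \tilde{r}_{uuv}\in C^0_{loc.}$ as a consequence of the equations, combined with the assumed $\psi_u, (\varpi_N)_u\in C^0_{loc.}$. The main obstacle in the whole argument is the bookkeeping of regularity: one must verify that each pointwise product entering \eq{defaux} and the algebraic manipulations above is well-defined at the level prescribed by Definition \ref{def:wsEKG}, and that $1-\mu$ remains strictly positive on the domain under consideration so that $\Omega^2$ is defined and the various divisions that appear are legitimate.
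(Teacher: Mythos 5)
Your argument is correct and coincides with what the paper intends: the paper gives no explicit proof of this Lemma, treating it as immediate from reversing the purely algebraic derivations of Section 2 (with $\Omega^2$, $\varpi$ defined via \eq{defaux}) and then invoking the preceding Lemma to recover \eq{EKG1}, \eq{EKG2} and, using the differentiability of the right-hand side of \eq{EKG3} guaranteed by the regularity in Definition \ref{def:wsEKG}, also \eq{EKG4}. Your reversal of each step, including the identities $-g^2-3g+2a=0$, $\kappa^2-2\kappa+\tfrac34=g^2+g$ and the use of \eq{EKG3} in showing \eq{e2}$\Leftrightarrow$\eq{EKG5}, checks out.
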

\begin{Remark}
Such a statement obviously makes sense with higher regularity. In particular if $r\in C^2_{loc.}, \varpi \in C^1_{loc.}, \psi \in C^1_{loc.}$ then the metric $g$ defined by \eq{symmet} has $C^0$ curvature, and the \EKG equations hold in a classical sense. We also remark that (\ref{e4}) only needs to hold on the initial data and is then propagated by (\ref{e1}), (\ref{e2}), (\ref{e3}) as shown explicitly in Section  \ref{sec:constraintprop}.
\end{Remark}

\section{Initial and boundary data} \label{initdatasec}

\subsection{Initial data}
In this section we shall give conditions on initial data which are sufficient for the construction of a weak solution to the \EKG system. When we turn later to showing that better regularity is propagated by the equations, we shall introduce further conditions, see Section \ref{sec:improreg}.

\begin{Definition} \label{def:id}
Let $\mathcal{N}=\left(u_0,u_1\right]$ be a real interval. We call a pair of functions
$\left(\overline{\tilde{r}}, \overline{\psi}\right) \in C^{2} \left(\mathcal{N}\right) \times  C^{1} \left(\mathcal{N}\right)$ a \underline{free data set}, provided the following holds:
\begin{itemize}
\item $\overline{\tilde{r}} > 0$ and $\overline{\tilde{r}}_u > 0$ in $\mathcal{N}$, as well as $\lim_{u \rightarrow u_0} \overline{\tilde{r}}\left(u\right) = 0$, $\lim_{u \rightarrow u_0} \overline{\tilde{r}}_u\left(u\right) = \frac{1}{2}$ and $\lim_{u \rightarrow u_0} \overline{\tilde{r}}_{uu}=0$.
\item There is a constant $C_{data}$ such that 
\begin{align} \label{asp1}
\int_{u_0}^{u_1} \left[ \left(\overline{f} \partial_u \left( \frac{\overline{\psi}}{\overline{f}}\right)\right)^2  + \overline{\psi}^2\right] \left(u-u_0\right)^{-2} du    <C_{data}
\end{align}
\begin{align} \label{asp2}
 \sup_{\mathcal{N}} | \overline{\psi} \cdot \overline{\tilde{r}}^{-\frac{3}{2}+\kappa} | +  \sup_{\mathcal{N}} \Big| r^{\frac{1}{2}} \left( \overline{f} \partial_u\frac{\overline{\psi}}{\overline{f}}\right) \Big| < C_{data}
\end{align}
Here $\overline{f}= \left[\frac{1}{2}\left(u-u_0\right)\right]^{3/2-\kappa}$.
In particular, the limit $\Psi:=\lim_{u \rightarrow u_0} \overline{\psi} \cdot \overline{\tilde{r}}^{-\frac{3}{2}+\kappa}$ exists.
\end{itemize} 
\end{Definition}

From a free data set as above, we construct a complete initial data set $\left(\overline{\tilde{r}}, \overline{\psi}, \overline{\varpi_N}, \overline{\tilde{r}_v}\right) \in C^{2} \left(\mathcal{N}\right) \times  C^{1} \left(\mathcal{N}\right) \times C^{1} \left(\mathcal{N}\right) \times C^{1} \left(\mathcal{N}\right) $ by integrating the constraints as follows.

The function $\overline{\varpi_N}$ is obtained  as the unique solution $\overline{\varpi_N} \in C^{1} \left(\mathcal{N}\right)$ of the linear ODE 
\begin{equation}
\begin{split} \label{varpied}
\left(\overline{\varpi_N}\right)_u = 2\pi \overline{r}^2 \frac{1-\frac{2\varpi_N}{\overline{r}} + \frac{\overline{r}^2}{l^2} -4\pi g \frac{\overline{r}^2}{l^2}\overline{\psi}^2}{\overline{r}_u}  \left[ f \partial_u \left(\frac{\overline{\psi}}{f}\right) \right]^2+ 4\pi g \left(\overline{r}-2\overline{\varpi_N}\right) \overline{\psi} \left( f \partial_u \frac{\overline{\psi}}{f} \right) \\
 +  2\pi \overline{\psi}^2  \overline{r}_u \left( g^2 \left(1-\frac{2\overline{\varpi_N}}{\overline{r}}\right) \right) -16\pi^2g^2 \frac{\overline{r}^3}{l^2} \overline{\psi}^3 \left( f \partial_u \frac{\overline{\psi}}{f} \right) - 8\pi^2 g^3 \frac{\overline{r}^2}{l^2} \overline{r}_u \overline{\psi}^4
 \end{split}
\end{equation}
where $\overline{r}:=\overline{\tilde{r}}^{-1}$, $\overline{r}_u=\frac{\overline{\tilde{r}}_u}{\overline{\tilde{r}}^2}$ corresponds to the original geometric area radius function,
subject to the boundary condition
\begin{align} \label{bco}
\lim_{u \rightarrow u_0} \overline{\varpi_N} = M_N \, .
\end{align}
The function $\overline{\tilde{r}_v}$ is obtained as the unique solution $\overline{\tilde{r}_v} \in \overline{\varpi_N} \in C^{1} \left(\mathcal{N}\right)$ of the ODE
\begin{align} \label{rvinitial}
\left(\overline{\tilde{r}_v}\right)_u = \frac{-4\overline{r}^2\overline{\tilde{r}}_u \overline{\tilde{r}_v}}{1-\frac{2\varpi_N}{\overline{r}} + \frac{\overline{r}^2}{l^2} -4\pi g \frac{\overline{r}^2}{l^2}\overline{\psi}^2} \left(\frac{3 \overline{\varpi_N}}{2 \overline{r}^2} -\frac{1}{2\overline{r}} + \frac{2 \pi  \overline{r} \overline{\psi}^2}{l^2} \left(-a+\frac{3}{2}g\right)\right)
\end{align}
with boundary condition
\begin{align}
\lim_{u \rightarrow u_0} \overline{\tilde{r}_v} = -\frac{1}{2} \, .
\end{align}

\begin{Remark}
The choice of $\overline{\tilde{r}}$ fixes the scale of the $u$-coordinate along $\mathcal{N}$ corresponding to the gauge-freedom in the problem. A simple and convenient choice is $\overline{\tilde{r}}=\frac{1}{2}\left(u-u_0\right)$. The function $\overline{\psi}$ is the free data in the problem and can be specified arbitrarily modulo the integrability conditions of Definition \ref{def:id}.

The choice of boundary condition for $\overline{\tilde{r}_v}$ ensures that initially $T\overline{\tilde{r}}=0$ corresponding to the fact that we would like to have $T\tilde{r}=0$ along the boundary $u=v$ in the evolution. It is also a convenient gauge freedom.

The choice of boundary condition for $\overline{\varpi_N}$ is again ``free". However, we could also specify an initial value at $u_1$ and integrate outwards, determining $\overline{\varpi_N}$ as $u \rightarrow u_0$, which may be the case in applications where $u_1$ corresponds to the axis on which a regularity condition $\varpi_N=0$ has to be imposed.
\end{Remark}

The following Lemma is useful and a direct consequence of Definition \ref{def:id}.
\begin{Lemma} \label{lem:databounds}
For any $0<s<1$, given $\delta^\prime>0$ we can choose $\delta>0$ such that the following bounds hold on the truncated initial data ray $\mathcal{N}_\delta := \mathcal{N} \cap \{u \leq u +\delta\}$:
\begin{align} \label{sr1}
\| \overline{\tilde{r}} \|_{C^0} + \| \overline{\tilde{r}}_u - \frac{1}{2} \|_{C^0}  + \| \overline{\tilde{r}_v} + \frac{1}{2} \|_{C^0} + \| \overline{\tilde{r}}_{uu} \|_{C^0} < \delta^\prime
\end{align}
\begin{align} \label{sr2}
\int_{u_0}^{u_0+\delta} \left[ \left(\overline{\tilde{r}}^{-1} \cdot \overline{f} \partial_u \left( \frac{\overline{\psi}}{\overline{f}}\right)\right)^2  + \overline{\tilde{r}}^{-2} \overline{\psi}^2\right] du    < \delta^\prime    \ \ \ \textrm{and} \ \ \ \Big| \overline{\tilde{r}}^{-\frac{1}{2}+\frac{s}{4}} \cdot \overline{f} \partial_u \left( \frac{\overline{\psi}}{\overline{f}}\right)\Big|<\delta^\prime
\end{align}
\begin{align} \label{sr3}
\| \overline{\varpi_N}-M_N \|_{C^0} < \delta^\prime   \ \ \ \textrm{and} \ \ \  \| \overline{\tilde{r}}^{1+s} \partial_u\overline{\varpi_N} \|_{C^0} < \delta^\prime
\end{align}
\begin{align} \label{sr4}
 \| \bar{\psi} \bar{\rho}^{-\frac{3}{2}+\kappa} - \Psi \|_{C^0} < \delta^\prime
\end{align}
where $\| \cdot \|_{C^0}$ denotes the $sup$-norm in $\mathcal{N}_\delta$.
\end{Lemma}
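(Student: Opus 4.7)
\medskip
\noindent\textbf{Proof plan.} The strategy is to verify that in each of the bounds \eq{sr1}--\eq{sr4}, the quantity being controlled either extends continuously to the endpoint $u = u_0$ with the stated limiting value (so that $C^0$-smallness on $\mathcal{N}_\delta$ follows by uniform continuity), or is pointwise dominated by an $L^1$-function on $\mathcal{N}$ (so that smallness follows by absolute continuity of the Lebesgue integral). The key asymptotic inputs are the limits from Definition \ref{def:id}, namely $\overline{\tilde{r}} \to 0$, $\overline{\tilde{r}}_u \to \tfrac{1}{2}$, $\overline{\tilde{r}}_{uu} \to 0$ and $\overline{\psi}\,\overline{\tilde{r}}^{-3/2+\kappa} \to \Psi$ as $u \to u_0$, together with the boundary conditions $\overline{\varpi_N}(u_0) = M_N$ and $\overline{\tilde{r}_v}(u_0) = -\tfrac{1}{2}$ fixing the ODEs \eq{varpied} and \eq{rvinitial}. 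These imply $\overline{\tilde{r}}(u) = \tfrac{1}{2}(u-u_0) + o(u-u_0)$ and hence $\overline{r}(u) \sim 2/(u-u_0)$, $|\overline{r}_u| \sim 2\overline{r}^2$, which will be used repeatedly.

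The bounds on $\overline{\tilde{r}}$, $\overline{\tilde{r}}_u - \tfrac{1}{2}$, $\overline{\tilde{r}}_{uu}$ in \eq{sr1}, the bound $\|\overline{\varpi_N}-M_N\|_{C^0}$ in \eq{sr3}, and the bound \eq{sr4} follow directly by uniform continuity of the respective continuous extensions on $[u_0, u_0+\delta]$; for \eq{sr4} one factorises
\[
\overline{\psi}\,\overline{\rho}^{-\frac{3}{2}+\kappa} \;=\; \bigl(\overline{\psi}\,\overline{\tilde{r}}^{-\frac{3}{2}+\kappa}\bigr) \cdot \bigl(\overline{\tilde{r}}/\overline{\rho}\bigr)^{\frac{3}{2}-\kappa},
\]
with $\overline{\rho} = (u-u_0)/2$ on the initial ray, so that both factors have finite limits ($\Psi$ and $1$, respectively). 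The bound on $\overline{\tilde{r}_v} + \tfrac{1}{2}$ requires a short inspection of \eq{rvinitial}: the denominator of the first parenthesised factor behaves like $\overline{r}^2/l^2$ while its numerator is $O(\overline{r}^2)$, so that factor is bounded; the second parenthesised factor is $o(1)$ as $u \to u_0$ under $\kappa < 1$ (the $\overline{r}\,\overline{\psi}^2$ contribution being $O(\overline{r}^{-2+2\kappa})$). Hence $(\overline{\tilde{r}_v})_u \to 0$ as $u \to u_0$, so $\overline{\tilde{r}_v}$ extends continuously to $-\tfrac{1}{2}$ at $u_0$ and uniform continuity supplies the smallness.

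For \eq{sr2}, $\overline{\tilde{r}}^{-1} \leq C(u-u_0)^{-1}$ on $\mathcal{N}_\delta$ shows that the integrand in the first bound is dominated by a constant multiple of the integrand in \eq{asp1}, which is $L^1$; absolute continuity of the Lebesgue integral gives the smallness as $\delta \to 0$. For the pointwise bound, we factorise
\[
\overline{\tilde{r}}^{-\frac{1}{2}+\frac{s}{4}}\,\overline{f}\,\partial_u\!\left(\tfrac{\overline{\psi}}{\overline{f}}\right) \;=\; \overline{\tilde{r}}^{\frac{s}{4}} \cdot \overline{r}^{\frac{1}{2}}\,\overline{f}\,\partial_u\!\left(\tfrac{\overline{\psi}}{\overline{f}}\right),
\]
bounding the second factor by $C_{data}$ via \eq{asp2} while the first factor tends to $0$ uniformly on $\mathcal{N}_\delta$ as $\delta \to 0$.

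The technically most delicate estimate, and the main obstacle, is the weighted sup bound on $\partial_u \overline{\varpi_N}$ in \eq{sr3}. Substituting the pointwise decay $|\overline{\psi}| \leq C\overline{r}^{-3/2+\kappa}$ and $|\overline{f}\,\partial_u(\overline{\psi}/\overline{f})| \leq C\overline{r}^{-1/2}$ from \eq{asp2}, together with $|\overline{r}_u| \sim 2\overline{r}^2$, into the five terms on the right-hand side of \eq{varpied}, the dominant contribution comes from the first term: its explicit $\overline{r}^2$ combines with the $\overline{r}^2/l^2$-behaviour of $1-\mu$ and is divided by $\overline{r}_u \sim 2\overline{r}^2$, leaving a net pointwise bound $O(\overline{r})$. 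The remaining four terms are bounded respectively by $O(\overline{r}^{\kappa-1})$, $O(\overline{r}^{-1+2\kappa})$, $O(\overline{r}^{-2+3\kappa})$, $O(\overline{r}^{-2+4\kappa})$, each of order at most $\overline{r}$ since $\kappa < 1$. Multiplying by $\overline{\tilde{r}}^{1+s} = \overline{r}^{-1-s}$ yields $|\overline{\tilde{r}}^{1+s} (\overline{\varpi_N})_u| \leq C \overline{r}^{-s} \to 0$ as $u \to u_0$, and the required smallness on $\mathcal{N}_\delta$ follows by taking $\delta$ sufficiently small. The obstacle here is the near-sharpness of the estimate: the right-hand side of \eq{varpied} genuinely diverges pointwise like $\overline{r}$, so the weight $\overline{\tilde{r}}^{1+s}$ is tight and the argument hinges on the cancellation between the $\overline{r}^2$-factors and $\overline{r}_u \sim 2\overline{r}^2$ in the first term.
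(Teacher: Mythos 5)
Your proposal is correct and follows essentially the same strategy as the paper's proof: localize \eq{asp1} for the integral bound in \eq{sr2}, use the pointwise bound \eq{asp2} with the spare factor $\overline{\tilde{r}}^{s/4}$ for the second bound of \eq{sr2}, and establish the weighted bound on $\partial_u\overline{\varpi_N}$ by pointwise weight-counting in each term of \eq{varpied} after multiplying by $\overline{\tilde{r}}^{1+s}$, with the crucial cancellation $\overline{r}^2/\overline{r}_u\sim\mathrm{const}$ in the leading term. The only places you deviate are the first bound of \eq{sr3} and \eq{sr4}: you invoke the limits $\lim_{u\to u_0}\overline{\varpi_N}=M_N$ and $\lim_{u\to u_0}\overline{\psi}\,\overline{\tilde r}^{-3/2+\kappa}=\Psi$ (together with $\overline{\tilde r}/\overline{\rho}\to 1$) and conclude by continuity, whereas the paper re-derives these quantitatively -- integrating \eq{varpied} from $u_0$ using \eq{sr2} for the mass, and using a fundamental-theorem-of-calculus/Cauchy--Schwarz estimate giving the explicit rate $\|\overline{\psi}\|_{\underline{H}^1}\delta^{\kappa}$ for \eq{sr4}. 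Your soft argument is legitimate given that the data construction posits these limits, but the paper's route is more self-contained (the integration of \eq{varpied} is what actually justifies that the singular ODE solution attains the value $M_N$) and yields rates that are reused later. One small inaccuracy: your justification that the fifth term of \eq{varpied}, of size $O(\overline{r}^{-2+4\kappa})$, is ``of order at most $\overline{r}$ since $\kappa<1$'' actually requires $\kappa\le 3/4$ (and for the weighted bound one needs $-2+4\kappa<1+s$); this is harmless here because the standing assumption \eq{massrestrict}, $\kappa<2/3$, is in force, and indeed this very term is the reason the paper needs the extra renormalization of Section \ref{sec:kap} for larger $\kappa$ -- but the stated reason should be corrected.
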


\begin{Remark}
The appearance of $s$ is merely technical (to guarantee an additional smallness factor). The weights could be improved in the context of higher regularity. In particular, one expects to be able to propagate sharper decay for $\overline{f} \partial_u \left(\frac{\overline{\psi}}{\overline{f}}\right)$ if higher ($C^2$ regularity of $\psi$) is imposed.
\end{Remark}

\begin{proof}
The bound (\ref{sr1}) follows from the fact that $\overline{\tilde{r}}$ is $C^2$ and its asymptotics at $(u_0,v_0)$. The first bound of (\ref{sr2}) follows from localizing (\ref{asp1}). Using (\ref{sr2}), integrating the equation (\ref{varpied}) for $\left(\overline{\varpi_N}\right)_u$ establishes the first bound of (\ref{sr3}) after carefully checking the $\overline{\tilde{r}}$-weights in each term. The second bound of (\ref{sr2}) follows directly from (\ref{asp2}) using $C_{data} \overline{\tilde{r}}^{s/4} \leq C_{data} \delta^{s/4}<\delta^\prime$. The second bound of (\ref{sr3}) follows from estimating pointwise the right hand side of (\ref{varpied}) after multiplying it by $\overline{\tilde{r}}^{1+s}$. The bound (\ref{sr4}) follows from
\begin{align} \label{sty}
 | \overline{\psi} \bar{\rho}^{-\frac{3}{2}+\kappa} -\Psi  | \leq 0 + \int_{u_0}^{u_0+\delta} du^\prime |\partial_u \left( \overline{\psi} \bar{\rho}^{-\frac{3}{2}+\kappa} -\Psi \right)| \leq \|\overline{\psi}\|_{\underline{H}^1 \left(u_0,u_0+\delta\right)} \delta^{\kappa}  \, .
\end{align}
\end{proof}

\subsection{Boundary Conditions} \label{sec:bdycondition}

We require boundary conditions for the fields in order to produce a unique evolution. For the dynamical field $\psi$ there are a variety of boundary conditions studied in the context of the linear problem in \cite{Warnick:2012fi, Holzegel:2012wt}. We shall work with the non-linear version of inhomogeneous Robin conditions, which includes the homogeneous Neumann boundary condition as a special choice. While we do not discuss the inhomogeneous Dirichlet condition, it can be treated by precisely the same methods. We will state the boundary conditions on $\psi$ in a form that may be applied to the non-spherically symmetric case, before specialising to the case in hand.

We say that a triple $(\uprho, \upbeta, \upgamma)$ is a representative choice of boundary data if $\uprho$ is a smooth boundary defining function for $\scri$ (i.e.\ $\uprho>0$ on $\Delta \setminus \scri$, with $\uprho = 0, d\uprho \neq 0$ on $\scri$) and $\upbeta, \upgamma$ are functions along $\scri$. We will take $\upbeta, \upgamma$ to be smooth, but this is stronger than required. Given a representative choice of boundary data, we define $P$ to be the unique vector which is normal to $\scri$ with respect to the rescaled metric $\uprho^2 g$ and further satisfies $P(\uprho) = 1$. We say that $\psi$ satisfies the boundary conditions determined by $(\uprho, \upbeta, \upgamma)$ if
\ben{bcdef}
\left. \uprho^{1-2\kappa} P\left (\uprho^{\kappa-\frac{3}{2}} \psi\right ) \right|_{\scri} +  2 \left. \left (\uprho^{\kappa-\frac{3}{2}} \psi\right ) \right|_{\scri} \upbeta = \upgamma.
\een
Notice that if $\omega$ is a smooth function with $\omega>0, P(\omega)=0$ on $\scri$, then the representative choice of boundary data $(\omega \uprho, \omega^{1-2\kappa}\upbeta, \omega^{-\frac{1}{2}-\kappa}\upgamma)$ gives rise to the same boundary conditions. If $\kappa<\frac{1}{2}$ then the requirement on $P(\omega)$ may be dropped. We define a choice of boundary data $\mathcal{B} = [(\uprho, \upbeta, \upgamma)]_\sim$ to be an equivalence class of representative choices of boundary conditions under the equivalence relation
\be
(\uprho, \upbeta, \upgamma) \sim (\omega \uprho, \omega^{1-2\kappa}\upbeta, \omega^{-\frac{1}{2}-\kappa}\upgamma), \qquad \omega \in C^\infty; \ \omega>0\textrm{ and } P(\omega)=0 \textrm{ on } \scri
\ee
Here we understand that for $\kappa<\frac{1}{2}$ the condition $P(\omega)$ can be dropped. Note that the homogeneous Neumann boundary conditions $\upbeta=\upgamma=0$ are invariant under the similarity transformation, so for these boundary conditions the choice of boundary defining function is immaterial.

In this paper, we will work with boundary conditions of the form $ [(\rho, \beta, \gamma)]_\sim$, where $\rho = \frac{1}{2}(u-v)$ was previously introduced. For $\kappa\leq \frac{1}{2}$ this represents no restriction, while for $\kappa > \frac{1}{2}$ there exist choices of boundary data which do not belong to this set.  

For technical reasons, it will turn out to be very convenient to work with the boundary conditions in the form
\begin{align} \label{bcpsi}
\rho^{-\frac{1}{2}-\kappa} \left(\tilde{\partial}_u - \tilde{\partial}_v \right) \psi + 2\beta\left(t\right) \rho^{-\frac{3}{2} + \kappa} \psi = \gamma\left(t\right) \ \ \  \textrm{on $\scri$}
\end{align}
where $\tilde{\partial}$ is the derivative twisted with respect to $r$. These two conditions can be seen to be equivalent provided that
\be
\rho^{1-2\kappa}\left (\rho^{\kappa-\frac{3}{2}} \psi\right ) \left[ \frac{\rt_u - \rt_v }{\rt} -\frac{1}{\rho}\right] \to 0
\ee
as $\scri$ is approached. The term in square brackets can be shown to be bounded for solutions at the $H^1$ level of regularity, which gives equivalence of \eq{bcdef}, \eq{bcpsi} for $\kappa<\frac{1}{2}$. At the $H^2$ level, the term in square brackets has improved asymptotics of $\O{\rho^{\min(1, 2-2\kappa)}}$, which shows equivalence for $\kappa<\frac{3}{4}$. The reason we need to improve regularity to show equivalence seems to be that for $\kappa \geq \frac{1}{2}$ one requires cancellations coming from the next to leading order terms in the expansion of $\psi$ near infinity. At the $H^1$ level one has no control over these in general, but certain combinations (such as $\psi \rt^{-\frac{3}{2}+\kappa}$) exhibit better behaviour than one may expect. At the $H^2$ level of regularity another term in the expansion is available with which one can see cancellations explicitly. 
\begin{Remark}
As is well-known, the boundary condition (\ref{bcpsi}) does not make sense classically if $\psi$ is only in $H^1$. See the paper \cite{Warnick:2012fi} for the appropriate weak formulations.
\end{Remark}

One might wonder why we introduce a boundary defining function $\uprho$, rather than stating the boundary conditions in terms of the geometric quantity $\rt$ which furnishes a convenient, canonical,  boundary defining function. We could take as boundary conditions:
\ben{geombc}
\left. \rt^{1-2\kappa}P \left (\rt^{\kappa-\frac{3}{2}} \psi\right ) \right|_{\scri} +  2 \left. \left (\rt^{\kappa-\frac{3}{2}} \psi\right ) \right|_{\scri} \beta = \gamma.
\een
with $P(\rt)=1$. This is not included in our choice of boundary data allowed above, since it assumes knowledge of $\rt$ which we do not have until we have found the solution. Our reasons for not considering these boundary conditions are twofold. Firstly, the existence of $\rt$ is a feature of the spherical symmetry. With the non-spherically symmetric problem in mind it is clear that the boundary conditions may only be stated once one has made a choice of $\uprho$. The second reason is a technical one: namely that for $\kappa \geq \frac{1}{2}$ we cannot, unless $\beta=0$, close the contraction map argument with these boundary conditions at the $H^1$-level. However, we believe that the problem with boundary conditions \eq{geombc} could also be solved directly for $\kappa \geq \frac{1}{2}$ by closing the contraction map at the $H^2$-level.

We shall also require some boundary conditions for the metric. In spherical symmetry this reduces to a condition on $\rt$. To produce  aAdS spacetimes we impose 
\begin{equation} \label{bcr}
\tilde{r}|_{\scri} = 0 \, .
\end{equation}

A consequence of our choice of boundary conditions is that the renormalised Hawking mass at infinity, which we may think of as a measure of the energy in the spacetime, need not be constant. To state the properties of the renormalised Hawking mass at the boundary cleanly, it is convenient to introduce two vector fields which are invariant under changes of the $u, v$ coordinates preserving the form of the metric. These are $\mathcal{T} = \Omega^{-2}(r_u \partial_v - r_v \partial_u)$ and $\mathcal{R} = -\Omega^{-2}(r_u \partial_v + r_v \partial_u)$. Examining the fall-off of the terms in the $\varpi_N$ evolution equations, we find that if \eq{e3}, \eq{e4} are satisfied then
\be
\left. \mathcal{T} \varpi_N \right |_{\scri} = \lim_{\rho \to 0}  8 \pi r^2 (\tilde{\mathcal{T}} \psi) (\tilde{\mathcal{R}} \psi)
\ee
Where $\tilde{\mathcal{T} }\psi :=\mathcal{T}^\mu \tilde{\nabla}_\mu \psi$, and similarly for the other derivative. The right hand side has a finite limit in $L^1(\scri)$ provided that $\psi$ is at least $\H^2$, from the results of \cite{Warnick:2012fi} (see \S\ref{sec:improreg}). Notice that for homogeneous Dirichlet conditions (corresponding to $\tilde{\mathcal{T}} \psi=0$) or homogeneous Neumann ($\tilde{\mathcal{R}} \psi=0$), the renormalised Hawking mass is conserved. Otherwise we find that the time derivative is proportional to the energy flux of the field $\psi$ across $\scri$, as one might expect.

\section{The Main Theorem} \label{sec:maintheo}

We are now ready to state the main theorem.
\begin{Theorem} \label{theo:main}
Fix $0<\kappa<2/3$ and
let $\left(\overline{\tilde{r}} ,\overline{\psi}\right)$ be a free data set on the interval $\mathcal{N}=\left(u_0,u_1\right]$ as defined in Definition \ref{def:id}. Fix also a choice of boundary condition of the form (\ref{bcpsi}), where $\beta$ and $\gamma$ are smooth along $\scri$. Then there exists a $\delta>0$ such that the following holds. There exists a unique weak solution $\left(\tilde{r},\varpi_N, \psi\right)$ of the renormalised \EKG equations (cf.~Definition \ref{def:wsEKG}) in the triangle $\Delta_{\delta,u_0}$ such that 
\begin{itemize}
\item $\tilde{r}$ satisfies (\ref{e1}) with boundary condition (\ref{bcr}) 
\item $\psi$ satisfies (\ref{e2}) with boundary condition (\ref{bcpsi}) in a weak sense 
\item The functions $\psi$ and $\tilde{r}$ agree as $C^1$ functions with $\overline{\psi}$ and $\overline{\tilde{r}}$ respectively when restricted to $u=u_0$.
\end{itemize}
\end{Theorem}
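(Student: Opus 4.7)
The approach follows the five-ingredient scheme outlined in the introduction: I would construct the solution as the fixed point of a contraction map $T$ acting on triples $(\tilde{r},\varpi_N,\psi)$ in a suitable weighted Banach space. First I would define the space $\mathcal{X}_\delta$ of triples on $\Delta_{\delta,u_0}$ incorporating (i) pointwise weighted $C^0/C^1$ bounds on $\tilde{r}$ encoding $\tilde r_u \approx \tfrac12,\ \tilde r_v\approx -\tfrac12$ and appropriate control of $\tilde r_{uu}$, $\tilde r_{uv}$, (ii) pointwise weighted bounds on $\varpi_N$ near $\scri$ (using the $\tilde r^{1+s}$-type weights visible in Lemma~\ref{lem:databounds}), and (iii) an $\underline{H}^1$-type norm for $\psi$ built from the $\rho$-twisted derivative $\hat\partial_u\psi$ of \eqref{htd2}, together with weighted pointwise control on $\rho^{-3/2+\kappa}\psi$. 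A small closed ball $\mathcal{B}_{\delta^\prime}\subset \mathcal{X}_\delta$ centered on the obvious ``data extension'' would be chosen using Lemma~\ref{lem:databounds}, taking $\delta$ small to make the initial-data neighbourhood as tight as needed.

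Given a triple $(\tilde r^{(n)},\varpi_N^{(n)},\psi^{(n)})$ in the ball, the map $T$ would produce the next iterate as follows. Define auxiliary $r,\Omega^2,1-\mu$ from \eqref{defaux} using the current triple; then (a) solve the inhomogeneous linear wave equation \eqref{e2} on this fixed background, with potential \eqref{e2b}, boundary condition \eqref{bcpsi} and the prescribed initial data for $\psi$, invoking the $H^1$ well-posedness theorem of \cite{Warnick:2012fi} to obtain $\psi^{(n+1)}$; (b) integrate the transport equation \eqref{e4} for $\varpi_N^{(n+1)}$ in $u$ from the initial data ray $\mathcal{N}_\delta$, using the already-computed $\psi^{(n+1)}$ and $\tilde r^{(n)}$; (c) integrate the second-order equation \eqref{e1} for $\tilde r^{(n+1)}$ with boundary condition $\tilde r|_\scri = 0$ and the prescribed initial data. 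Each step is well-defined at the working regularity thanks to the renormalization: the dangerous top-order right-hand sides of \eqref{hawkueqn}--\eqref{hawkveqn} have been cancelled in \eqref{e3}--\eqref{e4}, so $\varpi_N$ remains bounded, and the wave equation \eqref{e1} has only mild $\Omega^2\psi^2/r^2$-type sources whose integrability is guaranteed by the constraint $0<\kappa<2/3$ (Section~\ref{sec:restrict}).

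The bulk of the work is proving that $T$ is a contraction on $\mathcal{B}_{\delta^\prime}$ for $\delta$ sufficiently small. For the $\tilde r$- and $\varpi_N$-components this amounts to pointwise integration estimates along characteristics / initial-data rays, with the small factor $\delta$ arising from the length of the $u$-integration interval. For the $\psi$-component one must apply the linear energy estimate of \cite{Warnick:2012fi} to the difference of two iterates, then bound the nonlinear error terms arising from (i) the differing background metrics entering \eqref{e2}, (ii) the differing potentials $V$ in \eqref{e2b}, and (iii) the mismatch between $\tilde r$-twisted and $\rho$-twisted derivatives appearing in the equation versus the norm. The comparison between $\tilde\partial$ and $\hat\partial$ will be handled by the auxiliary Lemma~\ref{lem:twisteq} advertised in the Notation section, while the nonlinear error terms are estimated by the auxiliary Lemmas of Section~\ref{sec:auxlemmas}, each producing a factor of $\delta^{\,\epsilon}$ (for some $\epsilon>0$ depending on $\kappa$) that makes the map contracting once $\delta$ is small.

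By the Banach fixed point theorem this yields a unique triple solving the three evolution equations \eqref{e1}, \eqref{e2}, \eqref{e4}; the remaining equation \eqref{e3} is then recovered on the full triangle from its validity on the initial ray via the constraint propagation argument of Section~\ref{sec:constraintprop}, so that the triple is a weak solution to the renormalised \EKG system in the sense of Definition~\ref{def:wsEKG}. Uniqueness in the class of solutions lying in $\mathcal{B}_{\delta^\prime}$ is immediate from the contraction property. The main obstacle I anticipate is item (iii) above: controlling at the $H^1$ level the mismatch between the geometric twisting by the dynamical variable $\tilde r$ (which governs the boundary condition \eqref{bcpsi} and the natural energy identity for \eqref{e2}) and the fixed twisting by $\rho$ (in which the contraction-space norm is phrased), especially when $\kappa$ approaches $2/3$ so that the potential $V$ in \eqref{e2b} decays only like $\psi^2/l^2$ and the borderline terms in \eqref{e4} (the $\psi^4$-term) become critical.
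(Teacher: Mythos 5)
Your overall scheme -- weighted spaces for $(\tilde r,\varpi_N,\psi)$, a ball around the data extension, solving the linear twisted wave equation on the frozen background, a first--order transport step for the Hawking mass, $\delta^\epsilon$--smallness for the contraction, Banach fixed point, and a posteriori constraint propagation -- is the paper's proof. There is, however, one step that fails as written: your treatment of the Hawking mass. The initial data ray is $\mathcal{N}=\{v=u_0\}$, a level set of $v$, so from the data one can only integrate $\varpi_N$ in the $v$--direction; this is what the map \eqref{conm} does, using the $\partial_v$--equation \eqref{e3} (with $\widehat\psi$ inserted), integrated from the data toward $\scri$. Your step (b), ``integrate the transport equation \eqref{e4} for $\varpi_N$ in $u$ from the initial data ray,'' cannot be carried out: at fixed $v>u_0$ there is no hypersurface on which $\varpi_N$ is known from which to start a $u$--integration ($\varpi_N|_{\scri}$ is not prescribed -- indeed for Robin conditions it is non-constant and only determined by the flux a posteriori), and moving in $u$ from a point of $\mathcal{N}$ keeps you on $\mathcal{N}$. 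Correspondingly, your closing claim that \eqref{e3} is ``recovered from its validity on the initial ray'' by the argument of Section \ref{sec:constraintprop} has the roles of the two equations reversed: \eqref{e3} is a $\partial_v$--equation, so its restriction to $v=u_0$ is not a condition on the data at all, whereas \eqref{e4} restricted to the ray is exactly the ODE \eqref{varpied} used to construct $\overline{\varpi_N}$, and it is \eqref{e4} that Lemma \ref{varpiureg} and Corollary \ref{constcor} propagate in $v$. Swapping the two equations (evolve $\varpi_N$ with \eqref{e3} in $v$, treat \eqref{e4} as the constraint) restores the correct logic.

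A secondary gap: you invoke the $H^1$ well-posedness theorem of \cite{Warnick:2012fi} directly for the wave equation on the background defined by a ball element. Those backgrounds are only weakly asymptotically AdS and have too little regularity for that theorem to apply as stated; the paper needs Proposition \ref{wpprop}, proved by rescaling $\Omega^2$ out of the principal part and approximating $\tilde r$ by smooth radial functions, combined with the difference estimate of the contraction argument to pass to the limit -- and it observes that metrics satisfying the hypotheses of \cite{Warnick:2012fi} are \emph{not} dense in $\mathcal{B}_b$, so some such detour is unavoidable. Your identification of the twisting mismatch ($\tilde\partial$ versus $\hat\partial$, handled by Lemma \ref{lem:twisteq}) and of the borderline $\kappa<\tfrac23$ terms as the main analytic difficulties is accurate and consistent with the paper.
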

\begin{proof}
The results that prove this theorem make up \S\ref{proofmaintheorem}. The solution is constructed by a fixed point argument for a map $\Phi$, constructed in \S \ref{contractionsec}. Propositions \ref{prop:maptoball}, \ref{prop:conmap}, assert that $\Phi$ is a contraction map and Corollary \ref{maincor} then asserts the existence of a unique weak solution  to \eq{e3},  \eq{e2}, \eq{e1} with given intial-boundary data. Finally, Corollary \ref{constcor} asserts that for such a solution, the constraint equation \eq{e4} propagates in the evolution.
\end{proof}

\begin{Remark}
The restriction on $\kappa$ is technical and could be improved to the full range $0<\kappa<1$ with an additional renormalization of the system of equations. See  \S \ref{sec:restrict} and  \S \ref{sec:kap}. The theorem may be extended to consider nonlinear potentials for the Klein-Gordon equation, as well as nonlinear boundary conditions. These possibilities are discussed in \S \ref{sec:nonlinear} and \S \ref{sec:nlbc}.
\end{Remark}

Given a weak solution we can improve the regularity and in particular obtain a classical solution:
\begin{Theorem}\label{highregthm}
Suppose the initial data of Theorem \ref{theo:main} satisfy the additional regularity conditions of Section \ref{sec:improreg} then the weak solution is actually a classical solution.
\end{Theorem}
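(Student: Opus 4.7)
The plan is a propagation-of-regularity argument that bootstraps from the weak solution produced by Theorem \ref{theo:main}. With the weak solution in hand, the equations \eq{e1}, \eq{e2}, \eq{e3}, \eq{e4} can be differentiated and re-interpreted as \emph{linear} wave and transport equations for the commuted fields, whose coefficients are the (already-known) components of the weak solution. One can then appeal directly to the $H^1$-well-posedness statement for the linear wave equation with rough coefficients (the Corollary of Section \ref{sec:well posed}) to gain one additional order of Sobolev regularity at a time, iterating until classical regularity is reached.

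\textbf{Main steps.} First, commute the twisted wave equation \eq{e2} with the tangential vector field $\partial_t = \frac{1}{2}(\partial_u+\partial_v)$, which preserves the boundary $\scri$. The commutator produces a wave equation of exactly the same structural form for $\partial_t \psi$, with source terms that are products of the weak solution and its first derivatives and are therefore controlled by the $H^1$ bounds already established; the assumed additional regularity of the initial data feeds in compatible Cauchy data, and the Robin boundary condition \eq{bcpsi}, being tangential to $\scri$, commutes with $\partial_t$ up to source terms involving the smooth boundary functions $\beta, \gamma$ and their $t$-derivatives. Applying the Corollary of Section \ref{sec:well posed} thus yields $\partial_t \psi \in H^1$. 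Combining this with \eq{e2} solved algebraically for the transversal twisted derivative promotes $\psi$ to $H^2$.

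\textbf{Upgrading the remaining variables.} With $\psi$ upgraded, the transport equations \eq{e3}, \eq{e4} for $\varpi_N$ gain a derivative of regularity on their right-hand sides, and integration along the $u$- and $v$-directions promotes $\varpi_N$ pointwise to $C^1$ (and to higher regularity when iterated). The equation \eq{e1} for $\tilde{r}$, read now as a semilinear wave equation with $C^1$ coefficients, then delivers $\tilde{r} \in C^2$. Iterating this commutation-plus-inversion scheme, using the higher-order compatibility assumed on the initial data, promotes the weak solution to a classical one, at which point the Remark after Definition \ref{def:wsEKG} lets us conclude that \eq{EKG1}--\eq{EKG5}, and hence \eq{EKG}, hold in the classical sense.

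\textbf{Main obstacle.} The principal difficulty is the careful tracking of weights and boundary behaviour near $\scri$ during commutation. The twisting function $f=r^{-3/2+\kappa}$ depends on the \emph{dynamical} area radius, so $\partial_t$ does not commute cleanly with $\tilde{\partial}$; one must absorb commutators of the form $[\partial_t, \tilde{\partial}]\psi$ whose coefficients involve $\partial_t \tilde{r}$ and decay only slowly toward infinity for large $\kappa$. A further subtlety, noted in Section \ref{sec:bdycondition}, is that the equivalence between the geometric condition \eq{geombc} and \eq{bcpsi} needs $H^2$ regularity once $\tfrac{1}{2}\le \kappa$; any commutator estimate that loses even a single power of $\rho$ at this stage would be fatal, so one must again exploit the cancellations built into the renormalisation, of the same kind that already forced the restriction \eq{massrestrict} at the $H^1$ level.
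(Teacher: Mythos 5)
There is a genuine gap, and it is one of circularity. Your plan is to commute the equations satisfied by the already-constructed weak solution with $T=\partial_u+\partial_v$ and then invoke the linear well-posedness result of Section \ref{sec:well posed} for the commuted field, claiming that the resulting source terms ``are products of the weak solution and its first derivatives and are therefore controlled by the $H^1$ bounds already established.'' That claim is not correct: commuting \eq{e2} with $T$ produces a source of the schematic form
\begin{equation*}
F = \left(\kappa-\tfrac12\right)\left[ T\!\left(\tfrac{\rt_u}{\rt}\right) fr\,\partial_v\!\left(\tfrac{\psi}{f}\right)
+ T\!\left(\tfrac{\rt_v}{\rt}\right) fr\,\partial_u\!\left(\tfrac{\psi}{f}\right)\right]
- T\!\left[\tfrac{\Omega^2 V \psi}{f}\right] fr ,
\end{equation*}
which involves $T\varpi_N$, $T(\Omega^2)$ and $T\rt_v$ (hence $\rt_{vv}$). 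None of these quantities is controlled for a generic element of $\mathcal{B}_b$: the distance $d$ only controls $\rt_{uu}$, $\rt_{uv}$, $T\rt/\rho$ and $\partial_u\varpi_N$, not $\rt_{vv}$ or $T\varpi_N$, and the decay estimates for $T(V)$, $T(\Omega^2/r^2)$, $T\varpi$ needed to verify the admissibility condition $\int_\Delta F^2\rho^{-6}<\infty$ of Proposition \ref{wpprop} are exactly the content of Lemma \ref{Tcommest}, which presupposes membership in the commuted ball $\mathcal{B}^1_b$. Indeed the paper notes explicitly at the end of Section \ref{sec:well posed} that the linear result for $T\hat\psi$ is available only \emph{provided} $(\rt,\varpi_N,\psi)\in\mathcal{B}^1_b$. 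So your bootstrap assumes, for the background coefficients, precisely the extra regularity you are trying to prove for the solution.

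The paper breaks this circle differently: rather than commuting the equations of the weak solution a posteriori, it commutes the \emph{contraction map} $\Phi$ of Section \ref{contractionsec}, showing (Proposition \ref{propreg}) that $\Phi$ preserves the subset $\mathcal{B}^1_b\subset\mathcal{B}_b$ whose elements carry the required $T$-derivative bounds by definition, with the $H^2$-class initial data of Section \ref{sec:improreg} supplying the commuted data $\overline{T\rt}$, $\overline{\tilde T\psi}$, $\overline{T\varpi_N}$ (and $\overline{\rt_{vv}}$) along the initial ray. The fixed point then lies in $\mathcal{B}^1_b$ (Corollary \ref{highregcor}), which yields Theorem \ref{highregthm}. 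Your strategy could in principle be repaired along these lines — e.g.\ by constructing a more regular solution and identifying it with the weak one by uniqueness — but that amounts to re-deriving the commuted contraction-map argument; as written, the step ``apply the Corollary of Section \ref{sec:well posed} to $\partial_t\psi$'' cannot be justified with only the $\mathcal{B}_b$ bounds in hand, and the subsequent upgrading of $\varpi_N$ and $\rt$ inherits the same problem.
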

\begin{proof}
This follows immediately from Corollary \ref{highregcor}, established in \S\ref{sec:improreg}.
\end{proof}

\noindent \textbf{Geometric Uniqueness.} A priori, Theorem \ref{theo:main} only provides a uniqueness statement in the double-null coordinates in which the theorem is proven. For homogeneous Neumann boundary conditions, one can define the notion of a maximum development and obtain also a \emph{geometric} uniqueness statement within spherical symmetry. This argument follows precisely \S8.1 of \cite{Holzegel:2011qk}.

For Robin conditions, as well as inhomogeneous conditions it appears that a geometric uniqueness result of this kind does not hold. The reason is that one requires a choice of boundary defining function $\uprho$ in order to state such boundary conditions, and a choice of $\uprho$ necessarily makes reference to the spacetime manifold itself (rather than being intrinsic to the embedded surface $\scri$). In this circumstance, we may say that for a given spacetime manifold with initial data $(\overline{\rt}, \overline{\psi})$ and boundary data $[(\uprho, \upbeta, \upgamma)]_\sim$ specified, the fields $g, \psi$ are uniquely determined in the domain of dependence of the data. This is weaker than the geometric uniqueness statement for homogeneous Dirichlet or Neumann boundary conditions, which may crudely be thought of as asserting the uniqueness of the spacetime manifold itself, given the initial data.

\section{Proof of Theorem \ref{theo:main}}\label{proofmaintheorem}
\subsection{The function spaces} \label{funspacessec}

We set up the appropriate function spaces for the dynamical variables.
We denote by $C^{1+}_{\tilde{r}} \left(\Delta_{\delta,u_0}\right)$ the space of positive functions $\tilde{r}$ on $\Delta_{\delta,u_0}$ that are $C^1$ in $\Delta_{\delta,u_0}$, agree with $\overline{\tilde{r}}$ on $\mathcal{N}$ and are such that both the $uv$- and the $uu$-derivative exist and are continuous. We employ that space with the distance:
\begin{align}
d_{\tilde{r}} \left(\tilde{r}_1,\tilde{r}_2\right) = \|\log \frac{\tilde{r}_1}{\tilde{r}_2} \|_{C^0} + \|\log |(\tilde{r}_1)_u| - \log |(\tilde{r}_2)_u| \|_{C^0} +  \|\log |(-\tilde{r}_1)_v| - \log |(-\tilde{r}_2)_v| \|_{C^0} \nonumber \\
+\Big\| \frac{T(\tilde{r}_1)}{\rho} - \frac{T(\tilde{r}_2)}{\rho}\Big\|_{C^0} +\|\left(\tilde{r}_1\right)_{uv} -\left(\tilde{r}_2\right)_{uv} \|_{C^0} + \|\left(\tilde{r}_1\right)_{uu} -\left(\tilde{r}_2\right)_{uu}\|_{C^0} \, .
\end{align}
Here $\|\cdot \|_{C^0}=\sup_{\Delta_{\delta,u_0}} | \cdot |$ denotes the $\sup$-norm in the triangle $\Delta_{u_0,\delta}$.
Similarly, we define $C^{0+}_{\varpi_N} \left(\Delta_{\delta,u_0}\right)$ as the space of real-valued functions $\varpi_N$ that are $C^0$ in $\Delta_{\delta,u_0}$, agree with $\overline{\varpi_N}$ on $\mathcal{N}$ and are such that the $u$-derivative exists and is continuous. We equip that space with the distance
\[
d_{\varpi} \left(\left(\varpi_N\right)_1, \left(\varpi_N\right)_2\right) = \|  \left(\varpi_N\right)_1 - \left(\varpi_N\right)_2 \|_{C^0} + \| \rho^{1+s} \partial_u \left(\varpi_N\right)_1 - \rho^{1+s}\partial_u \left(\varpi_N\right)_2 \|_{C^0} \, .
\]
 The appearance of the small number $0<s<1$ is technical and will provide an additional source of smallness in the contraction map.
Finally, $C^{0+}_\psi \underline{H}^1\left(\Delta \right)$ is the space of real-valued functions that are continuously differentiable in $u$, agree with $\overline{\psi}$ on $\mathcal{N}$ and are both continuous in $u$ with values in $\underline{H}^1\left(v\right)$ and continuous in $v$ with values in $\underline{H}^1\left(u\right)$. We equip that space with the distance
\[
d_{\psi}\left( \psi_1, \psi_2 \right) = \| \psi_1-\psi_2 \|_{C^0\underline{H}^1} +  \| \psi_1 \rho^{-\frac{3}{2}+\kappa} -\psi_2 \rho^{-\frac{3}{2}+\kappa} \|_{C^0} + \|\rho^{-\frac{1}{2}+\frac{s}{4}} \hat{\partial}_u \psi_1 - \rho^{-\frac{1}{2}+\frac{s}{4}} \hat{\partial}_u \psi_2\|_{C^0} \, ,
\]
where we recall the definition of the twisted derivative (\ref{htd2}) 
and the norm $(\rho := \frac{u-v}{2})$:
\begin{align} \label{c0h1}
\|\psi\|^2_{C^0 \underline{H}^1\left(\Delta \right)} =  \sup_{(u,v) \in \Delta} \int_v^u \left[ \rho^{-2} \left(\hat{\partial}_u \psi\right)^2 + \rho^{-2} \psi^2 \right] du^\prime \nonumber \\ +  \sup_{(u,v) \in \Delta} \int_{v_0}^v \left[ \rho^{-2} \left(\hat{\partial}_v \psi\right)^2 + \rho^{-2} \psi^2 \right] dv^\prime \, .
\end{align}

This produces the complete metric space $\mathcal{C} = C^{1+}_{\tilde{r}} \left(\Delta_{\delta,u_0}\right)\times C^{0+}_{\varpi_N} \left(\Delta_{\delta,u_0}\right) \times C^{0+}_{\psi} \underline{H}^1\left(\Delta_{\delta,u_0} \right)$ with distance
\[
d\left( \left(\tilde{r}_1,(\varpi_N)_1, \psi_1\right) , \left(\tilde{r}_2,(\varpi_N)_2, \psi_2\right)\right) = d_{\tilde{r}} \left(\tilde{r}_1,\tilde{r}_2\right) + d_{\varpi} \left(\left(\varpi_N\right)_1, \left(\varpi_N\right)_2\right) + d_{\psi} \left(\psi_1,\psi_2\right) \, .
\]
We denote by $\mathcal{B}_b$ the ball of radius $b$ centred around $\left(\frac{u-v}{2}, M_N,\Psi \rho^{\frac{3}{2}-\kappa}\right)$ where we recall $\Psi = \lim_{(u,v_0) \rightarrow (u_0,v_0)} \psi \tilde{r}^{-\frac{3}{2}}$ from Definition \ref{def:id} and $M_N$ from (\ref{bco}).

\subsection{The contraction map\label{contractionsec}}

We now define a map $\Phi : \mathcal{B}_b \ni \left(\tilde{r}, \varpi_N, \psi\right) \mapsto \left(\widehat{\tilde{r}}, \widehat{\varpi_N}, \widehat{\psi} \right)$ by
\begin{align} \label{req}
\widehat{\tilde{r}} = \bar{\tilde{r}}\left(u\right) -  \bar{\tilde{r}}\left(v\right) + \int_v^u du^\prime \int_{u_0}^v  dv^\prime \left[\frac{\Omega^2}{r^2} \left(\frac{3 \varpi_N}{2 r^2} -\frac{1}{2r} + \frac{2 \pi  r \psi^2}{l^2} \left(-a+\frac{3}{2}g\right)\right) \right] 
\end{align}
\begin{equation} \label{oldpsi}
\begin{split}
\widehat{\psi} := &\textrm{Unique $\underline{H}^1$ solution of $-\tilde{\nabla}^\dagger_\mu \tilde{\nabla}^\mu \widehat{\psi} - V\left(\psi,\varpi, r\right) \psi = 0$. } \\
& \textrm{with boundary condition }\ 
\rho^{-\frac{1}{2}-\kappa} \left(\tilde{\partial}_{u} - \tilde{\partial}_v\right) \hat{\psi} + 2\beta \left(t\right) \rho^{-\frac{3}{2}+\kappa} \hat{\psi}=\gamma\left(t\right)
\end{split}
\end{equation}
\begin{equation}
\begin{split} \label{conm}
\widehat{\varpi}_N = \overline{\varpi_N}\left(u\right) + \int_{v_0}^v dv^\prime \Big[ -8\pi r^2 \frac{{r}_u}{{\Omega}^2}  \left[ f \partial_v \left(\frac{\widehat{\psi}}{f}\right) \right]^2+ 4\pi g \left(r-2\varpi_N\right) \widehat{\psi} \left( f \partial_v \frac{\widehat{\psi}}{f} \right) \\
+  2\pi \widehat{\psi}^2  r_v \left( g^2 \left(1-\frac{2\varpi_N}{r}\right) \right) -16\pi^2g^2 \frac{r^3}{l^2} \widehat{\psi}^3 \left( f \partial_v \frac{\widehat{\psi}}{f} \right) - 8\pi^2 g^3 \frac{r^2}{l^2} r_v \widehat{\psi}^4 \Big] \left(u,v^\prime\right)
\end{split}
\end{equation}

\begin{Remark}
See Proposition \ref{wpprop} for the well-posedness of (\ref{oldpsi}).

The fact that $\widehat{\psi}$ (and not $\psi$ itself) appears on the right hand side of (\ref{conm}) is merely technical as we will show existence of a fixed point. It somehow reflects the fact the true dynamics is in the gauge function $\tilde{r}$ and the free field $\psi$. To the same effect, we could have moreover replaced $\tilde{r}$ by $\widehat{\tilde{r}}$ in (\ref{conm}) but prefer not to.
\end{Remark}

We now state the main technical results of this section, and indeed of the paper, as two propositions:
\begin{Proposition} \label{prop:maptoball}
The map $\Phi$ is well-defined and for sufficiently small $\delta$, $\Phi$ in fact maps the ball $\mathcal{B}_b$ into itself.
\end{Proposition}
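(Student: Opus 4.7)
The plan is to verify well-definedness and the ball-preservation property component by component, exploiting the positive power of $\delta$ generated by integrating on small intervals together with the a priori smallness provided by Lemma \ref{lem:databounds}. I would treat the components in the order $\widehat{\tilde{r}}$, $\widehat{\psi}$, $\widehat{\varpi_N}$, since the equation for $\widehat{\varpi_N}$ in \eqref{conm} uses the already-constructed $\widehat{\psi}$.

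For $\widehat{\tilde{r}}$, well-definedness is automatic once we observe that for $(\tilde{r},\varpi_N,\psi)\in\mathcal{B}_b$ the auxiliary variables $r,1-\mu,\Omega^{2}$ defined in \eqref{defaux} are continuous and bounded (with $1-\mu$ bounded below thanks to the dominant $r^{2}/l^{2}$ at infinity and the $b$-smallness of the $\varpi_N$ and $\psi^{2}$ corrections), so the integrand in \eqref{req} is continuous on $\Delta_{\delta,u_{0}}$. To place $\widehat{\tilde{r}}$ in the ball, I split $\widehat{\tilde{r}}-\tfrac{1}{2}(u-v)$ into the boundary contribution, estimated via Lemma \ref{lem:databounds}, plus the double integral, which is $\mathcal{O}(\delta^{2})$ pointwise and $\mathcal{O}(\delta)$ after one differentiation. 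The $uu$- and $uv$-derivative norms follow from direct pointwise estimates on the integrand, while the $T(\tilde{r})/\rho$ piece uses the fact that the $u$- and $v$-derivatives of the double integral cancel to leading order, leaving a term $\mathcal{O}(\delta)$ uniformly.

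For $\widehat{\psi}$, existence and uniqueness of the $\underline{H}^{1}$ solution with Robin-type boundary condition is supplied by Proposition \ref{wpprop}, a Warnick-type twisted energy estimate applied to the linear wave equation whose coefficients are built from the input triple. To show that $\widehat{\psi}$ lies in the ball, I write $\widehat{\psi}=\Psi\rho^{\frac{3}{2}-\kappa}+\tilde{\psi}$ and derive the equation for $\tilde{\psi}$, whose right-hand side is the inhomogeneity obtained by applying the twisted wave operator to $\Psi\rho^{\frac{3}{2}-\kappa}$ together with the potential-driven terms; the boundary condition for $\tilde{\psi}$ inherits $\gamma$ minus boundary traces of the reference profile. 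An energy estimate of Warnick type then controls $\tilde{\psi}$ in $C^{0}\underline{H}^{1}$; the $C^{0}$-bound on $\widehat{\psi}\rho^{-\frac{3}{2}+\kappa}-\Psi$ follows by one-dimensional Sobolev embedding together with a fundamental-theorem-of-calculus estimate analogous to \eqref{sty}; and the pointwise bound on $\rho^{-\frac{1}{2}+s/4}\hat{\partial}_{u}\widehat{\psi}$ is obtained by integrating the wave equation \eqref{oldpsi} in the $v$-direction from data and estimating the source terms in the $\rho^{-\frac{1}{2}+s/4}$-weighted sup-norm, using Lemma \ref{lem:twisteq} to pass between $\tilde{r}$- and $\rho$-twisted derivatives.

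For $\widehat{\varpi_N}$, with the $\widehat{\psi}$ in hand, each of the five terms in the integrand of \eqref{conm} is integrable in $v$: the first and second carry factors of $[f\partial_v(\hat\psi/f)]^2$ or $\hat\psi\cdot f\partial_v(\hat\psi/f)$ and are handled by Cauchy-Schwarz against the $\underline{H}^{1}$ norm of $\widehat\psi$, whereas the last quartic term $r^{2}r_v\hat\psi^{4}/l^{2}$, which motivated the entire renormalization, is integrable precisely in our range $\kappa<\tfrac{2}{3}$. The $C^{0}$-closeness of $\widehat{\varpi_N}$ to $M_N$ follows from Lemma \ref{lem:databounds} plus a $\delta^{\alpha}$ gain from the $v$-integration; the weighted derivative bound on $\rho^{1+s}\partial_{u}\widehat{\varpi_N}$ is obtained by noting that $\partial_{u}\widehat{\varpi_N}$ is just the integrand of \eqref{conm} and estimating it pointwise with the $\rho^{1+s}$ weight. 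The main obstacle, in my view, is the $\widehat{\psi}$ step: carefully juggling the $\tilde{r}$-twisting natural to \eqref{oldpsi} against the $\rho$-twisting in the norms defining $\mathcal{C}$, and verifying that every non-leading contribution to the error carries a strictly positive power of $\delta$ or $b$, so that for $\delta$ small the output stays inside $\mathcal{B}_b$.
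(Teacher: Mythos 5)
Your treatment of the $\widehat{\tilde{r}}$ and $\widehat{\varpi}_N$ components follows essentially the paper's route, but the $\widehat{\psi}$ step contains a genuine gap. You propose to subtract the profile $\Psi\rho^{\frac{3}{2}-\kappa}$ and estimate the remainder as a solution of an inhomogeneous problem; for this to work the generated source must lie in the class admitted by Proposition \ref{wpprop}, i.e.\ $\int_\Delta F^2\rho^{-6}\,du\,dv<\infty$ (equivalently, the source in the null form of the equation must be square integrable). However, the equation \eqref{oldpsi} is twisted with $f=\tilde{r}^{\frac{3}{2}-\kappa}$ while your reference profile is built from $\rho$, so applying the twisted operator to $\Psi\rho^{\frac{3}{2}-\kappa}$ produces terms proportional to $\frac{1}{\rho}-\frac{\tilde{r}_u-\tilde{r}_v}{\tilde{r}}$ and $\frac{1}{2\rho}-\frac{\tilde{r}_u}{\tilde{r}}$, which for a general element of $\mathcal{B}_b$ are merely \emph{bounded} (Corollary \ref{cor:sc}), with no decay towards $\scri$. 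A short computation then gives a null-form source of size $C_b\,\rho^{-\frac{1}{2}-\kappa}$, whose square $\rho^{-1-2\kappa}$ is not integrable near $\scri$ for any $\kappa>0$, and the three principal contributions do not cancel (their net coefficient is proportional to $(\tfrac{3}{2}-\kappa)(\tfrac{1}{2}+\kappa)\neq 0$). The same $\rho$-versus-$\tilde{r}$ issue contaminates the boundary condition you would assign to the remainder, since the trace of $\rho^{-\frac{1}{2}-\kappa}(\tilde{\partial}_u-\tilde{\partial}_v)(\Psi\rho^{\frac{3}{2}-\kappa})$ is not under control at the $H^1$ level once $\kappa\geq\frac{1}{2}$. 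This is exactly the low-regularity twisting subtlety the paper emphasizes, and its proof deliberately avoids any profile subtraction: it derives the energy identity for $\widehat{\psi}$ itself, substitutes the Robin condition \eqref{bcpsi} into the $\scri$ flux and integrates by parts in $t$, controls the resulting terms by the corner data $\Psi$, $\beta$, $\gamma$ and the pointwise closeness of $\psi\rho^{-\frac{3}{2}+\kappa}$ to $\Psi$ (the analogue of \eqref{sty}), and only then obtains the pointwise $\hat{\partial}_u$-bound from the transport form of the wave equation — the last step being the one part of your $\psi$ argument that matches the paper. Your sketch also omits this boundary-flux analysis entirely, yet it is where the smallness in $\delta^\kappa$ and the dependence on $\beta,\gamma$ actually enter.

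Two smaller inaccuracies: the quartic term in \eqref{conm} is integrable for $\kappa<\frac{3}{4}$; the restriction $\kappa<\frac{2}{3}$ comes from the $\psi^2$ term in the potential \eqref{e2b} through the energy estimate (cf.\ \eqref{enres}), not from the $\varpi_N$ equation. And the bound on $\widehat{\tilde{r}}_{uu}$ is not a pointwise estimate on the integrand of \eqref{req}: one must differentiate the integrand in $u$, and the resulting bound $C_b\,\tilde{r}^{\min(0,1-2\kappa)}$ is unbounded for $\kappa>\frac{1}{2}$, the point being integrability in $v$ rather than boundedness.
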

Furthermore, we have
\begin{Proposition} \label{prop:conmap}
For $\delta$ sufficiently small, $\Phi:\mathcal{B}_b\to \mathcal{B}_b$ is a contraction with respect to the distance $d$.
\end{Proposition}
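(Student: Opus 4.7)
\textbf{Proof plan for Proposition \ref{prop:conmap}.} The plan is to take two triples $\mathbf{Y}_i = (\tilde{r}_i, (\varpi_N)_i, \psi_i) \in \mathcal{B}_b$ ($i=1,2$), set $(\widehat{\tilde{r}}_i, \widehat{\varpi_N}_i, \widehat{\psi}_i) := \Phi(\mathbf{Y}_i)$, and prove an estimate of the form
\begin{equation*}
d\bigl(\Phi(\mathbf{Y}_1),\Phi(\mathbf{Y}_2)\bigr) \leq C \delta^{\alpha}\, d(\mathbf{Y}_1,\mathbf{Y}_2)
\end{equation*}
with $\alpha>0$ independent of $\delta$, which gives the contraction once $\delta$ is small enough. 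Each of the three components of $d$ is treated in turn, re-using the pointwise bounds on $r,\Omega^2,1-\mu,$ etc., and the Lipschitz-type estimates on the nonlinear coefficients of the equations, that were already established in Section~\ref{sec:auxlemmas} during the proof of Proposition~\ref{prop:maptoball}.

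First I would bound $\widehat{\tilde{r}}_1-\widehat{\tilde{r}}_2$ in the $C^{1+}_{\tilde r}$-distance. Since both $\widehat{\tilde{r}}_i$ are given by the same double integral \eqref{req} over $\Delta$ with identical data on $\mathcal{N}$, the difference is the double integral of the difference of the right-hand sides of \eqref{e1}. All factors appearing there ($\Omega^2/r^2$, $\varpi_N/r^2$, $\psi^2/r^2$, etc.) are Lipschitz in $(\tilde r,\varpi_N,\psi)$ with respect to $d$ when restricted to $\mathcal{B}_b$, by the auxiliary estimates of Section~\ref{sec:auxlemmas}. The two $v$-integrations and one $u$-integration supply factors of $\delta$, and differentiating once in $u$ or $v$ removes one integration but leaves enough smallness. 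For the mixed second derivative $(\widehat{\tilde{r}}_1-\widehat{\tilde{r}}_2)_{uv}$ and the second $u$-derivative, one substitutes directly in the pointwise form of \eqref{e1} and uses that, e.g., $\tilde r_{uu}$ differences appear only through $\Omega^2$-like factors that are already Lipschitz in the $d$-distance.

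Next I would estimate $\widehat{\psi}_1-\widehat{\psi}_2$ using the linear $\underline{H}^1$ well-posedness (Proposition~\ref{wpprop}, i.e.\ the spherically symmetric version of the result of \cite{Warnick:2012fi}). Subtracting the two instances of \eqref{oldpsi}, the difference $\Delta\psi := \widehat{\psi}_1-\widehat{\psi}_2$ satisfies a twisted linear wave equation (with coefficients built from $\mathbf{Y}_1$) with vanishing Cauchy data, the same homogeneous boundary condition as $\widehat{\psi}_i$ with $\gamma$ replaced by $0$, and an inhomogeneity schematically of the form
\begin{equation*}
\mathcal{I} = (\tilde\nabla^{\dagger}_{\mu}\tilde\nabla^{\mu})_{\mathbf{Y}_1}\widehat{\psi}_2 - (\tilde\nabla^{\dagger}_{\mu}\tilde\nabla^{\mu})_{\mathbf{Y}_2}\widehat{\psi}_2 \;+\; \bigl(V_1-V_2\bigr)\psi_2 \;+\; V_2(\psi_1-\psi_2).
\end{equation*}
Each term in $\mathcal{I}$ is linear in a difference of coefficients (which is $\mathcal{O}(d(\mathbf{Y}_1,\mathbf{Y}_2))$ by the auxiliary lemmas, with a $\delta^{\alpha}$ gain coming from the $\rho^{1+s}$-weight in $d_{\varpi}$ and the $\rho$-factors in $d_{\tilde{r}}$ and $d_{\psi}$) multiplied by $\widehat{\psi}_2$ or its twisted derivatives, which are uniformly bounded in the relevant weighted norms since $\mathbf{Y}_2\in\mathcal{B}_b$. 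The energy identity of \cite{Warnick:2012fi} then bounds $\|\Delta\psi\|_{C^0\underline{H}^1}$ and the pointwise quantities appearing in $d_{\psi}$ by $C\delta^{\alpha}d(\mathbf{Y}_1,\mathbf{Y}_2)$.

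Finally, the difference $\widehat{\varpi_N}_1-\widehat{\varpi_N}_2$ is a single $v$-integral of the difference of the integrands in \eqref{conm}, which depend only on $r,\tilde r_u,\Omega^2,\varpi_N$ and $\widehat{\psi}$. Using Lipschitz continuity of the coefficients in $d$ and the bound on $\Delta\psi$ from the previous step, the sup-norm piece of $d_{\varpi}$ is controlled by $C\delta\,d(\mathbf{Y}_1,\mathbf{Y}_2)$. For the $\rho^{1+s}$-weighted sup of $\partial_u\widehat{\varpi_N}$ one returns to the pointwise form \eqref{e3}; no $u$-integration is available, but the weight $\rho^{1+s}$ together with the pointwise estimates on the twisted first derivatives of $\widehat{\psi}$ (already built into the ball $\mathcal{B}_b$) yields the same $\delta^{\alpha}$ gain. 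Summing the three contributions gives the contraction estimate. The main obstacle is the $\widehat\psi$-step: the potential in \eqref{e2b} decays only like $r^{-2}$ (or slower once $\kappa>\tfrac12$), so the inhomogeneity $\mathcal{I}$ must be decomposed so that each piece is either absorbed into the linear energy identity of \cite{Warnick:2012fi} or is integrable in $v$ with a $\delta^{\alpha}$ factor; the $\rho$-weights in $d_\psi$ and $d_{\tilde r}$ are precisely engineered so that this decomposition closes.
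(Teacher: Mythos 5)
Your overall architecture (three components, Lipschitz bounds on the coefficients, a $\delta^{\alpha}$ gain, closing with the linear theory) matches the paper, and your treatment of the $\tilde r$-component and of the sup-norm part of the $\varpi_N$-component is essentially the paper's. The genuine gap is in the $\widehat\psi$-step. As written, your inhomogeneity $\mathcal I$ contains the operator difference $(\tilde\nabla^\dagger_\mu\tilde\nabla^\mu)_{\mathbf{Y}_1}\widehat\psi_2-(\tilde\nabla^\dagger_\mu\tilde\nabla^\mu)_{\mathbf{Y}_2}\widehat\psi_2$, which in first-order (divergence) form is $\partial_v\bigl(f_1r_1\partial_u(\widehat\psi_2/f_1)\bigr)-\partial_v\bigl(f_2r_2\partial_u(\widehat\psi_2/f_2)\bigr)+\dots$; expanded, this involves \emph{second} derivatives of $\widehat\psi_2$, which are not controlled for an element produced at the $H^1$ level, and so $\mathcal I$ cannot be fed into Proposition \ref{wpprop}, whose hypothesis requires $\int_\Delta F^2\rho^{-6}<\infty$ for an honest function $F$. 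It is not true that every term of $\mathcal I$ is "a coefficient difference times $\widehat\psi_2$ or its twisted first derivatives". The paper circumvents exactly this: it estimates the difference of the \emph{ratios} $\widehat\psi_1/f_1-\widehat\psi_2/f_2$ (each solution twisted with its own $f_i=\tilde r_i^{3/2-\kappa}$), derives \eqref{wed1}--\eqref{wed2} by using the equation satisfied by $\widehat\psi_2$ so that the error $\mathcal E$ contains only first twisted derivatives of $\widehat\psi_2$ multiplied by $\partial_u(f_1r_1/f_2r_2)$, $\partial_v(f_1r_1/f_2r_2)$, and then exploits the exact identity $\partial_u(f_1r_1/f_2r_2)=(\tfrac12-\kappa)(\tilde r_1/\tilde r_2)^{\frac12-\kappa}\bigl[\tfrac{(\tilde r_1)_u}{\tilde r_1}-\tfrac{(\tilde r_2)_u}{\tilde r_2}\bigr]$ together with Corollary \ref{cor:diffss} — a cancellation in the ratio, not a naive Lipschitz bound, which is what keeps the $\rho$-weights integrable. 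One must then (i) handle the boundary term \eqref{btde} by integrating by parts along $\scri$, since the two boundary conditions are twisted with different $\tilde r_i$ and do not literally subtract to a homogeneous condition for $\widehat\psi_1-\widehat\psi_2$, and (ii) convert the $f_1$-twisted energy of the difference of ratios back to $\|\widehat\psi_1-\widehat\psi_2\|_{C^0\underline H^1}$ using \eqref{kaux}, \eqref{helpi1} and the already-established bounds on $\widehat\psi_2$. These steps are the heart of Proposition \ref{prop:conmap} and are missing from your plan; the closing remark of your proposal gestures at "a decomposition" but does not supply it.

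A secondary inaccuracy: $\partial_u\widehat\varpi_N$ is \emph{not} obtained from "the pointwise form of \eqref{e3}" (nor from \eqref{e4}, which is only recovered for the fixed point via Corollary \ref{constcor}); it must be computed by differentiating the $v$-integral in \eqref{conm} under the integral sign. The resulting terms contain $\partial_u\bigl(f\partial_v(\widehat\psi/f)\bigr)$, which has to be rewritten via the wave equation, and the critical product $\tilde\partial_u\widehat\psi\,\tilde\partial_v\widehat\psi$ is then estimated by Cauchy–Schwarz using the $C^0\underline H^1$ bound in $v$ and the pointwise bound on $\hat\partial_u\widehat\psi$, giving growth of order $\rho^{-1-s/4}$ that the $\rho^{1+s}$ weight absorbs (cf.\ \eqref{tgt} and the estimates around \eqref{iop}, which also rely on Lemma \ref{lem:cribo} and on the differences of twisted $v$-derivatives controlled in the $\psi$-step). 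So the $\varpi_N$-derivative estimate is also parasitic on the corrected $\psi$-argument, not independent of it.
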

From these immediately follows the corollary
\begin{Corollary}\label{maincor}
There exists a unique weak solution $(\rt, \varpi_N, \psi) \in \mathcal{B}_b$  of the equations \eq{e3},  \eq{e2}, \eq{e1} which satisfies the initial and boundary conditions, as in \S \ref{initdatasec}, \ref{sec:bdycondition}.
\end{Corollary}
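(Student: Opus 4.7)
The plan is to apply the Banach fixed-point theorem to $\Phi$ on $\mathcal{B}_b$, and then translate the fixed-point equation back into the weak form of the three evolution equations with the prescribed initial and boundary data. First I would observe that $(\mathcal{C}, d)$ is complete: each of the three factor spaces is standardly complete (the first two via $\sup$-norms, the last via the $C^0\underline{H}^1$-norm in \eq{c0h1}), and $\mathcal{B}_b$ is a closed ball in it, hence itself a complete metric space. For $\delta$ sufficiently small, Proposition \ref{prop:maptoball} gives $\Phi(\mathcal{B}_b)\subset\mathcal{B}_b$ and Proposition \ref{prop:conmap} gives that $\Phi$ is a contraction, so Banach's theorem produces a unique fixed point $(\tilde{r}, \varpi_N, \psi)\in\mathcal{B}_b$ of $\Phi$.

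Next I would verify that this fixed point is a weak solution of \eq{e1}, \eq{e2}, \eq{e3}. Setting $\widehat{\tilde{r}}=\tilde{r}$ in \eq{req}, the integrand on the right is continuous (by the $\mathcal{B}_b$-regularity together with the algebraic definitions \eq{defaux}), so differentiating in $u$ and $v$ recovers \eq{e1} pointwise, a fortiori weakly. Setting $\widehat{\psi}=\psi$ in \eq{oldpsi} is, by definition, the statement that $\psi$ is the unique $\underline{H}^1$-weak solution of the twisted wave equation \eq{e2} with the Robin data \eq{bcpsi}. Setting $\widehat{\varpi_N}=\varpi_N$ (and $\widehat{\psi}=\psi$) in \eq{conm} and differentiating in $v$ yields \eq{e3} pointwise. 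The initial data on $\mathcal{N}$ are hardwired into the function spaces $C^{1+}_{\tilde{r}}$, $C^{0+}_{\varpi_N}$, $C^{0+}_{\psi}\underline{H}^1$ (which force agreement with $\overline{\tilde{r}}, \overline{\varpi_N}, \overline{\psi}$), with the $\varpi_N$ datum additionally enforced by the lower limit $v_0$ in \eq{conm}. The boundary condition \eq{bcr} follows by evaluating \eq{req} on the diagonal $u=v$, where both $\overline{\tilde{r}}(u)-\overline{\tilde{r}}(v)$ and the double integral vanish; the boundary condition \eq{bcpsi} is baked into \eq{oldpsi} and is therefore inherited at the fixed point.

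Uniqueness within $\mathcal{B}_b$ is immediate from Banach as uniqueness of fixed points, but to state it as uniqueness of weak solutions one must check the converse: that any weak solution in $\mathcal{B}_b$ realizing the same initial and boundary data is in fact a fixed point of $\Phi$. This uses integration of \eq{e1} from $\mathcal{N}$ and from the boundary $\scri$ (using \eq{bcr}) to recover \eq{req}, integration of \eq{e3} in $v$ from $v_0$ to recover \eq{conm}, and the uniqueness of the $\underline{H}^1$-weak solution of the twisted wave equation with Robin data from \cite{Warnick:2012fi} to identify the $\psi$-component with $\widehat{\psi}$. The only real obstacle has been absorbed into Propositions \ref{prop:maptoball} and \ref{prop:conmap}; what remains in the corollary is bookkeeping, the subtlest piece being this converse direction of the equivalence between fixed points of $\Phi$ and weak solutions in $\mathcal{B}_b$.
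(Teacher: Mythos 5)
Your proposal is correct and follows essentially the same route as the paper: the paper's proof is exactly the Banach fixed-point theorem applied to $\Phi$ on the complete ball $\mathcal{B}_b$ (via Propositions \ref{prop:maptoball} and \ref{prop:conmap}), together with the observation that, by construction of $\Phi$, fixed points in $\mathcal{B}_b$ coincide with weak solutions of \eq{e3}, \eq{e2}, \eq{e1} with the prescribed initial and boundary data. The extra bookkeeping you supply (completeness of $(\mathcal{C},d)$, both directions of the fixed-point/weak-solution equivalence, and the attainment of \eq{bcr} and \eq{bcpsi}) is precisely what the paper compresses into ``by construction of $\Phi$''.
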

\begin{proof}
By the Banach fixed point theorem, $\Phi$ has a unique fixed point. By construction of $\Phi$, a point $(\rt, \varpi_N, \psi) \in \mathcal{B}_b$ is a fixed point of $\Phi$ if and only if it solves \eq{e3},  \eq{e2}, \eq{e1}.
\end{proof}

The remainder of this section deals with the proof of Propositions \ref{prop:maptoball}, \ref{prop:conmap}. In \S\ref{sec:auxlemmas} we prove some useful auxilliary lemmas, before proving Proposition \ref{prop:maptoball} in \S\ref{sec:maptoball} and Proposition \ref{prop:conmap} in \S\ref{sec:conmap}.

\subsection{Properties of $\mathcal{B}_b$} \label{sec:auxlemmas}
Before we prove Propositions \ref{prop:maptoball}, \ref{prop:conmap}, we first establish some properties of elements in the ball $\mathcal{B}_b$. We denote by $C_b$ a constant depending only on $b$ (the size of the ball) and possibly the initial data quantities $\Psi$, $M_N$ and the parameter $\kappa$ determined by the Klein-Gordon mass via (\ref{kgmass}).

\begin{Lemma} \label{lem:basicballbounds}
Let $\left(\tilde{r}, \varpi_N, \psi\right) \in \mathcal{B}_b \subset \mathcal{C}$. Then we have the following estimates for $\tilde{r}$:
\begin{align}
e^{-b} \leq \frac{2\tilde{r}}{u-v} \leq e^b \ \ \ \ , \ \ \ \ e^{-b} \leq 2\tilde{r}_u \leq e^b \ \ \ \ , \ \ \ \ e^{-b} \leq -2\tilde{r}_v \leq e^b
\end{align}
\begin{align}
|\tilde{r}_{uv}| \leq b  \ \ \ \ , \ \ \ \ \ | T\left(\tilde{r}\right) | \leq b \cdot \rho
\end{align}
Finally, the auxiliary variables $\Omega$ and $\varpi$ satisfy
\begin{align}
\varpi \leq C_b \cdot \tilde{r}^{-2\kappa} \ \ \ , \ \ \  \Big|\frac{\Omega^2}{r^2}\Big| \leq C_b
\end{align}
and $\psi$ satisfies
\[
|\psi| \leq C_b \cdot \tilde{r}^{\frac{3}{2}-\kappa} \ \ \ \ \ and \ \ \ \ \ |\hat{\partial}_u \psi| \leq C_b \cdot \tilde{r}^{\frac{1}{2}-\frac{s}{4}} 
\]
\end{Lemma}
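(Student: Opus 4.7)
The lemma is essentially an unpacking of the definitions of the distance $d$ and of the auxiliary variables $\Omega, \varpi$, so the plan is to verify each inequality in turn using the bound $d((\tilde{r}, \varpi_N, \psi), (\rho, M_N, \Psi \rho^{3/2-\kappa})) < b$, where $\rho = (u-v)/2$ denotes the first component of the centre of the ball. At this centre one has $(\tilde{r}_0)_u = 1/2$, $(\tilde{r}_0)_v = -1/2$, $T(\rho) = \rho_{uv} = \rho_{uu} = 0$, and by direct computation from the definition \eq{htd2} of the twisted derivative, $\hat{\partial}_u(\Psi \rho^{3/2-\kappa}) = \rho^{3/2-\kappa}\partial_u(\Psi) = 0$. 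Reading off the six terms of $d_{\tilde{r}}$ then immediately produces the claimed bounds on $2\tilde{r}/(u-v)$, $2\tilde{r}_u$, $-2\tilde{r}_v$, $T(\tilde{r})/\rho$, $\tilde{r}_{uv}$ and $\tilde{r}_{uu}$. The logarithms in $d_{\tilde{r}}$ tacitly require $\tilde{r}_u > 0$, $\tilde{r}_v < 0$; this is consistent with the initial data from Definition \ref{def:id} and is forced throughout the ball by the logarithmic bound itself.

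The two $\psi$ estimates likewise follow from terms appearing in $d_\psi$: the inequality $\|\psi \rho^{-3/2+\kappa} - \Psi\|_{C^0} < b$ gives $|\psi| \leq (|\Psi|+b)\rho^{3/2-\kappa}$, and $\|\rho^{-1/2+s/4}\hat{\partial}_u \psi\|_{C^0} < b$ gives $|\hat{\partial}_u \psi| \leq b\,\rho^{1/2-s/4}$. Replacing $\rho$-weights by $\tilde{r}$-weights is permitted because $\rho/\tilde{r}$ is bounded above and below by the first part of the lemma already established.

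Finally the auxiliary variables require short calculations based on \eq{defaux}. For the Hawking mass $\varpi = \varpi_N + 2\pi g (r^3/l^2)\psi^2$, I would combine $|\varpi_N| \leq |M_N| + b$ (from the first term of $d_{\varpi}$) with the $\psi$ bound to get
\[
|\varpi| \leq |M_N| + b + C_b \tilde{r}^{-3} \tilde{r}^{3-2\kappa} \leq C_b\, \tilde{r}^{-2\kappa},
\]
the constant term being absorbed into $\tilde{r}^{-2\kappa}$ on a small enough domain (where $\tilde{r}^{-2\kappa}$ is large). For the ratio $\Omega^2/r^2 = -4 \tilde{r}_u \tilde{r}_v/[\tilde{r}^2(1-\mu)]$ I would expand
\[
\tilde{r}^2(1-\mu) \;=\; \tilde{r}^2 - 2\varpi \tilde{r}^3 + \tfrac{1}{l^2},
\]
and observe that the first two terms are bounded by $C_b \tilde{r}^{3-2\kappa} \leq C_b \delta^{3-2\kappa}$; since $\kappa < 2/3$ by \eq{massrestrict} we have $3-2\kappa > 0$, so for $\delta$ small enough $\tilde{r}^2(1-\mu) \geq 1/(2l^2)$. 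Combined with the uniform bounds on $\tilde{r}_u, \tilde{r}_v$ this yields $|\Omega^2/r^2| \leq C_b$. The only even mildly delicate step in the argument is this last positivity, and once $\delta$ has been chosen small enough for $\tilde{r}^2(1-\mu)$ to stay bounded away from zero the rest is purely mechanical.
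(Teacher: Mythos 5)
Your proof is correct and is precisely the ``straightforward computation'' the paper leaves to the reader: each estimate is read off from the definition of the distance $d$ to the centre $\left(\frac{u-v}{2}, M_N, \Psi\rho^{\frac{3}{2}-\kappa}\right)$ of $\mathcal{B}_b$, and the auxiliary bounds follow from \eq{defaux} once one notes $\tilde{r}^2(1-\mu)=\tilde{r}^2-2\varpi\tilde{r}^3+\frac{1}{l^2}$ is bounded away from zero for $\delta$ small. Your explicit remarks on the sign of $\tilde{r}_u,\tilde{r}_v$ and on the implicit $\delta$-smallness needed for the $\Omega^2/r^2$ bound are consistent with the paper's conventions and do not change the argument.
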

\begin{proof}
Straightforward computation.
\end{proof}
\begin{Corollary}
The function $\tilde{r}$ extends continuously to the boundary $v=u$. The functions $\tilde{r}_u$ and $\tilde{r}_v$ extend to bounded functions on the boundary.
\end{Corollary}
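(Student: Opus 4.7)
The plan is to read each statement of the corollary directly off the pointwise bounds assembled in Lemma~\ref{lem:basicballbounds}; there is no heavy lifting required, only the organisation of the available estimates.

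First, for the continuous extension of $\tilde r$, I would observe that the two-sided bound $2\tilde r/(u-v)\in[e^{-b},e^b]$ forces $\tilde r(u,v)\to 0$ whenever $(u,v)$ approaches any point $(u_*,u_*)\in\scri$, so setting $\tilde r|_{\scri}:=0$ produces a continuous extension to $\overline{\Delta}_{\delta,u_0}$; joint continuity at the corner is immediate because the convergence is uniform in the approach direction.

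For $\tilde r_u$, the strategy is to show that it is uniformly continuous on $\Delta_{\delta,u_0}$ and hence extends continuously (and therefore boundedly, by the interior bound $|\tilde r_u|\le e^b/2$) to the closure. Uniform continuity follows from having bounded first-order partials: the Lemma already gives $|\partial_v\tilde r_u|=|\tilde r_{uv}|\le b$, and $|\partial_u\tilde r_u|=|\tilde r_{uu}|\le b$ follows from membership in $\mathcal{B}_b$ because the centre $\frac{u-v}{2}$ of the ball has vanishing $uu$-derivative and the distance $d_{\tilde r}$ on $C^{1+}_{\tilde r}$ explicitly controls $\|\tilde r_{uu}\|_{C^0}$. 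For $\tilde r_v$ I would then avoid repeating the argument and instead use the decomposition $\tilde r_v=T(\tilde r)-\tilde r_u$ together with the bound $|T(\tilde r)|\le b\rho$ from the Lemma; since $\rho\to 0$ at $\scri$, this simultaneously gives a continuous extension of $\tilde r_v$ and identifies its boundary value as $\tilde r_v|_{\scri}=-\,\tilde r_u|_{\scri}$, yielding the bounded extension claimed.

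The only subtlety, and the nearest thing to an obstacle, is that the $\tilde r_{uu}$ bound needed in the middle step is not among those explicitly tabulated in Lemma~\ref{lem:basicballbounds} and must be extracted from the ambient ball structure on $\mathcal{C}$. Once that bound is in hand, the corollary is essentially bookkeeping on top of the pointwise estimates already proved.
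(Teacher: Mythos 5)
Your proof is correct and is essentially the argument the paper intends: the corollary is stated as an immediate consequence of Lemma \ref{lem:basicballbounds}, with $e^{-b}\rho\le\tilde r\le e^{b}\rho$ forcing the continuous extension $\tilde r|_{\scri}=0$ and the uniform bounds on $\tilde r_u$, $\tilde r_v$ giving bounded boundary values. Your detour through $\|\tilde r_{uu}\|_{C^0}\le b$ (correctly extracted from the ball structure, since the centre $\rho$ has vanishing $uu$-derivative) is more than the statement requires — for the mere bounded extension it suffices that $|\tilde r_u|\le e^{b}/2$ and $|\tilde r_{uv}|\le b$ make $v\mapsto\tilde r_u(u,v)$ Lipschitz, so the limit at $v=u$ exists, and likewise for $\tilde r_v$ via your decomposition $\tilde r_v=T(\tilde r)-\tilde r_u$ — but it is harmless and in fact yields the stronger conclusion of continuous extension.
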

\begin{Corollary} \label{cor:sc}
In addition to the above bounds we have
\begin{align}
\Big| \frac{\tilde{r}_v}{\tilde{r}} + \frac{1}{2\rho} \Big| \leq 3b \cdot e^{b}  \ \ \ \ \ \ , \ \ \ \ \ \Big| \frac{\tilde{r}_u}{\tilde{r}} - \frac{1}{2\rho} \Big| \leq 3b \cdot e^{b}
\end{align}
\end{Corollary}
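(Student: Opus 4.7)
The plan is to bound both quantities by analysing the two auxiliary combinations
\[
K := 2\rho \tilde{r}_u - \tilde{r} , \qquad H := 2\rho \tilde{r}_v + \tilde{r} ,
\]
since a direct computation gives $\frac{\tilde{r}_u}{\tilde{r}} - \frac{1}{2\rho} = \frac{K}{2\rho \tilde{r}}$ and $\frac{\tilde{r}_v}{\tilde{r}} + \frac{1}{2\rho} = \frac{H}{2\rho \tilde{r}}$. By Lemma \ref{lem:basicballbounds} the denominator satisfies $2\rho \tilde{r} \geq 2 e^{-b} \rho^2$, so the claimed bound $3 b e^b$ will follow once we show $|K|, |H| \leq C b \rho^2$ for an absolute constant $C$.

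The difficulty is that naive pointwise estimates are too weak: the ball bounds only give $|\tilde{r}_u - \tfrac12|, |\tilde{r}_v + \tfrac12| \leq \tfrac{1}{2} b e^b$ and $|\tilde{r} - \rho| \leq b e^b \rho$, so that $|K|, |H| = \O{b\rho}$, which after division by $2\rho\tilde{r}$ only produces a blow-up bound of order $1/\rho$. To gain the crucial extra factor of $\rho$, I will exploit the fact that $K$ and $H$ vanish on the diagonal $\{u = v\}$: by the preceding corollary $\tilde{r}$ extends continuously to $u = v$ with value zero, so $\lim_{u' \to v^+} K(u',v) = \lim_{u' \to v^+} H(u',v) = 0$.

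The key observation is then the cancellation structure of the $u$-derivatives:
\begin{align*}
\partial_u K = 2\rho \tilde{r}_{uu} , \qquad \partial_u H = T(\tilde{r}) + 2\rho \tilde{r}_{uv} .
\end{align*}
In the first identity the singular $\tilde{r}_u$ contribution cancels algebraically, while in the second the combination $\tilde{r}_u + \tilde{r}_v$ regroups into $T(\tilde{r})$, for which the distance on $C^{1+}_{\tilde{r}}$ has been specifically designed to furnish the $\rho$-weighted estimate $|T(\tilde{r})| \leq b \rho$. Together with the ball bounds $|\tilde{r}_{uu}|, |\tilde{r}_{uv}| \leq b$, both right-hand sides are pointwise controlled by a multiple of $b \rho$.

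Integrating in $u$ from $u' = v$, where both $K$ and $H$ vanish by the boundary limit, up to $u' = u$ then yields $|K(u,v)| \leq 2 b \rho^2$ and $|H(u,v)| \leq 3 b \rho^2$; dividing by $2\rho\tilde{r}$ and using the lower bound $\tilde{r} \geq e^{-b} \rho$ gives the two inequalities with constant at most $\tfrac{3}{2} b e^b \leq 3 b e^b$. The only genuine subtlety, and what I would flag as the conceptual core of the argument, is recognising that the second-order entries $\tilde{r}_{uu}, \tilde{r}_{uv}$ together with the synchronised term $T(\tilde{r})/\rho$ in the distance on $C^{1+}_{\tilde{r}}$ are exactly the pieces of information needed so that the two cancellations above upgrade the vanishing order of $K, H$ from one to two at the conformal boundary.
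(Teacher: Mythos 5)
Your proof is correct and is essentially the paper's argument: the paper integrates the quantity $\rho\tilde{r}_v+\tfrac12\tilde{r}$ (your $H/2$) in $u$ from the boundary $u=v$, where it vanishes, using $\partial_u(\rho\tilde{r}_v+\tfrac12\tilde{r})=\tfrac12 T(\tilde{r})+\rho\tilde{r}_{uv}$ together with $|T(\tilde{r})|\le b\rho$, $|\tilde{r}_{uv}|\le b$, and then divides by $\rho\tilde{r}$ and invokes $\tilde{r}\ge e^{-b}\rho$, exactly as you do. Your treatment of $K$ via $\partial_u K=2\rho\tilde{r}_{uu}$ is just the natural ``analogous'' variant the paper alludes to for the second inequality, so there is nothing substantively different.
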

\begin{proof}
We start from the following inequality which holds in the triangle $\Delta_{\delta,u_0}$:
\[
\partial_u \left(\rho \tilde{r}_v + \frac{1}{2} \tilde{r}\right) = \frac{1}{2} T\left(\tilde{r}\right) + \rho \cdot \tilde{r}_{uv} \,.
\]
The quantity in brackets on the left extends to zero on the boundary $v=u$. Integrating the right hand side yields 
\[
\int du \left[ \frac{1}{2} T\left(\tilde{r}\right) + \rho \cdot \tilde{r}_{uv} \right] \leq \int du \left(\frac{1}{2}b \cdot \rho + b \cdot \rho\right) \leq 3b\cdot \rho^2 \,.
\]
Dividing the resulting integrated inequality by $\tilde{r} \cdot \rho$ and using the first bound of the Lemma yields the desired (first) inequality. The second is proven analogously. 
\end{proof} 

The following version of the previous Corollary for differences will also be useful in the sequel:
\begin{Corollary} \label{cor:diffss}
Given two elements  $\left(\tilde{r}_1, (\varpi_N)_1, \psi_1\right),  \left(\tilde{r}_2, (\varpi_N)_2, \psi_2\right) \in \mathcal{B}_b \subset \mathcal{C}$ we have the estimate
\begin{equation} 
\begin{split}
\Big| \frac{(\tilde{r}_1)_v}{\tilde{r}_1} - \frac{(\tilde{r}_2)_v}{\tilde{r}_2} \Big|
&\leq e^b \left(1+3b\cdot e^{b}\right) \left[\sup_{\Delta} \Big| \frac{T\left(\tilde{r}_1 - \tilde{r}_2\right)}{\rho} \Big| + \sup_{\Delta} \Big|(\tilde{r}_1)_{uv} - (\tilde{r}_2)_{uv} \Big| +\sup_{\Delta} \Big| \frac{\tilde{r}_1 - \tilde{r}_2}{\rho} \Big| \right] \nonumber \\
&\leq C_b \cdot d_{\tilde{r}}\left(\tilde{r}_1,\tilde{r}_2\right)
\end{split}
\end{equation}
and the same estimate with $v$ replaced by $u$ on the left hand side.
\end{Corollary}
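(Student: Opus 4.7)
The plan is to adapt the cancellation identity used in the proof of Corollary \ref{cor:sc} to the difference $w := \tilde{r}_1 - \tilde{r}_2$. Writing $\rho_u = \tfrac{1}{2}$ and $T = \partial_u + \partial_v$ one immediately verifies
\[
\partial_u\Bigl[\rho\, w_v + \tfrac{1}{2} w\Bigr] \;=\; \tfrac{1}{2}\,T(w) + \rho\, w_{uv}.
\]
The key observation is that the quantity in square brackets extends continuously to zero on the boundary $v=u$: both $\tilde{r}_1$ and $\tilde{r}_2$ satisfy the boundary condition \eq{bcr}, so $w|_{\scri}=0$; meanwhile $\rho$ vanishes there while $w_v$ remains bounded by the corollary immediately following Lemma \ref{lem:basicballbounds}.

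Next I would integrate the identity in $u'$ along the constant-$v$ slice, starting from the boundary at $u'=v$. Bounding the integrand pointwise by $\rho\cdot\bigl[\tfrac{1}{2}\sup_{\Delta}|T(w)/\rho| + \sup_{\Delta}|w_{uv}|\bigr]$ and using $\int_v^u \rho(u',v)\,du' = \rho(u,v)^2$ yields
\[
\bigl|\rho\, w_v + \tfrac{1}{2} w\bigr|(u,v) \;\leq\; \rho^2\Bigl[\tfrac{1}{2}\sup_{\Delta}|T(w)/\rho| + \sup_{\Delta}|w_{uv}|\Bigr],
\]
which, after division by $\rho$, controls the combination $w_v + w/(2\rho)$ pointwise with a gain of one full power of $\rho$.

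The third step is the algebraic reorganisation
\[
\frac{(\tilde{r}_1)_v}{\tilde{r}_1} - \frac{(\tilde{r}_2)_v}{\tilde{r}_2} \;=\; \frac{1}{\tilde{r}_1}\Bigl[\,w_v + \frac{w}{2\rho}\Bigr] \;-\; \frac{w}{\tilde{r}_1}\Bigl[\frac{1}{2\rho} + \frac{(\tilde{r}_2)_v}{\tilde{r}_2}\Bigr].
\]
For the first term I use $1/\tilde{r}_1 \leq e^b/\rho$ (Lemma \ref{lem:basicballbounds}) together with the integral bound above, which produces a multiplicative factor of $e^b$ and eliminates the factor of $\rho$ gained in the integration. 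For the second term, Corollary \ref{cor:sc} applied to $\tilde{r}_2$ bounds the inner bracket by $3be^b$, while $|w|/\tilde{r}_1 \leq e^b\,|w/\rho|$ yields the advertised coefficient $3be^{2b}$. Summing the two contributions and estimating each coefficient by $e^b(1+3be^b)$ gives the pointwise bound stated. To obtain the second inequality with $C_b\cdot d_{\tilde{r}}(\tilde{r}_1,\tilde{r}_2)$, I note that $\sup_\Delta|T(w)/\rho|$ and $\sup_\Delta|w_{uv}|$ appear directly in the definition of $d_{\tilde{r}}$, while $\sup_\Delta|w/\rho|$ is controlled by $\|\log(\tilde{r}_1/\tilde{r}_2)\|_{C^0}$ via the elementary estimate $|\tilde{r}_1/\tilde{r}_2-1|\leq e^{b}\|\log(\tilde{r}_1/\tilde{r}_2)\|_{C^0}$ combined with $\tilde{r}_2/\rho \leq e^b/2$ from Lemma \ref{lem:basicballbounds}. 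The symmetric bound with $v$ replaced by $u$ on the left-hand side follows from the analogous identity obtained by interchanging the roles of $u$ and $v$ throughout.

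The only (mild) obstacle is isolating the right algebraic decomposition in the third step: the ``good'' combination $w_v + w/(2\rho)$ is precisely what absorbs the superficially dangerous $1/\rho$ factor generated by dividing $w$ by $\tilde{r}_i\sim\rho$. This absorption is ultimately a reflection of the \emph{two-fold} vanishing of $\rho w_v + \tfrac{1}{2}w$ on $\scri$, which supplies the quadratic smallness in $\rho$ needed to compensate the division by $\tilde{r}_1$.
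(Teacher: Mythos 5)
Your argument is correct and follows essentially the same route as the paper: the identity $\partial_u[\rho\,w_v+\tfrac{1}{2}w]=\tfrac{1}{2}T(w)+\rho\,w_{uv}$ integrated from the boundary $u=v$ (where the bracket vanishes), followed by the algebraic regrouping into the good combination $w_v+w/(2\rho)$ plus a term controlled by Corollary \ref{cor:sc} and Lemma \ref{lem:basicballbounds}, is exactly the paper's proof. The only blemishes are harmless constant slips in the final step (Lemma \ref{lem:basicballbounds} gives $\tilde{r}_2/\rho\leq e^b$, not $e^b/2$, and the log-comparison naturally yields a factor $e^{2b}$), which are absorbed into $C_b$ anyway.
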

\begin{proof}
Note first that similar to the previous corollary we have
\[
\partial_u \left(\rho \left((\tilde{r}_1)_v - (\tilde{r}_2)_v\right) + \frac{1}{2} \left(\tilde{r}_1 - \tilde{r}_2 \right)\right) = \frac{1}{2} T\left(\tilde{r}_1 - \tilde{r}_2\right) + \rho \cdot \left((\tilde{r}_1)_{uv} - (\tilde{r}_2)_{uv}\right) \, ,
\]
which after integration leads to
\begin{align} \label{im}
|\rho \left((\tilde{r}_1)_v - (\tilde{r}_2)_v\right) + \frac{1}{2} \left(\tilde{r}_1 - \tilde{r}_2\right) | \leq \rho^{2} \left[\sup_{\Delta} \Big| \frac{T\left(\tilde{r}_1 - \tilde{r}_2\right)}{\rho} \Big| + \sup_{\Delta} \Big|(\tilde{r}_1)_{uv} - (\tilde{r}_2)_{uv} \Big| \right] \, .
\end{align}
Secondly, observe that we can write
\begin{align}
\Big| \frac{(\tilde{r}_1)_v}{\tilde{r}_1} - \frac{(\tilde{r}_2)_v}{\tilde{r}_2} \Big|   \leq \frac{1}{\tilde{r}_1} \Big|\left((\tilde{r}_1)_v - (\tilde{r}_2)_v \right) + \frac{(\tilde{r}_2)_v}{\tilde{r}_2} \left(\tilde{r}_2-\tilde{r_1}\right) \Big| \nonumber \\
\leq \rho^{-1} e^b  \left( \Big|\left((\tilde{r}_1)_v - (\tilde{r}_2)_v \right) + \frac{1}{2\rho} \left(\tilde{r}_1-\tilde{r_2}\right) \Big| + \Big| \frac{(\tilde{r}_2)_v}{\tilde{r}_2} + \frac{1}{2\rho} \Big| \cdot |\tilde{r}_2-\tilde{r}_1|\right) \nonumber \\
\leq \rho^{-2} e^b \Big|\rho \left((\tilde{r}_1)_v - (\tilde{r}_2)_v \right) + \frac{1}{2} \left(\tilde{r}_1-\tilde{r_2}\right) \Big| + e^b \cdot 3b e^{b} 
 \sup_{\Delta_\delta} \Big| \frac{\tilde{r}_1 - \tilde{r}_2}{\rho} \Big| \, ,\nonumber
\end{align}
where we have used Corollary \ref{cor:sc} in the last step. Inserting (\ref{im}) yields the result. The $u$-direction is proven analogously.
\end{proof}

Corollary \ref{cor:sc} allows us to establish the equivalence 
between the twisted derivatives defined by (\ref{htd2}) and (\ref{ttd}).
Indeed, the identity
 \[
 \tilde{\partial}_u \psi = \widehat{\partial}_u \psi + \psi \left(\frac{3}{2}-\kappa\right) \left(\frac{1}{2\rho} - \frac{\tilde{r}_u}{\tilde{r}}\right)
 \]
 immediately proves
\begin{Lemma} \label{lem:twisteq}
Let $\left(\tilde{r}, \varpi_N, \psi\right) \in \mathcal{B}_b \subset \mathcal{C}$. Then we have
\begin{align}
\frac{1}{C_b} \left[ \left(\hat{\partial}_u \psi\right)^2 + \psi^2 \right] \rho^{-2}  \leq \left[ \left(\tilde{\partial}_u \psi\right)^2 + \psi^2 \right] \tilde{r}^{-2} \leq C_b \left[ \left(\hat{\partial}_u \psi\right)^2 + \psi^2 \right] \rho^{-2} \, .
\end{align}
\end{Lemma}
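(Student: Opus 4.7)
The plan is to reduce the claim to two elementary comparisons: one between the two notions of twisted derivative, and one between the two weights $\rho^{-2}$ and $\tilde{r}^{-2}$.

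First, I would take the identity
\[
\tilde{\partial}_u \psi = \hat{\partial}_u \psi + \psi \left(\tfrac{3}{2}-\kappa\right)\!\left(\frac{1}{2\rho} - \frac{\tilde{r}_u}{\tilde{r}}\right)
\]
recorded just before the statement, square it, and apply $(x+y)^2 \leq 2x^2 + 2y^2$. The coefficient of the resulting $\psi^2$ term is $2(\tfrac{3}{2}-\kappa)^2\bigl|\tfrac{1}{2\rho}-\tfrac{\tilde{r}_u}{\tilde{r}}\bigr|^2$, which by Corollary \ref{cor:sc} is bounded by a constant $C_b$ depending only on $b$ (and $\kappa$). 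This yields
\[
(\tilde{\partial}_u \psi)^2 \leq 2(\hat{\partial}_u \psi)^2 + C_b \psi^2,
\]
and hence
\[
(\tilde{\partial}_u \psi)^2 + \psi^2 \leq C_b\bigl[(\hat{\partial}_u \psi)^2 + \psi^2\bigr].
\]
The identity is symmetric under interchanging $\tilde{\partial}_u$ and $\hat{\partial}_u$ (the difference term simply changes sign), so the same reasoning gives the reverse inequality.

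Next, I would handle the weights. Lemma \ref{lem:basicballbounds} gives $e^{-b} \leq 2\tilde{r}/(u-v) \leq e^b$, i.e.\ $\tilde{r}$ and $\rho = \tfrac{1}{2}(u-v)$ are comparable up to a factor of $e^{\pm b}$. Consequently $\rho^{-2}$ and $\tilde{r}^{-2}$ are equivalent weights, with constants absorbable into $C_b$. Combining the derivative bound with the weight equivalence and renaming constants yields the two-sided estimate of the Lemma.

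There is no real obstacle: the only input beyond the bracketed identity is the pointwise bound on $\tfrac{1}{2\rho} - \tilde{r}_u/\tilde{r}$ from Corollary \ref{cor:sc} and the comparability of $\rho$ and $\tilde{r}$ from Lemma \ref{lem:basicballbounds}, both of which are already in hand. The proof is essentially a one-line chain of inequalities, and the statement in the paper in fact remarks that the identity ``immediately proves'' the Lemma.
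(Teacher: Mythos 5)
Your proposal is correct and is exactly the argument the paper intends: the quoted identity plus the bound on $\frac{1}{2\rho}-\frac{\tilde{r}_u}{\tilde{r}}$ from Corollary \ref{cor:sc} (used symmetrically for both directions) and the comparability $e^{-b}\leq 2\tilde{r}/(u-v)\leq e^{b}$ from Lemma \ref{lem:basicballbounds} give the two-sided estimate after renaming constants. You have merely written out the elementary squaring step that the paper leaves implicit behind the word ``immediately''.
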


This Lemma will be useful, because the energy estimates will turn out to naturally twist with $\tilde{r}$. The Lemma guarantees that for the norm (\ref{c0h1}) twisting with $\rho$ or $\tilde{r}$ are equivalent.

\subsection{Map to the ball (Proof of Proposition \ref{prop:maptoball})} \label{sec:maptoball}

\subsubsection*{The radial bounds}

We first verify that the contraction map respects the boundary conditions required of $\rt$. To do so, note that the integrand in (\ref{req}) satisfies
\[
\Big| \left[ \frac{\Omega^2}{r^2} \left(...\right) \right] \Big| \leq C_b \cdot \tilde{r}^{\min(1,2-2\kappa)}
\]
and is hence integrable in $v$. Therefore $\widehat{\tilde{r}}|_{\scri} = 0$ on the boundary. Note also that $T \left(\widehat{\tilde{r}}\right) :=\left(\partial_u + \partial_v\right)\widehat{\tilde{r}}$ extends to zero on the boundary by the dominant convergence theorem. Moreover, clearly $\widehat{\tilde{r}}\left(u,u_0\right) = \overline{\tilde{r}}\left(u\right)$.

We now compute
\begin{align}
\widehat{\tilde{r}}_{uu} \left(u,v\right)  =  \overline{\tilde{r}}_{uu} \left(u\right) +  \int_{u_0}^v  dv^\prime \partial_u \left[\frac{\Omega^2}{r^2} \left(\frac{3 \varpi_N}{2 r^2} -\frac{1}{2r} + \frac{2 \pi  r \left(f \psi\right)^2}{f^2 l^2} \left(-a+\frac{3}{2}g\right)\right) \right] \left(u,v^\prime\right) \nonumber \, .
\end{align}
Writing $\frac{\Omega^2}{r^2} = - \frac{4\tilde{r}_u \tilde{r}_v}{\left(1-\mu\right) \tilde{r}^2}$ and using the properties of the element of the ball it is not hard to see that the integrand can be bounded pointwise by
\begin{align}
| \partial_u \left[... \right] | \leq C_b \cdot \tilde{r}^{\min\left(0,1-2\kappa\right)} \, ,
\end{align}
which is integrable for $0<\kappa<1$ and hence
\begin{align}
| \widehat{\tilde{r}}_{uu} \left(u,v\right)  -  \overline{\tilde{r}}_{uu} \left(u\right)| \leq \frac{C_b}{\min\left(1,2-2\kappa\right)} \delta^{\min\left(1,2-2\kappa\right)} \, ,
\end{align}
which means that for sufficiently small $\delta$
\[
| \widehat{\tilde{r}}_{uu} \left(u,v\right) | \leq \delta^\prime + \frac{b}{100} \, .
\]
Similarly,
\begin{align}
|\widehat{\tilde{r}}_{uv} \left(u,v\right)| \leq C_b \cdot \tilde{r}^{min\left(1,2-2\kappa\right)} \leq C_b \cdot \delta^{min\left(1,2-2\kappa\right)}
\end{align}
leads to 
\[
| \widehat{\tilde{r}}_{uv} \left(u,v\right) | \leq  \frac{b}{100} \, .
\]
The lower derivatives are also straightforward:
\begin{align}
| \widehat{\tilde{r}}_u \left(u,v\right) - \overline{\tilde{r}}_u \left(u\right) | \leq C_b \cdot \delta^{min\left(2,3-2\kappa\right)} \, ,
\end{align}
\begin{align}
| \widehat{\tilde{r}}_v \left(u,v\right) + \overline{\tilde{r}}_u \left(v\right) | \leq C_b \cdot \delta^{min\left(2,3-2\kappa\right)} \, ,
\end{align}
\begin{equation}
\begin{split}
\widehat{\tilde{r}} \leq \sup \overline{\tilde{r}}_u \left(u-v\right)  + C_b \cdot  \delta^{min \left(2,3-2\kappa\right)} \left(u-v\right) < \left(\frac{1}{2} + \delta^\prime + C_b \delta\right) \rho \, , \\
\widehat{\tilde{r}} \geq \inf \overline{\tilde{r}}_u \left(u-v\right)  - C_b \cdot  \delta^{min\left(2,3-2\kappa\right)} \left(u-v\right) \geq \left(\frac{1}{2} - \delta^\prime - C_b \delta\right) \rho \, ,
\end{split}
\end{equation}
which implies that for $\delta$ sufficiently small,
\[
\Big| \log \frac{\widehat{\tilde{r}}}{\rho} \Big| + | \log 2\widehat{\tilde{r}}_u |+ | \log \left(-2\widehat{\tilde{r}}_v\right) | < \frac{b}{100} \, .
\]
 Finally, note that indeed $T\left(\widehat{\tilde{r}}\right)= \left( \partial_u + \partial_v\right)\widehat{\tilde{r}}$ vanishes on the boundary $u=v$ and hence
\begin{align}
T\left(\widehat{\tilde{r}}\right) \left(u,v\right) &= 0 + \int_v^u du^\prime \left( \widehat{\tilde{r}}_{vu} +    \widehat{\tilde{r}}_{uu} \right) \left(u^\prime,v\right) \, ,
\end{align}
which using that the integrand is $\delta$ small by previous bounds leads to
\[
\Bigg| \frac{T\left(\widehat{\tilde{r}}\right)}{\rho} \Bigg| \leq C_b \cdot \delta^{min\left(1,2-2\kappa\right)} < \frac{b}{100} \, .
\]
In summary, for sufficiently small $\delta$ we indeed map back into the ball.

\subsubsection*{Estimates for $\psi$}
From the wave equation we derive 
\begin{align}
\frac{1}{2}\partial_u \left(f^2 r^2 \left(\partial_v \frac{\widehat{\psi}}{f}\right)^2 + r^2 \hat{\psi}^2\right) + \frac{1}{2}\partial_v \left(f^2 r^2 \left(\partial_u \frac{\widehat{\psi}}{f}\right)^2 + r^2 \hat{\psi}^2 \right) \nonumber \\
= -\frac{T\left(rf\right)}{rf} \cdot f r \left(\partial_v \frac{\widehat{\psi}}{f}\right) \cdot f r \left(\partial_u \frac{\widehat{\psi}}{f}\right) + \frac{T(r)}{r} r^2 \hat{\psi}^2 \nonumber \\
+\frac{\Omega^2}{4r^2} \cdot r^3 V\left(\psi,\varpi,r\right) \psi \cdot f r \left( \partial_v \frac{\widehat{\psi}}{f} + \partial_u \frac{\widehat{\psi}}{f}  \right) + r^2 \hat{\psi} f \left( \partial_v \frac{\widehat{\psi}}{f} + \partial_u \frac{\widehat{\psi}}{f} +\frac{T(f)}{f^2} \hat{\psi} \right) \nonumber
\end{align}
Integrating this over space-time and using that
\begin{align}
\Big|\frac{T\left(rf\right)}{rf} \Big| + \Big|\frac{T\left(r\right)}{r} \Big| \leq C_b
\end{align}
holds for elements in the ball, we can estimate the first two terms on the right hand side by
\begin{align}
\int_{\Delta} \text{second line} \leq C_b \cdot \delta \cdot \|\widehat{\psi}\|^2_{C^0 \underline{H}^1\left(\Delta\right)}
\end{align}
where we recall Lemma \ref{lem:twisteq} (ensuring the equivalence between twisting with $\rho$ and $\tilde{r}$ as far as the $H^1$-norm is concerned) and the third line by
\begin{align}
\int_{\Delta} \text{third line} \leq C_b \cdot \delta \cdot \|\widehat{\psi}\|^2_{C^0 \underline{H}^1\left(\Delta\right)} + \int_{\Delta} \left(\frac{\Omega^2}{r^2}\right)^2 V^2 r^6 \psi^2
\end{align}
and naively applying pointwise bounds
\begin{align} \label{enres}
\int_{\Delta} \left(\frac{\Omega^2}{r^2}\right)^2 V^2 r^6 \psi^2 \leq C_b \int_{\Delta} \rho^{2\cdot min(2,3-2\kappa)} \rho^{-6} \rho^{3-2\kappa} du dv \leq C_b \cdot \delta
\end{align}
The last step follows from our assumption $\kappa<\frac{2}{3}$, which implies $3-6\kappa>-1$ and makes the expression integrable. The $\delta$ is coming from the integration in the other direction.
To compute the boundary term on $\scri$  we recall the boundary 
condition
\begin{align} \label{byco}
\rho^{-\frac{1}{2}-\kappa} \left(\tilde{\partial}_{u} - \tilde{\partial}_v\right) \hat{\psi} + 2\beta \left(t\right) \rho^{-\frac{3}{2}+\kappa} \hat{\psi}=\gamma\left(t\right) \, .
\end{align}
Hence the boundary term on $\scri$ becomes\footnote{Notice that it suffices to assume that $\gamma \in H^{1-\kappa}$ to deal with the inhomogeneous term, which will be relevant when we later consider non-linear boundary conditions.}
\begin{align}
\int_{\scri} dt \frac{1}{2} r^2 f^2 \left(\partial_v \frac{\widehat{\psi}}{f}+\partial_u \frac{\widehat{\psi}}{f}\right) \left(\partial_v \frac{\widehat{\psi}}{f}-\partial_u \frac{\widehat{\psi}}{f}\right) \nonumber \\
= \int_{\scri} dt \frac{1}{2} r^2 \rho^{-1+2\kappa} (-\beta) f^2 \left(\partial_u + \partial_v \right) \left[ \frac{\widehat{\psi}}{f}\right]^2 + \int_{\scri} dt \frac{1}{2} r^2 f \left(\partial_v \frac{\widehat{\psi}}{f}+\partial_u \frac{\widehat{\psi}}{f}\right) \gamma\left(t\right) \rho^{\frac{1}{2}+\kappa}\nonumber \\
= \frac{1}{2} \beta \left(t=u_0\right) \Psi^2 - \frac{1}{2}\beta \left(t\right)\widehat{\psi}^2 \tilde{r}^{-2} \rho^{-1+2\kappa} \Big|_{\left(u_0+\delta,u_0+\delta\right)} + \int_{\scri} dt \frac{1}{2} T\left(  \tilde{r}^{-2} \rho^{-1+2\kappa}\beta f^2\right)  \left[ \frac{\widehat{\psi}}{f}\right]^2 \nonumber \\
-  \frac{1}{2}\Psi  \gamma \left(t=u_0\right)  + \frac{1}{2}\gamma \left(t\right)\cdot \widehat{\psi}  \tilde{r}^{-2} \rho^{1/2+\kappa} \Big|_{\left(u_0+\delta,u_0+\delta\right)} - \int_{\scri} dt \frac{1}{2} T\left(  \tilde{r}^{-2} \rho^{1/2+\kappa} \gamma\left(t \right)\right)  \hat{\psi} \, .
\nonumber
\end{align}
Since $T\left(\tilde{r}\right) \sim \tilde{r}$, the last term in both the third and the fourth line are easily seen to be controlled by $\delta \cdot \|\widehat{\psi}\|_{C^0\underline{H}^1}^2$ and $\delta \cdot \|\widehat{\psi}\|_{C^0\underline{H}^1}$ respectively, the $\delta$ coming from integration in $t$. Therefore,
\begin{align} 
\Big| \int_{\scri} dt \left( ... \right) \Big| \leq \frac{1}{2}|\beta\left(u_0+\delta\right)| \cdot
\sup_{\Delta_{\delta,u_0}} | \tilde{r}^{-2} \rho^{-1+2\kappa} \psi^2 - \Psi^2 |  + \frac{1}{2}\Psi^2  |\beta \left(u_0+\delta\right) -\beta\left(u_0\right)| \nonumber \\
+ \frac{1}{2} |\gamma\left(u_0+\delta\right)|\cdot \sup_{\Delta_{\delta,u_0}} |  \tilde{r}^{-2} \rho^{1/2+\kappa} \psi - \Psi|
+ \frac{1}{2} |\Psi| |\gamma\left(u_0+\delta\right) - \gamma\left(u_0\right)| +C_b \delta \cdot \|\widehat{\psi}\|_{C^0\underline{H}^1}^2 \, , \nonumber
 \end{align}
 of which the second term in both lines can be estimated by $C_{\Psi,\beta,g} \cdot \delta$, provided that $\beta$ and $g$ are $C^1$. For the terms involving the sup, we recall that $|\frac{\tilde{r}}{\rho}-1|$ is $\delta$-small in $\Delta_{\delta,u_0}$ (integrate $\partial_v \left(\frac{\tilde{r}}{\rho}-1\right)$ which is uniformly bounded by Corollary \ref{cor:sc} from initial data where it is $\delta^\prime$-small) and that we also have (cf.~(\ref{sty})),
\begin{align}
| \rho^{-\frac{3}{2}+\kappa} \widehat{\psi} (u,v) - \Psi | &\leq | \rho^{-\frac{3}{2}+\kappa} \widehat{\psi}(u,v) - \overline{\rho}^{-\frac{3}{2}+\kappa} \overline{\psi}(u) | +  | \overline{\rho}^{-\frac{3}{2}+\kappa} \overline{\psi} (u) -\Psi |   \nonumber \\
&\leq C_b \cdot \delta^\kappa \cdot \|\widehat{\psi}\|_{C^0 \underline{H}^1\left(\Delta\right)}  \, . \nonumber
\end{align}
Combining all of the above, for sufficiently small $\delta$ we obtain 
\begin{align} \label{fo1}
\|\widehat{\psi}\|^2_{C^0 \underline{H}^1\left(\Delta\right)} &\leq 4 \int_{u_0}^{u_0+\delta} dur^2 \left[ f^2  \left(\partial_u \frac{\widehat{\psi}}{f}\right)^2 + \widehat{\psi}^2\right] + C_{b,\beta,g,\Psi} \cdot \delta + C_b \cdot \delta^\kappa \cdot \|\widehat{\psi}\|_{C^0 \underline{H}^1\left(\Delta\right)} \nonumber \, .
\end{align}
Applying Cauchy's inequality to the last term we find
\begin{equation} \label{gnu}
\|\widehat{\psi}\|_{C^0 \underline{H}^1\left(\Delta\right)}  \leq 3 \left(\delta^\prime\right)^\frac{1}{2} + C_{b,\beta,g,\Psi} \cdot \delta^\kappa <  \frac{b}{100} \, 
\end{equation}
and of course also immediately
\[
| \rho^{\frac{3}{2}-\kappa} \widehat{\psi} (u,v) - \Psi | <  C_b \cdot \delta^\kappa \cdot \|\widehat{\psi}\|_{C^0 \underline{H}^1\left(\Delta\right)}  < \frac{b}{100} \, .
\]
Writing the wave equation as a transport equation, we can also retrieve the pointwise bound for the $u$-derivative: Starting from
\begin{align}
\partial_v \left(f r \left(\partial_u \frac{\widehat{\psi}}{f}\right)\right) = - \partial_u \left(rf\right) \left(\partial_v \frac{\widehat{\psi}}{f}\right) - \frac{\Omega^2}{4} r V \hat{\psi} 
\end{align}
we derive
\begin{align}
\Big | f r \left(\partial_u \frac{\widehat{\psi}}{f}\right) \left(u,v\right) \Big| &\leq \Big | f r \left(\partial_u \frac{\widehat{\psi}}{f}\right) \left(u,v_0\right) \Big| + C_b \cdot r^\frac{1}{2} \cdot \|\widehat{\psi}\|_{C^0 \underline{H}^1\left(\Delta\right)} + C_b \int_{v_0}^v r^{3-\frac{3}{2}+\kappa} V \left(|\Psi| + \frac{b}{100}\right) \nonumber \\
&\leq C_b \cdot C_{data} \cdot r^{\frac{1}{2}} \left(u,v_0\right) +  C_b \cdot \delta^\kappa \cdot r^\frac{1}{2} +C_b \left(|\Psi| + 1\right) r^{\frac{3}{2}+\kappa-\min\left(2,3-2\kappa\right)-1}
\end{align}
and since both $-\frac{5}{2}+3\kappa < \frac{1}{2}$ and $-\frac{3}{2}+\kappa< \frac{1}{2}$ holds for $0<\kappa<1$ we obtain after using that $r\left(u,v\right)>r\left(u,v_0\right)$ in $\Delta_{\delta,u_0}$, the bound
\begin{align}
\Big | f r^\frac{1}{2} \left(\partial_u \frac{\widehat{\psi}}{f}\right) \left(u,v\right) \Big| \leq C_{b} \cdot C_{data} + C_{b,\Psi} \cdot \delta^{\min(3-3\kappa, 2-\kappa)} \, .
\end{align}
Clearly
\[
 f r^\frac{1}{2} \left(\partial_u \frac{\widehat{\psi}}{f}\right) =  r^{\frac{1}{2}} \left(\hat{\partial}_u \widehat{\psi} \right) + r^{\frac{1}{2}} \widehat{\psi}\left[\frac{\rho}{\tilde{r}}\right]^{-\frac{3}{2}+\kappa} \partial_u \left(\left[\frac{\rho}{\tilde{r}}\right]^{\frac{3}{2}-\kappa}\right)
\]
and since Corollary \ref{cor:sc} establishes boundedness for the $u$-derivative of the round bracket in the last term, we finally obtain
\[
\Big| \rho^{-\frac{1}{2}+\frac{s}{4}} \left(\hat{\partial}_u \widehat{\psi} \right) \Big| \leq \left(C_b \cdot C_{data} + C_{b,\Psi} \cdot \delta^{\min(3-3\kappa, 2-\kappa)}\right) \delta^\frac{s}{4} < \frac{b}{100}\, ,
\]
for sufficiently small $\delta$ depending only on the initial data constant $\Psi$ and $b$. 

\subsubsection*{The renormalized mass}
Using Cauchy-Schwarz and basic properties of the weights following from Lemma \ref{lem:basicballbounds} it is not hard to see that for $\delta$ sufficiently small
\begin{align} 
|\widehat{\varpi}_N - M_N| \leq \delta^\prime + C_b \cdot \|\widehat{\psi}\|^2_{C^0 \underline{H}^1\left(\Delta\right)} < \frac{b}{100}\, .
\end{align}
Taking a $u$-derivative one establishes after a tedious computation\footnote{Here we only mention the most critical term arising in this computation which is estimated
\begin{align}
\int_{v_0}^v dv^\prime \Big | 8\pi r^2 \frac{r_u}{\Omega^2} \frac{\partial_v \left(f r\right)}{fr} \tilde{\partial}_v \hat{\psi} \tilde{\partial}_u \hat{\psi}\Big| \leq  C_b \|\hat{\psi}\|_{C^0 \underline{H}^1\left(\Delta\right)} \sqrt{ \int_{v_0}^v r^{3+\frac{s}{2}} dv^\prime } \leq C_b \cdot \delta \cdot \rho^{-1-\frac{s}{4}} \left(u,v\right)
\end{align}
providing the required smallness for (\ref{tgt}) after multiplication by $\rho^{1+s}\left(u,v\right)$.
} 
using the wave equation that
\begin{align} \label{tgt}
|\partial_u\widehat{\varpi}_N \left(u,v\right) | \leq \delta^\prime \tilde{r}^{-1-s} \left(u,v_0\right) + C_{b,\Psi} \tilde{r}^{-1-\frac{s}{4}} \left(u,v\right) \|\widehat{\psi}\|^2_{C^0 \underline{H}^1\left(\Delta\right)}  \, ,
\end{align}
which after multiplying by $\rho^{1+s}$ retrieves also
\[
|\rho^{1+s} \partial_u\widehat{\varpi}_N| < \frac{b}{100} \, .
\]

\subsection{Contraction property  (Proof of Proposition \ref{prop:conmap})}\label{sec:conmap}
Let $\left(\widehat{\tilde{r}}_1,\left(\widehat{\varpi}_N\right)_1,\widehat{\psi}_1\right)$ and  $\left(\widehat{\tilde{r}}_2, \left(\widehat{\varpi}_N\right)_2,\widehat{\psi}_2\right)$ be two points in $\mathcal{B}$. To establish the contraction property it suffices to prove
\begin{align} \label{guh}
d\left( \left(\widehat{\tilde{r}}_1, \left(\widehat{\varpi}_N\right)_1,\widehat{\psi}_1\right), \left(\widehat{\tilde{r}}_2,\left(\widehat{\varpi}_N\right)_2,\widehat{\psi}_2 \right)\right)  \leq \frac{1}{2}\cdot  d\left( \left({\tilde{r}}_1, \left({\varpi}_N\right)_1,{\psi}_1\right), \left(\tilde{r}_2,\left({\varpi}_N\right)_2,{\psi}_2 \right) \right)  \, .
\end{align}
We begin with a few decomposition formulae:
\begin{Lemma} \label{lem:cribo}
We have
\begin{align}
|r_2 - r_1|  \leq \frac{C_b}{\rho} \cdot d_{\tilde{r}}\left(\tilde{r}_1,\tilde{r}_2\right) \, ,
\end{align}
\begin{align}  \label{omdeco}
\left(\Omega_2\right)^2 -\left(\Omega_1\right)^2 \leq  \frac{C_b}{\rho}  \cdot  d\left( \left({\tilde{r}}_1, \left({\varpi}_N\right)_1,{\psi}_1\right), \left(\tilde{r}_2,\left({\varpi}_N\right)_2,{\psi}_2 \right) \right)  \, ,
\end{align}
\begin{align} \label{omdeco2}
\Big | \frac{(r_1)_u}{(\Omega_1)^2} - \frac{(r_2)_u}{(\Omega_2)^2} \Big| \leq C_b \cdot  d\left( \left({\tilde{r}}_1, \left({\varpi}_N\right)_1,{\psi}_1\right), \left(\tilde{r}_2,\left({\varpi}_N\right)_2,{\psi}_2 \right) \right)  \, .
\end{align}
\end{Lemma}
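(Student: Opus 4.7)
\textbf{Plan for Lemma \ref{lem:cribo}.} All three bounds are algebraic: one telescopes the relevant expression, estimates each factor in sup-norm using the ball bounds of Lemma \ref{lem:basicballbounds} and Corollary \ref{cor:sc}, and controls each differenced factor by the corresponding component of the distance $d$. Throughout I will use that $|e^x-1|\leq C|x|$ for $|x|$ bounded, which converts the logarithmic distances appearing in $d_{\tilde r}$ into pointwise differences.

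For (1), since $r=1/\tilde r$ we have $r_2-r_1 = (\tilde r_1-\tilde r_2)/(\tilde r_1 \tilde r_2)$. The denominator is bounded below by $e^{-2b}\rho^2/4$, and the numerator satisfies $|\tilde r_1-\tilde r_2| = |\tilde r_2||e^{\log(\tilde r_1/\tilde r_2)}-1|\leq C_b\,\rho\, \|\log(\tilde r_1/\tilde r_2)\|_{C^0} \leq C_b\,\rho\, d_{\tilde r}$. Combining these gives the claim.

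For (2), I would telescope the product in $\Omega^2 = -4r^4 \tilde r_u\tilde r_v/(1-\mu)$. Writing $\Omega^2 = A r^2$ with $A := -4\tilde r^{-2}\tilde r_u\tilde r_v/(1-\mu)$ bounded (since $\Omega^2/r^2\leq C_b$), the splitting $\Omega_2^2-\Omega_1^2 = (A_2-A_1)r_2^2 + A_1(r_2^2-r_1^2)$ reduces the problem to bounding $A_2-A_1$ and $r_2^2-r_1^2$. For the latter, the same trick as in (1) produces the desired control with an extra $1/\rho$ coming from the factor $r_1+r_2\lesssim 1/\rho$. For $A_2-A_1$ I telescope across $\tilde r$, $\tilde r_u$, $\tilde r_v$ and $1-\mu$; the $\tilde r$ and $\tilde r_u,\tilde r_v$ pieces are directly in $d_{\tilde r}$, while the difference of $1-\mu = 1 - 2\varpi/r + r^2/l^2$ is controlled by $d_{\tilde r}+d_\varpi+d_\psi$ after substituting $\varpi = \varpi_N + 2\pi g\, r^3\psi^2/l^2$ (and using the pointwise bounds $|\psi|\leq C_b\tilde r^{3/2-\kappa}$ of Lemma \ref{lem:basicballbounds}).

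For (3), I would exploit the clean algebraic identity
\[
\frac{r_u}{\Omega^2} \;=\; \frac{(1-\mu)\,\tilde r^{2}}{4\,\tilde r_v},
\]
obtained by substituting $r_u = -\tilde r_u/\tilde r^2$ and the definition of $\Omega^2$, so that $\tilde r_u$ cancels. The quantity $(1-\mu)\tilde r^{2} = \tilde r^{2} + l^{-2} - 2\varpi\tilde r^{3}$ has no singular weights: indeed $\varpi\tilde r^{3} = \varpi_N \tilde r^{3} + 2\pi g\, \psi^{2}/l^{2}$ is uniformly bounded in $\mathcal B_b$ by Lemma \ref{lem:basicballbounds}, and $\tilde r_v$ is bounded below by $e^{-b}/2$. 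Telescoping the difference then yields a bound in which each numerator piece is controlled by the corresponding component of $d$ (for the $\varpi_N$-piece one uses the sup part of $d_\varpi$, not the $\partial_u$-part with its $\rho^{1+s}$-weight), and the denominators give only $C_b$ constants. The main delicacy of the lemma is precisely this bookkeeping of $\rho$-weights in (2) and (3): one must verify at every stage that the positive weights accumulated from $r$, $\Omega^2$ and $(1-\mu)$ exactly match the negative weights one is willing to tolerate in the stated inequalities, with no stray factor of $1/\rho$ in (3).
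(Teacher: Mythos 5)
Your parts (1) and (3) are fine. Part (1) is exactly the paper's computation, with the conversion $|e^x-1|\leq C|x|$ made explicit. Part (3) is correct and is actually a neater route than the paper's (the paper deduces \eqref{omdeco2} from \eqref{omdeco}, calling it ``straightforward''): the identity $\frac{r_u}{\Omega^2}=\frac{(1-\mu)\tilde r^2}{4\tilde r_v}$ checks out, $\tilde r^2(1-\mu)=\tilde r^2-2\varpi_N\tilde r^3-\frac{4\pi g}{l^2}\psi^2+\frac{1}{l^2}$ is manifestly bounded with differences controlled by $d$, and $\tilde r_v$ is bounded away from zero on $\mathcal B_b$, so no singular weights appear and your bound does not rely on \eqref{omdeco} at all.

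Part (2), however, has a genuine gap: the splitting $\Omega_2^2-\Omega_1^2=(A_2-A_1)r_2^2+A_1(r_2^2-r_1^2)$ with each piece estimated separately cannot give the stated $\frac{C_b}{\rho}\,d$. Your own part (1) gives $|r_2-r_1|\leq\frac{C_b}{\rho}d_{\tilde r}$ and this is sharp; hence $|r_2^2-r_1^2|\leq(r_1+r_2)|r_2-r_1|$ is of size $\frac{d}{\rho^2}$, and since $A_1$ is only bounded (not small), $A_1(r_2^2-r_1^2)$ is genuinely of size $\frac{d}{\rho^2}$ — your remark that the factor $r_1+r_2\lesssim 1/\rho$ produces ``the desired control'' overlooks that (1) already costs one power of $\rho$. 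Likewise $A_2-A_1$ is generically only $O(d)$ (the differences of $\tilde r_u$, $\tilde r_v$ carry no extra power of $\rho$), so $(A_2-A_1)r_2^2$ is also $O(d/\rho^2)$. The point is that the two pieces cancel at top order: e.g.\ for $\tilde r_1=(1+\epsilon)\tilde r_2$ with the same $\varpi_N,\psi$, each piece is $\approx\pm\frac{2\epsilon l^2}{\tilde r_2^2}$ while the sum is $O(\epsilon)$, consistent with \eqref{omdeco}; separating $r^2$ from $1-\mu$ destroys precisely this cancellation, and your plan never recombines the pieces.

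The paper's proof is organised so that this cancellation is visible term by term: it writes $\Omega^2=-4\frac{\tilde r_u}{\tilde r}\frac{\tilde r_v}{\tilde r}\frac{r^2}{1-\mu}$ and telescopes over these three factors. The differences of $\frac{\tilde r_u}{\tilde r}$ and $\frac{\tilde r_v}{\tilde r}$ are bounded by $C_b\,d_{\tilde r}$ \emph{without} a $1/\rho$ loss — this is Corollary \ref{cor:diffss}, which uses integration from the boundary where $\rho\tilde r_v+\frac12\tilde r$ vanishes — so those two terms cost only the single $1/\rho$ coming from the remaining ratio factor. In the third term, $\frac{r^2}{1-\mu}$ is kept as one unit, and in the difference $\frac{r_2^2}{1-\mu_2}-\frac{r_1^2}{1-\mu_1}$ the top-order terms $r_1^2r_2^2/l^2$ in the numerator cancel identically, leaving a quantity of size $O(\rho^2 d)$ which, against the $\rho^{-2}$ from the two ratio factors, is merely $O(d)$. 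If you want to keep your $A\,r^2$ splitting you would have to exhibit the compensating $O(d/\rho^2)$ part of $(A_2-A_1)r_2^2$ and cancel it against $A_1(r_2^2-r_1^2)$ explicitly — at which point you have essentially reproduced the paper's grouping.
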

\begin{proof}
This follows from the computations:
\begin{align}
|r_2 - r_1| = \Big|\frac{1}{(\tilde{r}_2)}-\frac{1}{(\tilde{r}_1)}\Big| = \Big| \frac{1}{\tilde{r}_1 \tilde{r}_2} \left(\tilde{r}_1-\tilde{r}_2\right)\Big| \leq \frac{C_b}{\rho} \sup \Big | \frac{\tilde{r}_1-\tilde{r}_2}{\rho} \Big| 
\end{align}
and
\begin{align}
\left(\Omega_2\right)^2 -\left(\Omega_1\right)^2 = -4\frac{(\tilde{r}_2)_u}{\tilde{r}_2} \frac{(\tilde{r}_2)_v}{\tilde{r}_2} \frac{(r_2)^2}{1-\mu_2} + 4\frac{(\tilde{r}_1)_u}{\tilde{r}_1} \frac{(\tilde{r}_1)_v}{\tilde{r}_1} \frac{(r_1)^2}{1-\mu_1} \nonumber \\
= -4 \left( \frac{(\tilde{r}_2)_u}{\tilde{r}_2} - \frac{(\tilde{r}_1)_u}{\tilde{r}_1} \right) \frac{(\tilde{r}_2)_v}{\tilde{r}_2} \frac{(r_2)^2}{1-\mu_2} - 4 \frac{(\tilde{r}_1)_u}{\tilde{r}_1} \left( \frac{(\tilde{r}_2)_v}{\tilde{r}_2} - \frac{(\tilde{r}_1)_v}{\tilde{r}_1} \right) \frac{(r_2)^2}{1-\mu_2} \nonumber \\
-4 \frac{(\tilde{r}_1)_u}{\tilde{r}_1} \frac{(\tilde{r}_1)_u}{\tilde{r}_1} \left(\frac{(r_2)^2}{1-\mu_2} - \frac{(r_1)^2}{1-\mu_1} \right) \nonumber .
\end{align}
Indeed, by Corollary \ref{cor:diffss} and Lemma \ref{lem:basicballbounds} and exploiting a cancellation of the top order term in the last line, namely 
\[
\frac{(r_2)^2}{1-\mu_2} - \frac{(r_1)^2}{1-\mu_1}= \frac{  (1-\frac{2(\varpi_N)_1}{r_1}) (r_2)^2 - (1-\frac{2(\varpi_N)_2}{r_2}) (r_1)^2 -4\pi g \frac{(r_1)^2(r_2)^2}{l^2} \left((\psi_1)^2-(\psi_2)^2\right)}{(1-\mu_1)(1-\mu_2)} \, ,
\]
we obtain after further massaging (\ref{omdeco}). The estimate (\ref{omdeco2}) is then straightforward.
\end{proof}

Turning to the proof of (\ref{guh}), we first establish

\subsubsection*{The radial bound}
\[
d_{\tilde{r}} \left(\tilde{r}_1,\tilde{r}_2\right) \leq C_b \cdot \delta^{\min(1,2-2\kappa)} \cdot  d\left( \left({\tilde{r}}_1, \left({\varpi}_N\right)_1,{\psi}_1\right), \left(\tilde{r}_2,\left({\varpi}_N\right)_2,{\psi}_2 \right) \right)  \, ,
\]
which follows by decomposing
\begin{align} \label{decoi}
\frac{(\Omega_1)^2}{(r_1)^2} \left(\frac{3 (\varpi_N)_1}{2 (r_1)^2} -\frac{1}{2r_1} + \frac{2 \pi  r_1 (\psi_1)^2}{l^2} \left(-a+\frac{3}{2}g\right)\right) - \textrm{same with $1 \leftrightarrow 2$} = ...
\end{align}
as differences, of which we only discuss the most difficult term:
\begin{align} \label{mdte}
\Big| \frac{(\Omega_2)^2}{(r_2)} (\psi_2)^2 - \frac{(\Omega_1)^2}{(r_1)} (\psi_1)^2 \Big| = \Big| \frac{(\psi_2)^2}{r_2} \Big| | \left(\Omega_2\right)^2 -\left(\Omega_1\right)^2 | +\frac{(\Omega_1)^2}{r_2}  |\psi_2 + \psi_1| |\psi_2 -\psi_1| \nonumber \\
+ (\Omega_1)^2 (\psi_1)^2 |\tilde{r}_2 - \tilde{r}_1| \leq \rho^{\min\left(1,2-2\kappa\right)}  d\left( \left({\tilde{r}}_1, \left({\varpi}_N\right)_1,{\psi}_1\right), \left(\tilde{r}_2,\left({\varpi}_N\right)_2,{\psi}_2 \right) \right)  \, ,
\end{align}
which follows by inserting previous bounds on elements in the ball. This already etablishes
\begin{align}
| \hat{\tilde{r}}_1 - \hat{\tilde{r}}_2 | &\leq C_b \int_u^v du^\prime \int_{u_0}^v dv^\prime \rho^{\min\left(1,2-2\kappa\right)} d\left( \left({\tilde{r}}_1, {\varpi}_1,{\psi}_1\right), \left(\tilde{r}_2,{\varpi}_2,{\psi}_2 \right)\right) \nonumber \\
& \leq C_b \cdot \rho \cdot \delta \cdot \delta^{\min\left(1,2-2\kappa\right)} \cdot  d\left( \left({\tilde{r}}_1, \left({\varpi}_N\right)_1,{\psi}_1\right), \left(\tilde{r}_2,\left({\varpi}_N\right)_2,{\psi}_2 \right) \right) \, ,
\end{align}
and similarly
\begin{align}
| (\hat{\tilde{r}}_1)_u - (\hat{\tilde{r}}_2)_u | + | (\hat{\tilde{r}}_1)_v - (\hat{\tilde{r}}_2)_v | \nonumber \\
\leq C_b \cdot \delta \cdot \delta^{\min\left(1,2-2\kappa\right)} \cdot  d\left( \left({\tilde{r}}_1, \left({\varpi}_N\right)_1,{\psi}_1\right), \left(\tilde{r}_2,\left({\varpi}_N\right)_2,{\psi}_2 \right) \right) \, ,
\end{align}
as well as
\begin{align}
| (\hat{\tilde{r}}_1)_{uv} - (\hat{\tilde{r}}_2)_{uv} | \leq C_b \cdot \delta^{\min\left(1,2-2\kappa\right)} \cdot  d\left( \left({\tilde{r}}_1, \left({\varpi}_N\right)_1,{\psi}_1\right), \left(\tilde{r}_2,\left({\varpi}_N\right)_2,{\psi}_2 \right) \right)  \, .
\end{align}
To estimate the differences of $\widehat{\tilde{r}}_{uu}$, we need to differentiate the difference of the integrands (\ref{decoi}) in $u$. Schematically:
\[
(\widehat{\tilde{r}}_1)_{uu} - (\widehat{\tilde{r}}_2)_{uu} = \int_{v_0}^v dv^\prime \left[ \partial_u \left(\textrm{integrand}_1 - \textrm{integrand}_2\right) \right] \, ,
\]
and we need to estimate the integrand analogous to what we did in (\ref{mdte}) for the most difficult term. 
Again we omit this tedious computation and present only the most difficult term:
\begin{align}
\Big| \partial_u \left[ \frac{(\Omega_2)^2}{(r_2)} (\psi_2)^2 \right] - \partial_u \left[ \frac{(\Omega_1)^2}{(r_1)} (\psi_1)^2 \right] \Big| \nonumber \\= \Big| \partial_u \left[ 4(r_2)^3 \frac{-(\tilde{r}_2)_u (\tilde{r}_2)_v}{(1-\mu_2)} (\psi_2)^2 \right] - \partial_u \left[ 4({r}_1)^3 \frac{-(\tilde{r}_1)_u (\tilde{r}_1)_v}{(1-\mu_1)} (\psi_1)^2 \right] \Big| \, , \nonumber
\end{align}
from which we see (counting weights) that the $\tilde{r}_{uu}$ difference (and the $\tilde{r}_{uv}$ difference) will enter with a factor $\rho^{-3+2+3-2\kappa}$. When the derivative hits the $r$- or the $(1-\mu)$-terms we lose one power compared with the computation (\ref{mdte}) and hence obtain $\rho^{\min\left(0,1-2\kappa\right)}$ as a factor, which is integrable for $0<\kappa<1$ and provides the required smallness factor. Finally, when the derivative hits the $(\psi)^2$-term we twist the derivative to obtain a zeroth order term (which loses one power and is hence handled as previously) and
\begin{align}
\Big| \left[ \frac{(\Omega_2)^2}{(r_2)} (\psi_2)(\tilde{\partial}_u \psi_2) \right] - \left[ \frac{(\Omega_1)^2}{(r_1)} (\psi_1)(\tilde{\partial}_u \psi_1) \right] \Big| \nonumber \\
\leq C_b \cdot \rho^{-1+\frac{1}{2}-\frac{s}{4}+\frac{3}{2}-\kappa} \cdot |\rho^{-\frac{1}{2}+\frac{s}{4}} \left( \tilde{\partial}_u \psi_2 -  \tilde{\partial}_u \psi_1\right)| + C_b \cdot \rho^{\frac{3}{2}-\kappa+\frac{1}{2}-\frac{s}{4}-1} \cdot  |\rho^{-\frac{3}{2}+\kappa} \left( \psi_2 - \psi_1\right)| \nonumber \\
+ C_b\cdot \rho^{-2+\frac{3}{2}-\kappa+\frac{1}{2}-\frac{s}{4}} \cdot |\tilde{r}_2 - \tilde{r}_1| + C_b\cdot \rho^{-2+\frac{3}{2}-\kappa+\frac{1}{2}-\frac{s}{4}} \cdot  d\left( \left({\tilde{r}}_1, \left({\varpi}_N\right)_1,{\psi}_1\right), \left(\tilde{r}_2,\left({\varpi}_N\right)_2,{\psi}_2 \right) \right)  \nonumber \, ,
\end{align}
where (\ref{omdeco}) was used. We note that also here all $\rho$-weights are integrable. We conclude
\[
\Big| (\widehat{\tilde{r}}_1)_{uu} - (\widehat{\tilde{r}}_2)_{uu} \Big| \leq \frac{C_b}{2-2\kappa} \cdot \delta^{\min(1,2-2\kappa)} \cdot d\left( \left({\tilde{r}}_1, \left({\varpi}_N\right)_1,{\psi}_1\right), \left(\tilde{r}_2,\left({\varpi}_N\right)_2,{\psi}_2 \right) \right)  \, .
\]
Finally to get the $T\left(\widehat{\tilde{r}}_1- \widehat{\tilde{r}}_2\right)$ difference, we recall that this quantity vanishes on the boundary and therefore integrating from the boundary yields
\[
T\left(\widehat{\tilde{r}}_1- \widehat{\tilde{r}}_2\right) \left(u,v\right) = \int_u^v du^\prime \partial_u \left(\partial_u \left(\widehat{\tilde{r}}_1-\widehat{\tilde{r}}_2\right) + \partial_v \left(\widehat{\tilde{r}}_1-\widehat{\tilde{r}}_2\right) \right) \left(u^\prime,v\right) \, ,
\]
from which we obtain
\[
\Bigg| \frac{T\left(\widehat{\tilde{r}}_1- \widehat{\tilde{r}}_2\right)}{\rho}\Bigg|
 \leq C_b \cdot \delta^{\min(1,2-2\kappa)} \cdot  d\left( \left({\tilde{r}}_1, \left({\varpi}_N\right)_1,{\psi}_1\right), \left(\tilde{r}_2,\left({\varpi}_N\right)_2,{\psi}_2 \right) \right) 
\]
from our previous estimates for the $\widehat{\tilde{r}}_{uu}$ and the $\widehat{\tilde{r}}_{uv}$ difference.

\subsubsection*{Estimates for $\psi$}
The goal is to establish
\begin{align} \label{goaly2}
 d_{\psi} \left(\widehat{\psi}_1,\widehat{\psi}_2\right) \leq C_b \cdot \delta \cdot    d\left( \left({\tilde{r}}_1, \left({\varpi}_N\right)_1,{\psi}_1\right), \left(\tilde{r}_2,\left({\varpi}_N\right)_2,{\psi}_2 \right) \right)  \, .
\end{align}
From the wave equation we derive
\begin{align} \label{wed1}
\partial_u \left(f_1 r_1 \partial_v \left(\frac{\widehat{\psi}_1}{f_1} - \frac{\widehat{\psi}_2}{f_2}\right) \right)= - \partial_v \left(r_1 f_1\right) 
\partial_u \left(\frac{\widehat{\psi}_1}{f_1} - \frac{\widehat{\psi}_2}{f_2}\right)  + \mathcal{E}
\end{align}
and similarly
\begin{align} \label{wed2}
\partial_v \left(f_1 r_1 \partial_u \left(\frac{\widehat{\psi}_1}{f_1} - \frac{\widehat{\psi}_2}{f_2}\right) \right)= - \partial_u \left(r_1 f_1\right) 
\partial_v \left(\frac{\widehat{\psi}_1}{f_1} - \frac{\widehat{\psi}_2}{f_2}\right)  + \mathcal{E}
\end{align}
where
\begin{align}
\mathcal{E}:= \frac{(\Omega_2)^2 r_2 V_2 \psi_2}{4} \left(\frac{r_1 f_1}{r_2 f_2} - 1\right) - f_2 r_2 \partial_v \left(\frac{\widehat{\psi}_2}{f_2} \right)\partial_u \left(\frac{f_1 r_1}{f_2 r_2} \right) \nonumber \\
- f_2 r_2 \partial_u \left(\frac{\widehat{\psi}_2}{f_2} \right)\partial_v \left(\frac{f_1 r_1}{f_2 r_2} \right) - \frac{(\Omega_1)^2}{4} r_1 V_1 \psi_1 + \frac{(\Omega_2)^2}{4} r_2 V_2 \psi_2
\end{align}
is invariant under interchanging $u$ and $v$. 
Now note the identity\footnote{Observe also that the conformally coupled case $\kappa=\frac{1}{2}$ is special.}
\begin{align}
\partial_u \left(\frac{f_1 r_1}{f_2 r_2} \right)= \left(\frac{1}{2}-\kappa\right) \left(\frac{\tilde{r}_1}{\tilde{r}_2}\right)^{\frac{1}{2}-\kappa} \left[ \frac{(\tilde{r}_1)_u}{\tilde{r}_1} - \frac{(\tilde{r}_2)_u}{\tilde{r}_2} \right]
\end{align}
and the same identity replacing $u$ by $v$, which implies
\begin{align} \label{helpi1}
\Big| \partial_u \left(\frac{f_1 r_1}{f_2 r_2} \right) \Big| \leq C_b \cdot  d\left( \left({\tilde{r}}_1, \left({\varpi}_N\right)_1,{\psi}_1\right), \left(\tilde{r}_2,\left({\varpi}_N\right)_2,{\psi}_2 \right) \right)  \, .
\end{align}
On the other hand, one also has by integrating the $v$-version of (\ref{helpi1}) from data (where $r_1=r_2$)
\begin{align} \label{kaux}
\Big| \frac{r_1 f_1}{r_2 f_2} - 1 \Big| \leq  C_b \cdot \delta \cdot  d\left( \left({\tilde{r}}_1, \left({\varpi}_N\right)_1,{\psi}_1\right), \left(\tilde{r}_2,\left({\varpi}_N\right)_2,{\psi}_2 \right) \right) 
\end{align}
and 
\begin{align}
\Big| - \frac{(\Omega_1)^2}{4} r_1 V_1 \psi_1 + \frac{(\Omega_2)^2}{4} r_2 V_2 \psi_2 \Big| \leq C_b \cdot \rho^{\frac{3}{2}-\kappa-3+3-2\kappa} \cdot |\rho^{-\frac{3}{2}+\kappa} \left(\psi_1- \psi_2\right)| \nonumber \\
+ ... \leq  C_b \cdot \rho^{\frac{3}{2}-3\kappa}\cdot d\left( \left({\tilde{r}}_1, {\varpi}_1,{\psi}_1\right), \left(\tilde{r}_2,{\varpi}_2,{\psi}_2 \right) \right) \, .
\end{align}
In the energy estimate we need the \emph{square} of the $\rho$-weight to be integrable which yields $3-6\kappa>-1$ and hence the familiar $\kappa < \frac{2}{3}$. With the above estimates we indeed see
\begin{align}
\int_u^v du^\prime \int_{v_0}^v dv^\prime \mathcal{E}^2 \leq C_b \cdot \delta \cdot \left( d\left( \left({\tilde{r}}_1, \left({\varpi}_N\right)_1,{\psi}_1\right), \left(\tilde{r}_2,\left({\varpi}_N\right)_2,{\psi}_2 \right) \right)  \right)^2 \, ,
\end{align}
the $\delta$ arising from the fact that we integrate in both $u$ and $v$. The only thing missing to close the energy estimate associated with (\ref{wed1}) and (\ref{wed2}) is to estimate the boundary term. A calculation shows that one needs to control
\begin{align} \label{btde}
\int_{\scri} dt \left(...\right) = &-\int \frac{1}{2}(f_1)^2 (\tilde{r}_1)^{-2} \rho^{-1+2\kappa} \beta \cdot T \left(\frac{\widehat{\psi}_1}{f_1} - \frac{\widehat{\psi}_2}{f_2}\right)^2 \, .
\end{align}
Integrating by parts and treating the terms as in the original estimate in Section \ref{sec:maptoball} we control this term by $\delta^\kappa C_b \cdot d\left( \widehat{\psi}_1, \widehat{\psi}_2 \right)$.

In summary, the energy estimate associated with (\ref{wed1}) and (\ref{wed2}) furnishes the estimate
\begin{align} \label{basep1}
 \sup_{(u,v) \in \Delta} \int_v^u  \rho^{-2} \left[ \left(f_1 \partial_u \left( \frac{\widehat{\psi}_1}{f_1}-\frac{\widehat{\psi}_2}{f_2}\right)\right)^2 + (f_1)^2 \left( \frac{\widehat{\psi}_1}{f_1}-\frac{\widehat{\psi}_2}{f_2}\right)^2 \right] du^\prime \nonumber \\ +  \sup_{(u,v) \in \Delta} \int_{v_0}^v\rho^{-2} \left[  \left(f_1 \partial_v \left( \frac{\widehat{\psi}_1}{f_1}-\frac{\widehat{\psi}_2}{f_2}\right)\right)^2 + (f_1)^2 \left( \frac{\widehat{\psi}_1}{f_1}-\frac{\widehat{\psi}_2}{f_2}\right)^2 \right] dv^\prime \nonumber \\
 \leq C_b \cdot \delta \cdot  \left[  d\left( \left({\tilde{r}}_1, \left({\varpi}_N\right)_1,{\psi}_1\right), \left(\tilde{r}_2,\left({\varpi}_N\right)_2,{\psi}_2 \right) \right)  \right]^2 \, ,
\end{align}
which is almost what we need. To relate it to the honest $C^0\underline{H}^1$-energy we observe
\begin{align}
f_1 \partial_v \left( \frac{\widehat{\psi}_1}{f_1}-\frac{\widehat{\psi}_2}{f_2}\right) = f_1 \partial_v \left(\frac{\widehat{\psi}_1-\widehat{\psi}_2}{f_1} + \frac{\widehat{\psi}_2}{f_2} \left(\frac{f_2}{f_1}-1\right)\right) \nonumber \\
= \tilde{\partial}^{(1)}_v (\widehat{\psi}_1-\widehat{\psi}_2) + \left( \tilde{\partial}^{(2)}_v \widehat{\psi}_2\right) \left(1-\frac{f_1}{f_2}\right)+ f_1\frac{\widehat{\psi}_2}{f_2}\partial_v \left(\frac{f_2}{f_1}-1\right) 
\end{align}
To control the last two terms, we recall that the $H^1$-energy of $\widehat{\psi}_2$ was already established to be bounded. Therefore, the first of them picks up smallness through  (\ref{kaux}) while the second is estimated through (\ref{helpi1}) and picks up smallness via the pointwise bound on $\widehat{\psi}$. As a result we obtain
\begin{align}
\int_{v_0}^v dv^\prime \rho^{-2} |  \tilde{\partial}^{(1)}_v (\widehat{\psi}_1-\widehat{\psi}_2)|^2 \leq \int_{v_0}^v dv^\prime \rho^{-2}\left(f_1 \partial_v \left( \frac{\widehat{\psi}_1}{f_1}-\frac{\widehat{\psi}_2}{f_2}\right)\right)^2 \nonumber \\
+ C_b \cdot \delta^{\min\left(1,2-2\kappa\right)} \cdot  \left[ d\left( \left({\tilde{r}}_1, \left({\varpi}_N\right)_1,{\psi}_1\right), \left(\tilde{r}_2,\left({\varpi}_N\right)_2,{\psi}_2 \right) \right) \right]^2 \, .
\end{align}
Similarly one shows
\begin{align}
\int_{v_0}^v dv^\prime \rho^{-2} | \widehat{\psi}_1 - \widehat{\psi}_2|^2 \leq \int_{v_0}^v dv^\prime \rho^{-2} \Big| \widehat{\psi}_1 - \widehat{\psi}_2 \left(\frac{f_1}{f_2}\right)^2\Big|^2 + \int_{v_0}^v dv^\prime \rho^{-2} (\widehat{\psi}_2)^2 \left(1-\frac{f_1}{f_2}\right)^2 \nonumber \\
\leq \int_{v_0}^v dv^\prime \rho^{-2} \Big| \widehat{\psi}_1 - \widehat{\psi}_2 \left(\frac{f_1}{f_2}\right)^2\Big|^2 + C_b \cdot \delta^{\min\left(1,2-2\kappa\right)}   \left[  d\left( \left({\tilde{r}}_1, \left({\varpi}_N\right)_1,{\psi}_1\right), \left(\tilde{r}_2,\left({\varpi}_N\right)_2,{\psi}_2 \right) \right)  \right]^2 \, , \nonumber
\end{align}
the last step following from the $L^2$-energy of $\widehat{\psi}_2$ being small and (\ref{kaux}). Thus we have established that the left hand side of (\ref{basep1}) actually controls the energy of the difference twisted with $f_1$. Since Lemma \ref{lem:twisteq} establishes the equivalence of twisting with $\rho$ and $\tilde{r}$, we have our desired contraction property
\begin{align} 
 \|\widehat{\psi}_1 - \widehat{\psi}_2\|_{C^0 \underline{H}^1} \leq C_b \cdot \delta^{\min\left(1,2-2\kappa\right)}  \cdot    d\left( \left({\tilde{r}}_1, \left({\varpi}_N\right)_1,{\psi}_1\right), \left(\tilde{r}_2,\left({\varpi}_N\right)_2,{\psi}_2 \right) \right)  \, .\nonumber
\end{align}
The pointwise bounds for the differences $|\rho^{-\frac{3}{2}+\kappa}\left( \widehat{\psi}_1 - \widehat{\psi}_2\right)$ and  $| \rho^{-\frac{1}{2} + \frac{s}{4}}\partial_u \left(\widehat{\psi}_1 - \widehat{\psi}_2\right)|$ then follow as for the estimates for $\widehat{\psi}$ alone (note that the quantities vanish on $v=u_0$) establishing (\ref{goaly2}).

\subsubsection*{The renormalized mass}
The goal is to establish
\begin{align} \label{goaly3}
 d_{\varpi} \left((\varpi_N)_1,(\varpi_N)_2\right) \leq C_b \cdot \delta \cdot    d\left( \left({\tilde{r}}_1, \left({\varpi}_N\right)_1,{\psi}_1\right), \left(\tilde{r}_2,\left({\varpi}_N\right)_2,{\psi}_2 \right) \right) 
\end{align}
This is again a lengthy but straightforward computation. We focus on the most difficult term, which is clearly the first:
\begin{align} \label{iop}
A = \frac{-(r_1)_u}{(\Omega_1)^2} \left(r_1 f_1 \partial_v \left(\frac{\widehat{\psi}_1}{f_1}\right)\right)^2 - \frac{-(r_2)_u}{(\Omega_2)^2} \left(r_2 f_2 \partial_v \left(\frac{\widehat{\psi}_2}{f_2}\right)\right)^2 \nonumber \\
\leq 
 \left(r_1 f_1 \partial_v \left(\frac{\widehat{\psi}_1}{f_1}\right)\right)^2 \left[\frac{(r_2)_u}{(\Omega_2)^2}-\frac{(r_1)_u}{(\Omega_1)^2} \right] \nonumber \\
 + \frac{(r_2)_u}{(\Omega_2)^2} \left[r_1 f_1 \partial_v \left(\frac{\widehat{\psi}_1}{f_1}\right) + r_2 f_2 \partial_v \left(\frac{\widehat{\psi}_2}{f_2}\right) \right] \left[r_1 f_1 \partial_v \left(\frac{\widehat{\psi}_1}{f_1}\right) - r_2 f_2 \partial_v \left(\frac{\widehat{\psi}_2}{f_2}\right) \right] \, ,
\end{align}
where we have used Lemma \ref{lem:cribo}.
Since the twisted $H^1$-energy of both $\hat{\psi}_1$ and $\hat{\psi}_2$ was already shown to be $\delta$-small, one obtains
\[
\Big| \int_{v_0}^v A \left(u,v^\prime\right) dv^\prime \Big| \leq C_b \cdot  \delta \cdot  d\left( \left({\tilde{r}}_1, \left({\varpi}_N\right)_1,{\psi}_1\right), \left(\tilde{r}_2,\left({\varpi}_N\right)_2,{\psi}_2 \right) \right)  \, .
\]
The other terms are handled similarly establishing (\ref{goaly3}) for $\|\left(\widehat{\varpi}_N\right)_1-\left(\widehat{\varpi}_N\right)_2\|_{C^0}$ on the left hand side. To get the statement for $\|\rho^{1+s} \partial_u \left(\widehat{\varpi}_N\right)_1 - \rho^{1+s} \partial_u \left(\widehat{\varpi}_N\right)_2\|_{C^0}$ one differentiates the expression in the contraction map. We again concentrate on the most difficult term (as all other terms simply lose one power of $\tilde{r}$ which is overcome by multiplying with the $\rho^{1+s}$-weight in the end). The most difficult term in $\partial_u A$ is the one involving $u$ derivatives of $\widehat{\psi}$ as for this we only have the pointwise bound (losing $\rho^{-s/2}$) available. Finally,
\begin{align}
\int_{v_0}^v dv^\prime |\partial_u A \left(u,v^\prime\right)| \leq C_b \cdot \rho^{-1-\frac{s}{2}} \left(u,v\right) \cdot  d\left( \left({\tilde{r}}_1, \left({\varpi}_N\right)_1,{\psi}_1\right), \left(\tilde{r}_2,\left({\varpi}_N\right)_2,{\psi}_2 \right) \right)  \, ,
\end{align}
and smallness is obtained after multiplying by $\rho^{1+s}$.\footnote{Actually, most terms already have a $\delta$-smallness in them. However, the term arising from differentiating the last term in (\ref{iop})
\[
 \frac{(r_2)_u}{(\Omega_2)^2} \left[-\frac{\partial_v \left(f_1 r_1\right)}{f_1 r_1} r_1 \tilde{\partial}^{(1)}_u \widehat{\psi}_1 +\frac{\partial_v \left(f_2 r_2\right)}{f_2 r_2} r_2 \tilde{\partial}^{(2)}_u \widehat{\psi}_2\right] \left[r_1 f_1 \partial_v \left(\frac{\widehat{\psi}_1}{f_1}\right) - r_2 f_2 \partial_v \left(\frac{\widehat{\psi}_2}{f_2}\right) \right]
\]
does not.
}

\subsection{Propagating the constraints} \label{sec:constraintprop}

Now, thus far we have established that there exists a solution of  \eq{e1}, \eq{e2}, \eq{e3} with $(\rt, \varpi_N, \psi) \in \mathcal{B}_b$. This alone is insufficient to enable us to reconstruct a solution of Einstein's equations. We need to also establish that the constraint equation, \eq{e4} is satisfied in the region $\Delta$. We proceed by showing that we can propagate the constraint through $\Delta$ using a transport equation in the $v$-direction. 

We first wish to establish that the transport equation for $\varpi_N$ may be differentiated in $u$. We first rewrite \eq{e3}, simplifying the twisted derivatives and making use of the expression for $\Omega$  in terms of $\varpi_N, \psi, r$ to get:
\bea
\partial_v \varpi_N &=& -\frac{4 \pi}{r_v} \left( r \df_v \psi  + g r_v \psi\right)^2 \left( \frac{\varpi_N}{r}\right) + \frac{2 \pi}{r_v} \left( r \df_v \psi + g r_v \psi\right)^2 \nonumber \\&&\quad + \frac{2 \pi}{r_v} \frac{r^4}{l^2}\left(1-4\pi g \psi^2 \right) \left( \df_v \psi\right)^2\label{varpineqn} \\
&& \quad -16 \pi g^2 \frac{r^3}{l^2} \psi^3 \df_v \psi - 8 \pi g^3 r^2 r_v \psi^4\nonumber
\eea
Now, we claim that the right hand side may be differentiated in $u$, with the resultant expression belonging locally to $C^0_u L^1_{v}$. Since we know that $(\rt, \varpi_N, \psi) \in \mathcal{B}_b$, we have that $r \in C^1(\Delta^\circ_{\delta, u_0})$, $\varpi_N, \partial_u (\varpi_N), r_{uv}, \psi, \df_u \psi \in C^0(\Delta_{\delta, u_0})$, $\df_v \psi \in C^0_u L^2_v(\Delta_{\delta, u_0})$. Finally, we note that the wave equation may be written in the form
\begin{align} \label{weqtrans2}
\partial_u \left( r \df_v \psi \right) = - (1+g) r_v \df_u \psi -\frac{\Omega^2}{4} r V \psi 
\end{align}
whence we deduce that $\partial_u (r \df_v \psi) \in C^0$. Now, on differentiating \eq{varpineqn} with respect to $u$, the only terms which are not manifestly in $C^0$ (and hence $C^0_u L^1_v$) are those involving $\df_v \psi$. These are either of the form $f_1 (\df_v \psi)^2$ or $f_2 \df_v \psi$, where $f_i \in C^0$. The terms quadratic in $\df_v \psi$ are manifestly in $C^0_u L^1_v$ since $\df_v \psi \in C^0_u L^2_v$. The terms linear in $\df_v \psi$ can be dealt with by the Cauchy-Schwarz inequality. We thus conclude from the Lemma of the appendix:
\begin{Lemma} \label{varpiureg}
Suppose $(\rt, \varpi_N, \psi)\in \mathcal{B}_b$ solves \eq{e1}, \eq{e2}, \eq{e3}, with suitable boundary conditions imposed on $\scri$ together with initial data on $v=u_0$ according to Definition \ref{def:id}. Then the weak derivative $\partial_v \partial_u(\varpi_N)=\partial_u \partial_v(\varpi_N)$ exists and belongs locally to $C^0_u L^1_v$. 
\end{Lemma}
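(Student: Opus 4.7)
The plan is to establish the existence of the mixed weak derivative by directly differentiating the equation \eqref{varpineqn} (which is just a rewriting of \eqref{e3} after expanding the twisted derivatives and substituting $\Omega^2$) in $u$, classifying the resulting terms, and then invoking the Clairaut-type appendix lemma to upgrade this to the symmetric statement $\partial_v \partial_u \varpi_N = \partial_u \partial_v \varpi_N$.

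First, I would collect the regularity of the objects at hand. From $(\rt, \varpi_N, \psi)\in \mathcal{B}_b$ and Lemma \ref{lem:basicballbounds} we have $r\in C^1(\Delta^\circ)$ with $r_v$ bounded away from zero, $\varpi_N, \partial_u \varpi_N, r_{uv}, r_{uu}, \psi, \hat{\partial}_u \psi\in C^0$, and via Lemma \ref{lem:twisteq} the analogous facts for $\tilde{\partial}$-twisted derivatives. The one object with worse transverse regularity is $\df_v \psi$, which is only known to lie in $C^0_u L^2_v$ because $\psi\in C^{0+}_\psi \underline{H}^1$. Since on the bounded $v$-interval a $C^0_u L^2_v$ function automatically embeds into $C^0_u L^1_v$, Cauchy--Schwarz will let us handle any term linear in $\df_v \psi$.

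The critical input is the wave equation in its transport form \eqref{weqtrans2}: this shows that even though $\partial_u \df_v \psi$ is not controlled individually, the combination $\partial_u(r \df_v \psi) = -(1+g)r_v \df_u \psi - \tfrac{\Omega^2}{4} r V \psi$ lies in $C^0$. The right-hand side of \eqref{varpineqn} is built precisely so that every appearance of $\df_v \psi$ is already accompanied by a factor of $r$, either inside the square $(r\df_v\psi + gr_v\psi)^2$ or as $r^3\psi^3 \df_v\psi$, so that when we apply $\partial_u$ we can always pair the derivative with the $r$, never with the bare $\df_v\psi$.

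With this organization in place, formally differentiating \eqref{varpineqn} in $u$ produces three kinds of terms: (i) continuous ones, involving only $r, r_v, r_{uv}, \varpi_N, \partial_u\varpi_N, \psi, \df_u\psi$; (ii) quadratic expressions in $\df_v\psi$ of the form $h_1 (\df_v\psi)^2$ with $h_1\in C^0$, which lie in $C^0_u L^1_v$ by the $C^0_u L^2_v$ membership of $\df_v\psi$; and (iii) linear expressions $h_2\, \df_v\psi$ with $h_2\in C^0$, which also lie in $C^0_u L^1_v$ by Cauchy--Schwarz on the bounded interval. Thus $\partial_u$ of the right-hand side is an element of $C^0_u L^1_v$, and the appendix lemma (an elementary Fubini/fundamental-theorem-of-calculus argument for functions whose formal $u$-derivative satisfies such regularity) yields both the existence of $\partial_u \partial_v \varpi_N$ and the equality with $\partial_v \partial_u \varpi_N$ in the weak sense.

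The principal obstacle is purely technical: one must never differentiate $\df_v\psi$ on its own, since at the $H^1$-regularity level this would require second derivatives of $\psi$ that we do not control. The resolution is the algebraic observation that the wave equation controls $\partial_u(r\df_v\psi)$ rather than $\partial_u\df_v\psi$, together with the fact that \eqref{varpineqn} was derived by exactly the renormalization that packages $\df_v\psi$ with a factor of $r$. Careful bookkeeping of these pairings, and the use of the $L^2$-to-$L^1$ embedding for the linear-in-$\df_v\psi$ remainders, is all that is required.
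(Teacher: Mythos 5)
Your proposal is correct and follows essentially the same route as the paper: rewrite \eq{e3} as \eq{varpineqn}, use the transport form \eq{weqtrans2} of the wave equation so that only the combination $\partial_u(r\tilde{\partial}_v\psi)\in C^0$ is ever needed, classify the remaining terms as continuous, quadratic in $\tilde{\partial}_v\psi$ (handled by $\tilde{\partial}_v\psi\in C^0_u L^2_v$), or linear in $\tilde{\partial}_v\psi$ (handled by Cauchy--Schwarz), and conclude via the appendix lemma. No substantive difference from the paper's argument.
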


Having established that we can differentiate equation \eq{e3} with respect to $u$, we can show
\begin{Corollary}\label{constcor}
Under the assumptions of Lemma \ref{varpiureg}, the constraint equation \eq{e4} holds on $\Delta_{\delta, u_0}$.
\end{Corollary}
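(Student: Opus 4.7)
The plan is to introduce the constraint residual
\[
\mathcal{C}(u,v) \;:=\; \partial_u \varpi_N(u,v) - \mathcal{R}_u(u,v),
\]
where $\mathcal{R}_u$ denotes the right-hand side of (\ref{e4}). By construction of the initial data via (\ref{varpied}) we have $\mathcal{C}(u, u_0)\equiv 0$. Since Lemma~\ref{varpiureg} ensures that $\mathcal{C}\in C^0_u L^1_v$ and supplies the identity $\partial_v \partial_u \varpi_N = \partial_u \partial_v \varpi_N$, we may use (\ref{e3}) to obtain
\[
\partial_v \mathcal{C} \;=\; \partial_u \mathcal{R}_v - \partial_v \mathcal{R}_u.
\]
The task thus reduces to proving, on any weak solution of (\ref{e1}), (\ref{e2}), (\ref{e3}), that the right-hand side depends linearly on $\mathcal{C}$ itself with a locally bounded coefficient; a Gronwall estimate in $v$ starting from the trivial datum $\mathcal{C}(\cdot,u_0)=0$ then forces $\mathcal{C}\equiv 0$ on $\Delta_{\delta,u_0}$, which is (\ref{e4}).

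To establish such a bound, I would expand $\partial_u \mathcal{R}_v$ and $\partial_v \mathcal{R}_u$ term by term and eliminate all mixed-order quantities using the available equations: $r_{uv}$ via (\ref{e1}) through $r_{uv} = -r^2 \tilde{r}_{uv} + 2 r r_u r_v$; the mixed derivative $\partial_u\partial_v \psi$, which arises from $\partial_v(f\partial_u (\psi/f))$ and its symmetric counterpart, via the transport form (\ref{weqtrans2}) of the wave equation; and $\partial_v \varpi_N$ via (\ref{e3}). A key structural observation is that upon differentiating the first term of $\mathcal{R}_u$ in $v$ (resp.\ of $\mathcal{R}_v$ in $u$), the factor $r_v/\Omega^2$ (resp.\ $r_u/\Omega^2$) appears differentiated along the same null direction, and the resulting combinations $\partial_v(r_v/\Omega^2)$ and $\partial_u(r_u/\Omega^2)$ are precisely (\ref{EKG2}), (\ref{EKG1}); the first of these is a consequence of (\ref{e3}), (\ref{e1}), and the definition (\ref{hawkdef}) as established in the proof of the second Lemma of Section~2, while the second equals (up to rewriting in $\varpi_N$) the very constraint $\mathcal{C}=0$. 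The only remaining occurrence of $\partial_u\varpi_N$ comes from differentiating the $\varpi_N$-factors in $\mathcal{R}_v$, and upon substituting $\partial_u\varpi_N = \mathcal{R}_u + \mathcal{C}$ produces a bounded multiple of $\mathcal{C}$ as required.

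The main obstacle is the combinatorial bulk of this verification, since the renormalization introduces cubic and quartic $\psi$-corrections (the last two terms of (\ref{e3}) and (\ref{e4})) that are absent in the classical case. The cleanest route is to reintroduce the unrenormalized Hawking mass via $\varpi = \varpi_N + 2\pi g r^3\psi^2/l^2$: direct calculation shows that the cubic and quartic terms in $\mathcal{R}_u$ and $\mathcal{R}_v$ are precisely those generated by applying $\partial_u$, $\partial_v$ to this identity and invoking (\ref{e2}), so that once everything is rewritten in terms of $\varpi$ the desired vanishing collapses onto the classical compatibility of (\ref{hawkueqn}), (\ref{hawkveqn}) with (\ref{EKG3}), already exploited in the proof of the second Lemma of Section~2 where (\ref{EKG1}), (\ref{EKG2}) were shown to follow from (\ref{EKG3}), (\ref{hawkueqn}), (\ref{hawkveqn}) and (\ref{hawkdef}). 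Combined with the Gronwall argument in $v$ and the initial vanishing of $\mathcal{C}$, this establishes (\ref{e4}) on $\Delta_{\delta,u_0}$.
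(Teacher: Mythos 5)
Your proposal is correct and follows essentially the same route as the paper: the paper defines the constraint residual $\omega$ (your $-\mathcal{C}$, written out explicitly), uses Lemma \ref{varpiureg} to justify differentiating \eq{e3} in $u$, shows by a tedious but straightforward calculation using \eq{e1}, \eq{e2}, \eq{e3} that the residual satisfies the homogeneous linear transport equation $\partial_v \omega = -\frac{4\pi \rt}{\rt_u}\bigl(\df_v \psi + g \frac{r_v}{r} \psi\bigr)^2 \omega$, and then concludes $\omega \equiv 0$ from the transport Lemma \ref{translem} of the appendix together with the vanishing of the residual on the data, exactly as in your Gronwall step. One small correction: the coefficient in this transport equation is not locally bounded but only locally in $C^0_u L^1_v$, since it involves $(\df_v\psi)^2$ with $\df_v\psi$ merely in $C^0_u L^2_v$; this is precisely the regularity the appendix lemma is designed for, so your argument goes through once the Gronwall step is phrased with an integrable rather than bounded coefficient.
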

\begin{proof}
We rewrite the constraint equation as 
\bean
0=\omega&:=&-\partial_u \varpi_N  -\frac{4 \pi}{r_u} \left( r \df_u \psi  + g r_u \psi\right)^2 \left( \frac{\varpi_N}{r}\right) + \frac{2 \pi}{r_u} \left( r \df_u \psi + g r_u \psi\right)^2 \\&&\quad + \frac{2 \pi}{r_u} \frac{r^4}{l^2}\left(1-4\pi g \psi^2 \right) \left( \df_u \psi\right)^2 \\
&& \quad -16 \pi g^2 \frac{r^3}{l^2} \psi^3 \df_u \psi - 8 \pi g^3 r^2 r_u \psi^4 
\eean
We note that under the assumptions, $\omega\in C^0$. Differentiating \eq{e3} with respect to $u$, which we can do by the previous Lemma, and making use of the equations satisfied by $\rt, \psi, \varpi_N$ it is a matter of (tedious)  calculation to show that $\omega$ satisfies
\be
\partial_v \omega = -\frac{4\pi \rt}{\rt_u}\left(\df_v \psi + g \frac{r_v}{r} \psi\right)^2  \omega.
\ee
Now, since the coefficient on the right hand side is locally in $C^0_uL^1_v$, we conclude immediately that $\omega = 0$ on $\Delta_{\delta, u_0}$ (see again the Lemma in the appendix). 
\end{proof}

\section{Generalizations} \label{sec:generalizations}

\subsection{Removing the restriction $\kappa<\frac{2}{3}$} \label{sec:kap}
In this section we provide a sketch of how to extend our results to the entire range $0<\kappa<1$. Let $\kappa \geq \frac{2}{3}$. The first step is to observe that the only non-integrable term in (\ref{e3}) and (\ref{e4}) (cf.~Section \ref{sec:restrict}) can be rewritten as a boundary term and an integrable term
\begin{align} \label{hamop1}
-8\pi^2 g^3 \frac{r^2}{l^2} r_v \psi^4 = \frac{1}{1-\frac{4}{3}g} \left[ -\partial_v \left(\frac{8}{3}\pi^2 \frac{g^3}{l^2} r^3 \psi^4\right) + \frac{32}{3} \pi^2 \frac{g^3}{l^2} r^3 \psi^3 \tilde{\partial}_v \psi \right]  \ \ \ \ \ \textrm{for $\kappa \neq \frac{3}{4}$}
\end{align}
\begin{align} \label{hamop2}
-8\pi^2 g^3 \frac{r^2}{l^2} r_v \psi^4 = -\partial_v \left(8\pi^2 g^3 \frac{1}{l^2}\psi^4 r^3 \log r \right) + 32\pi^2 g^3 \frac{1}{l^2}\psi^4 r^3 \tilde{\partial}_v \psi \log r \ \ \ \ \ \ \textrm{for $\kappa = \frac{3}{4}$}
\end{align}
which allows to define a ``doubly renormalized" Hawking mass
\begin{align}
\varpi_{ND} &=  \varpi_N + \frac{8}{3} \pi^2\frac{g^3}{1-\frac{4}{3}g}  \frac{r^3}{l^2} \psi^4  \ \ \ \ \ \textrm{for $\kappa \neq \frac{3}{4}$} \, , \\
\varpi_{ND} &=  \varpi_N + 8\pi^2 g^3 \frac{r^3}{l^2}\psi^4 \log r  \ \ \ \ \ \ \textrm{for $\kappa \neq \frac{3}{4}$} \, ,
\end{align}
which is expected to be finite for all $0<\kappa<1$. In the second step one reformulates the entire system (\ref{e3})-(\ref{e1}) as equations for the variables $\left(\tilde{r},\varpi_{ND},\psi\right)$ and sets up the analogue of the contraction map. Here we recall that the only other occasion where a restriction on $\kappa$ entered was in the energy estimate for $\psi$, cf.~(\ref{enres}). To remove that obstruction, we observe that it arose from the term proportional to $\psi^2$ in the potential (\ref{e2b}), which did not decay sufficiently strongly (i.e.~at least like $r^{-2}$) for $\kappa<\frac{2}{3}$. However, one can easily see (formally) that this term will enter as an (infinite\footnote{\label{ft:are} For $\kappa \geq 3/4$ the divergent boundary term needs to be removed by a further renormalization of $\psi$, as done for the Hawking mass in (\ref{hamop1}) and (\ref{hamop2}).} for $\kappa\geq \frac{3}{4}$) \emph{boundary term} in the energy estimate for the wave equation.  Therefore, assuming a well-posedness theorem for the non-linear equation $\Box_g \psi = \psi^3$ on asymptotically AdS spacetimes (which is strongly suggested by the a-priori energy estimate that can be derived for this equation) we can replace (\ref{oldpsi}) in the contraction map by defining $\widehat{\psi}$ as the unique solution of 
\begin{align}
-\tilde{\nabla}^\dagger_\mu \tilde{\nabla}^\mu \widehat{\psi} - \frac{\widehat{\psi}^3}{l^2}\left(\kappa-\frac{3}{2}\right) \left[8\pi a +4\pi \left(\kappa-\frac{3}{2}\right)^2 \right]  - V_{reg}\left(\psi,\varpi, r\right) \psi = 0 
\end{align}
with boundary condition $
\rho^{-\frac{1}{2}-\kappa} \left(\tilde{\partial}_{u} - \tilde{\partial}_v\right) \hat{\psi} + 2\beta \left(t\right) \rho^{-\frac{3}{2}+\kappa} \hat{\psi}=\gamma\left(t\right)$. Note that the $\hat{\psi}^3$ term has the correct sign to appear as a positive term in the energy estimate (in case that $\kappa \geq 3/4$ that term is however divergent and a further renormalization is needed, cf.~footnote \ref{ft:are}). Now the potential $V_{reg}$ (which is the potential $V$ of (\ref{e2b}) minus its ``divergent" part) is regular for all $0<\kappa<1$ and the contraction property is established as before using the non-linear well-posedness theory for $\Box_g \psi = \psi^3$ on a fixed aAdS background. It may be that $\Box_g \psi = \psi^3$ is well posed only at a higher level of regularity, in which case one should work at the $H^2$-level, as in \cite{Holzegel:2011qk}.

\subsection{Nonlinear potentials} \label{sec:nonlinear}
Examining the proof of the main theorem, we see that the only properties of the function $V$ we use are an $L^2$-boundedness condition to ensure that we map into the ball, together with a Lipschitz condition to ensure the map contracts. Thus we can readily verify that the above theorem generalizes to non-linear scalar fields with energy momentum tensor
\[
T_{\mu \nu} = \partial_\mu \psi \partial_\nu \psi - \frac{1}{2} g_{\mu \nu} \left[ \left(\partial \psi \right)^2 + \frac{2a}{l^2}\psi^2+W_0\left(\psi\right)\right] 
\]
provided $W_0\left(\psi\right)$ satisfies:
\begin{enumerate}[i)]
\item For any $(\rt, \varpi_N, \psi) \in \mathcal{B}_b$  we have
\be
\int_\Delta \left[\Omega^2 r W_0'(\psi)\right]^2 \leq C_b \delta^\varepsilon
\ee
for some $\varepsilon>0$.
\item For any $(\rt_i, (\varpi_N)_i, \psi_i) \in \mathcal{B}_b$ we have
\be
\int_\Delta \left[\Omega_1^2 r_1 W_0'(\psi_1)-\Omega_2^2 r_2 W_0'(\psi_2)\right]^2 \leq C_b \delta^{\varepsilon'} d((\rt_1, (\varpi_N)_1), \psi_1), (\rt_2, (\varpi_N)_2), \psi_2))
\ee
for some $\varepsilon'>0$. 
\end{enumerate}

 This is of interest in applications, see for example \cite{Gonzalez:arXiv1309.2161} where a potential corresponding to 
 \be
 W_0(\psi) = -\frac{1}{l^2} \left(\cosh \sqrt{2} \psi -1 - \psi^2 \right)+ K \left[6 \sinh \sqrt{2}\psi - 2 \sqrt{2} \psi (2+\cosh\sqrt{2} \psi)\right],
 \ee
 is considered. This potential satisfies $i), ii)$ above provided that $\kappa<\frac{2}{3}$, however, as for the minimally coupled case, we expect this is merely a technical restriction and that the result could be improved to the whole range (in \cite{Gonzalez:arXiv1309.2161}, $\kappa = \frac{1}{2}$). For related work where the metric is assumed to have hyperbolic rather than spherical symmetry, see \cite{Martinez:2004nb, Kolyvaris:2009pc}. This potential with $K=0, \kappa=\frac{1}{2}$ is also considered in \cite{Hertog:2004dr}, where it comes from $\mathcal{N} = 8, D = 4$ gauged supergravity (the massless sector of the compacti�cation of $D = 11$ supergravity on $S^7$) after truncation to an abelian $U(1)^4$ sector.
 
 We note that including scalar fields with several components should also represent a straightforward generalisation of our proof.

\subsection{Non-linear boundary conditions} \label{sec:nlbc}
With a minor modification of the proof, the above theorem also generalizes to certain non-linear (and in principle non-local) boundary conditions. In particular we can consider boundary conditions of the form
\begin{align} \label{bcpsinl}
\rho^{-\frac{1}{2}-\kappa} \left(\tilde{\partial}_u - \tilde{\partial}_v \right) \psi + 2\beta\left(t\right) \rho^{-\frac{3}{2} + \kappa} \psi = G\left[ \rho^{-\frac{3}{2} + \kappa} \psi\right]  \ \ \  \textrm{on $\scri$}
\end{align}
where $G:  H^{\kappa}(\scri) \to H^{1-\kappa}(\scri)$ satisfies the Lipschitz condition
\be
\norm{G[p_1]-G[p_2]}{H^{1-\kappa}(\scri)} \leq K \norm{p_1-p_2}{H^\kappa(\scri)}
\ee
with $K<1$ for all $p_i \in H^\kappa(\scri)$. In the case $\kappa > \frac{1}{2}$, if we take $G[p] = F(p, t)$ with $F:\R\times \scri \to \R$ assumed to be $C^1_{loc.}$, we may arrange that this condition is satisfied by taking $\delta$ sufficiently small. 

In the case $\kappa<\frac{1}{2}$, we appear to only be able to establish well posedness for non-linear boundary conditions which are also non-local. The reason for this is that to have a solution in the energy class for the linear wave equation with inhomogeneous Robin conditions we require (in the absence of further structure) that the inhomogeneity be at least $H^{1-\kappa}$, however the trace theorem only guarantees a trace in $H^\kappa$.

Nonlinear boundary conditions have been considered, for example in \cite{Hertog:2004dr,Hertog:2004ns}. The conditions considered in these papers are of the form (in 3+1 dimensions)
\begin{align} \label{bcpsinl}
\rho^{-\frac{1}{2}-\kappa} \left(\tilde{\partial}_u - \tilde{\partial}_v \right) \psi =k \left[ \rho^{-\frac{3}{2} + \kappa} \psi\right]^{\frac{3 + 2 \kappa}{3 - 2 \kappa}}  \ \ \  \textrm{on $\scri$}
\end{align}
Our results extend to these boundary conditions for $\kappa>\frac{1}{2}$, as well as the case $\kappa=\frac{1}{2}$ provided a smallness assumption is made on the data at infinity.

\section{Improving the regularity: Proof of Theorem \ref{highregthm} }\label{sec:improreg}

Having established that we can always construct a unique weak solution to the renormalized \EKG system of equations subject to appropriate boundary conditions, we wish now to demonstrate that higher regularity is in fact propagated by the equations. Our approach to this will be to show that the contraction map we constructed in \S \ref{contractionsec} in fact respects a certain subspace of $\mathcal{B}_b$ which consist of functions with better regularity than a generic element of $\mathcal{B}_b$. In essence we establish that by commuting the contraction map with the vector field\footnote{It will be convenient also to define $\tilde{T}\psi = \rt^{-\frac{3}{2}+\kappa}T(\rt^{\frac{3}{2}-\kappa} \psi)$ as the twisted $T$-derivative of $\psi$.} $T = \partial_u + \partial_v$ we preserve much of the structure. As a result, the subspace of elements of $\mathcal{B}_b$ whose $T-$derivative also belongs to a ball in the metric space $\mathcal{C}$ is preserved by the contraction map. We first give more stringent conditions on the initial data which guarantee that they represent the restriction to the initial data of a more regular solution to our equations.

\subsection{Constructing higher regularity initial data}
\subsubsection*{Motivation}
In order to construct solutions with higher regularity, we will of course need to assume better regularity for the initial data. Before we do so, we motivate the assumptions we make by recalling some facts from \cite{Warnick:2012fi}. For a solution $\psi$ of the Klein-Gordon equation on a fixed asymptotically AdS background, at the $H^2$ level one finds that $\psi$ should have an expansion near $\scri$ of the form:
\be
\psi=  \rho^{\frac{3}{2}-\kappa}\psi^- +  \rho^{\frac{3}{2}+\kappa} \psi^+  + \O{\rho^{\frac{5}{2}}}
\ee
where $\psi^\pm$ are functions on $\scri$, and we have $\psi^- \in H^1(\scri)$, $\psi^+ \in L^2(\scri)$. Moreover, we have
\be
T\psi =  \rho^{\frac{3}{2}-\kappa}T\psi^-+ \O{\rho^{\frac{3}{2}}}
\ee
for any vector field $T$, tangent to $\scri$ and
\be
\rho^{\frac{3}{2} - \kappa} \partial_\rho \left( \frac{\psi}{\rho^{\frac{3}{2} - \kappa}}\right) = 2\kappa \rho^{\frac{1}{2}+\kappa} \psi^+  + \O{\rho^{\frac{3}{2}}}
\ee
for the twisted derivative normal to the boundary. As a result, we expect that the null derivatives of an $H^2$ solution to the Klein-Gordon equation on an asymptotically AdS background should have expansions:
\bean
2\rho^{\frac{3}{2} - \kappa} \partial_u \left( \frac{\psi}{\rho^{\frac{3}{2} - \kappa}}\right)  &=&  \rho^{\frac{3}{2}-\kappa}\dot{\uppsi} -  \rho^{\frac{1}{2}+\kappa} \uppsi' + \O{\rho^{\frac{3}{2}}} \\
2\rho^{\frac{3}{2} - \kappa} \partial_v \left( \frac{\psi}{\rho^{\frac{3}{2} - \kappa}}\right)  &=&  \rho^{\frac{3}{2}-\kappa}\dot{\uppsi} +  \rho^{\frac{1}{2}+\kappa} \uppsi' + \O{\rho^{\frac{3}{2}}}
\eean
for some functions $\dot{\uppsi}, \uppsi' \in L^2(\scri)$. Note that, as expected, the null derivatives decay like $\rho^{\min(\frac{3}{2} - \kappa, \frac{1}{2}+\kappa)}$. Restricting to an initial data surface we have some necessary conditions on the asymptotic behaviour of initial data which develops into an $H^2$ solution. Of course, the spacetimes that we construct are not asymptotically AdS in as strong a sense as those studied in \cite{Warnick:2012fi}. This manifests itself in part in the subtle distinction between twisting with respect to $\rho$ and $\rt$, and accordingly also in the asymptotic expansions.

\subsubsection{Constructing the data} We now give conditions on a free data set, $(\overline{\rt}, \overline{\psi})$ (with associated full data set $(\overline{\rt}, \overline{\psi}, \overline{\varpi_N}, \overline{\rt_v})$)  such that we can construct the functions  $(\overline{T\rt}, \overline{\tilde{T}\psi}, \overline{T\varpi_N})$ which generate a jet on $\mathcal{M} = \{ (u, v) \in \Delta_{u_0, \delta}: v = u_0\}$ satisfying the equation and boundary conditions there. We first note that we already have constructed $\overline{T\rt} = \overline{\rt}_u + \overline{\rt_v}$. 

In order to construct $\overline{\tilde{T}\psi}$, we will impose some conditions on the behaviour of $\tilde{\partial}_u\overline{\psi}$ near $u_0$. As discussed above, these conditions are necessary in order that the data launch an $H^2$ solution of the Klein-Gordon equation. In particular we require:
\begin{itemize}
\item Defining $\Psi' := \gamma(t_0) - 2 \beta(t_0) \lim_{u\to u_0} \overline{\rho}^{-\frac{3}{2}+\kappa} \overline{\psi}$, we have: 
\be
\overline{\rho}^{\kappa-\frac{3}{2}}\left[ \overline{f}\partial_u \left( \frac{\overline{\psi}}{\overline{f}}\right) - \frac{1}{2}\overline{\rt}^{\frac{1}{2}+\kappa}\Psi'\right] \in C^0(\overline{\mathcal{N}})
\ee
\item Defining 
\be
\dot{\Psi} := 2 \lim_{u \to u_0} \overline{\rho}^{\kappa-\frac{3}{2}}\left[ \overline{f}\partial_u \left( \frac{\overline{\psi}}{\overline{f}}\right)- \frac{1}{2}\overline{\rt}^{\frac{1}{2}+\kappa}\Psi'\right] 
\ee
we furthermore require
\be
\psi_R (u) := \overline{\rho}^{-\frac{3}{2}}\left[ \overline{f}\partial_u \left( \frac{\overline{\psi}}{\overline{f}}\right) - \frac{1}{2} \overline{\rt}^{\frac{1}{2}+\kappa}\Psi'-\frac{1}{2}\overline{\rt}^{\frac{3}{2}-\kappa} \dot{\Psi} \right] \in C^0(\overline{\mathcal{N}})
\ee 
so that:
\be
\tilde{\partial}_u \overline{\psi} : = \frac{1}{2}\overline{\rt}^{\frac{3}{2}-\kappa} \dot{\Psi}+\frac{1}{2} \overline{\rt}^{\frac{1}{2}+\kappa}\Psi'+ \overline{\rt}^{\frac{3}{2}} {\psi}_R(u)
\ee
\item If these conditions on $\overline{\psi}$ hold, then we can construct the function
\be
\overline{\tilde{\partial}_v \psi} : = \frac{1}{2}\overline{\rt}^{\frac{3}{2}-\kappa} \dot{\Psi}-\frac{1}{2} \overline{\rt}^{\frac{1}{2}+\kappa}\Psi'+ \overline{\rt}^{\frac{3}{2}} \tilde{\psi}_R(u)
\ee
where $\tilde{\psi}_R(u) \in  C^0(\overline{\mathcal{N}})$ is defined by
\be
\tilde{\psi}_R(u):= -\overline{\rt}^{-\frac{1}{2}} \int_{u_0}^u \left. \frac{T(\overline{\rt}) \tilde{\partial}_u \overline{\psi}}{\overline{\rt}^2} + \frac{1}{4 \overline{\rt}} \overline{\Omega}^2 \overline{V}\overline{\psi}- \left (\frac{1}{2}-\kappa\right) \frac{\psi_R}{\overline{\rt}^{\frac{1}{2}}} \right|_{u'} du'
\ee
with
\be
\overline{V} := 2g^2 \overline{\rt}^3 \left(\overline{\varpi_N} + 2\pi g\frac{\overline{\psi}^2}{l^2 \overline{\rt}^2} \right)- 8\pi g \frac{a}{l^2} \overline{\psi}^2 - \overline{\rt}^2 \left(\kappa^2 - 2\kappa + \frac{3}{4}\right) \, .
\ee
and
\be
\overline{\Omega}^2 := -4\overline{r}_u \overline{r_v} \left (1-\frac{2\overline{\varpi_N}}{\overline{r}} + \frac{\overline{r}^2}{l^2} -4\pi g \frac{\overline{r}^2}{l^2}\overline{\psi}^2 \right)^{-1} \, .
\ee
This is enough to define $\overline{\tilde{T}(\psi)} =\tilde{\partial}_u \overline{\psi} + \overline{\tilde{\partial}_v \psi} $,  with $ \overline{\rho}^{-\frac{3}{2}+\kappa}\overline{\tilde{T}(\psi)} \in C^0(\overline{\mathcal{N}})$ and to verify that the boundary condition
\be
\overline{\rho}^{-\frac{1}{2}-\kappa} \left(\tilde{\partial}_u \overline{\psi} - \overline{\tilde{\partial}_v \psi} \right) + 2 \beta(t_0) \overline{\rho}^{-\frac{3}{2}+\kappa} \overline{\psi} \to \gamma(t_0), \qquad \textrm{ as } u \to u_0
\ee
holds. Moreover, the Klein-Gordon equation
\be
\partial_u \left( \overline{r} \overline{\tilde{\partial}_v \psi}\right) = -\left( \frac{1}{2}-\kappa\right)\overline{\partial_v r} \overline{\rt}^{\frac{3}{2}-\kappa} \left(\partial_u \frac{\overline{\psi}}{\overline{f}}\right) - \frac{\overline{\Omega}^2}{4} \overline{r} \overline{V}\overline{\psi}, \nonumber
\ee
holds on $\mathcal{N}$. We are going to assume that $\overline{\tilde{T}\psi}$ satisfies the conditions imposed on $\overline{\psi}$ in \S \ref{initdatasec}, which in particular will imply that $\overline{\psi} \in C^2_{loc.}$.
\item We are now in a position to define
\ben{Tvarpiinitial}
\begin{split}
\overline{\partial_v \varpi_N}    :=& -8\pi \overline{r}^2 \frac{\overline{{r}}_u}{{\overline{\Omega}}^2}  \overline{\tilde{\partial}_v \psi}^2+ 4\pi g \left(\overline{r}-2\overline{\varpi_N}\right) \overline{\psi}  \overline{\tilde{\partial}_v \psi} \\
& +  2\pi \overline{\psi}^2  \overline{r_v} \left( g^2 \left(1-\frac{2\overline{\varpi_N}}{\overline{r}}\right) \right) -16\pi^2g^2 \frac{\overline{r}^3}{l^2} \overline{\psi}^3  \overline{\tilde{\partial}_v \psi}- 8\pi^2 g^3 \frac{\overline{r}^2}{l^2} \overline{r_v} \overline{\psi}^4
\end{split}
\een
and we have as a consequence constructed $\overline{T\varpi_N} := \partial_u \overline{\varpi_N} +\overline{\partial_v \varpi_N}$. We can verify directly that $\overline{T\varpi_N} \in C^0(\overline{\mathcal{N}})$, and we denote
\be
 \dot{M}_N:= \lim_{u \to u_0} \overline{T\varpi_N} =  \frac{4 \pi}{l^2} \Psi' \dot{\Psi}
\ee
\item Finally, we may construct $\overline{\rt_{vv}}$ by integrating the linear ODE
\be
\partial_u(\overline{\rt_{vv}}) = \overline{\alpha} \overline{\rt_{vv}} + \overline{\alpha_v}
\ee
with the initial condition\footnote{There is some freedom in how we choose boundary conditions for the higher derivatives of $\rt$ on the initial data, but we choose a convenient gauge in which $T\rt$ vanishes at $\scri$ to all orders on the initial data.} that $\overline{\rt_{vv}}(u_0) = 0$. Here $\overline{\alpha}$ is the restriction to the initial data of the quantity
\be
\alpha:=   \frac{\Omega^2}{{\rt_v} {r}^2} \left(\frac{3 {\varpi_N}}{2 {r}^2} -\frac{1}{2{r}} + \frac{2 \pi  {r} {\psi}^2}{l^2} \left(-a+\frac{3}{2}g\right)\right)
\ee
and $\overline{\alpha_v}$ is obtained by first differentiating $\alpha$ in $v$ and then restricting to the initial data, using the definition of $\Omega$ to see that no term appears which we have not already constructed on $\overline{\mathcal{N}}$. Doing this, we can verify that both $\overline{\alpha}$, $\overline{\alpha_v}$ are integrable in $u$. As a result, we have constructed $\overline{T\rt_v} = \overline{\rt_{vv}} + (\overline{\rt_v})_u$, and we can check that $\overline{T\rt}, (\overline{T\rt})_u, \overline{T\rt_v}$ all vanish at $u=u_0$. We will assume that $\overline{T\tilde{r}}_{uu}\in C^0(\overline{\mathcal{N}})$, and that $\overline{T\tilde{r}}_{uu}(u_0)=0$. This in particular implies that $\overline{\rt}\in C^3_{loc.}$.
\end{itemize}
\begin{Remark}
Note that the Hawking mass at infinity (which requires this level of regularity to define) will not generally be constant in time for the boundary conditions we impose. This is a consequence of the fact that we are permitting energy to enter the space from $\scri$. If we impose homogeneous Neumann boundary conditions, the flux vanishes and the Hawking mass is constant.
\end{Remark}
\begin{Definition}
We say that a free data set $(\overline{\rt}, \overline{\psi})$ gives rise to $H^2-$initial data if it satisfies the conditions given above to allow us to construct $(\overline{T\rt}, \overline{\tilde{T}\psi}, \overline{T\varpi_N})$, and furthermore we have that for any $0<s<1$, the following bounds hold on the initial data ray $\overline{\mathcal{N}}$ for some $C$
\begin{align} \label{Tsr1}
\| \overline{T \tilde{r}} \|_{C^0} + \| (\overline{T \tilde{r}})_u  \|_{C^0}+ \| \overline{T \tilde{r}_v}  \|_{C^0}  + \| \overline{T\tilde{r}}_{uu} \|_{C^0} < C
\end{align}
\begin{align} \label{Tsr2}
\int_{u_0}^{u_0+\delta} \left[ \left(\overline{\tilde{r}}^{-1} \cdot \overline{f} \partial_u  \left( \frac{\overline{\tilde{T} \psi}}{\overline{f}}\right)\right)^2  + \overline{\tilde{r}}^{-2} \overline{\tilde{T} \psi}^2\right] du    < C   \ \ \ \textrm{and} \ \ \ \abs{ \overline{\tilde{r}}^{-\frac{1}{2}+\frac{s}{4}} \cdot \overline{f} \partial_u \left( \frac{\overline{\tilde{T}\psi}}{\overline{f}}\right)}<C
\end{align}
\begin{align} \label{Tsr3}
\| \overline{T \varpi_N}-\dot{M}_N \|_{C^0} < C \ \ \ \textrm{and} \ \ \  \| \overline{\tilde{r}}^{1+s} \partial_u\overline{T \varpi_N} \|_{C^0} < C
\end{align}
\begin{align} \label{Tsr4}
 \| \overline{\tilde{T} \psi} \bar{\rho}^{-\frac{3}{2}+\kappa} - \dot{\Psi} \|_{C^0} < C
\end{align}
For any free data set giving rise to $H^2-$initial data, by truncating the initial data ray we may assume that $C<\delta'$ for any $\delta'>0$.
\end{Definition}

\subsection{The commuted function space}

Recall that in \S \ref{funspacessec} we defined a metric space  $\mathcal{C} = C^{1+}_{\tilde{r}} \left(\Delta_{\delta,u_0}\right)\times C^{0+}_{\varpi_N} \left(\Delta_{\delta,u_0}\right) \times C^{0+}_{\psi} \underline{H}^1\left(\Delta \right)$ with distance
\[
d\left( \left(\tilde{r}_1,(\varpi_N)_1, \psi_1\right) , \left(\tilde{r}_2,(\varpi_N)_2, \psi_2\right)\right) = d_{\tilde{r}} \left(\tilde{r}_1,\tilde{r}_2\right) + d_{\varpi} \left(\left(\varpi_N\right)_1, \left(\varpi_N\right)_2\right) + d_{\psi} \left(\psi_1,\psi_2\right) \, ,
\]
and denoted by $\mathcal{B}_b$ the ball of radius $b$ centred around $\left(\frac{u-v}{2}, M_N,\Psi \rho^{\frac{3}{2}-\kappa}\right)$. We then showed that the map $\Phi:\mathcal{B}_b \to \mathcal{B}_b$ is in fact a contraction map, provided we take the size of the domain $\delta$ to be sufficiently small.

\begin{Definition}
We define the commuted ball $\mathcal{B}^1_b$ to consist of those elements $(\rt, \varpi_N, \psi)$ of $\mathcal{B}_b$ for which we additionally have that $T\rt_v$, $T\rt_{uv}$, $T\rt_{uu}$, $T\varpi_N$, $(T\varpi_N)_u$, $T\psi$, $T\psi_u$ are $C^0$, with the following bounds:
\be
\norm{\frac{TT\rt}{\rho}}{C^0} + \norm{T\rt_{uv}}{C^0}+ \norm{T\rt_{uu}}{C^0} <b\, ,
\ee
\be
\|  T \varpi_N -\dot{M}_N \|_{C^0} + \| \rho^{1+s} \partial_u \left(T \varpi_N\right)_1 \|_{C^0} <b \, ,
\ee
\be
 \| \tilde{T}\psi-\rho^{\frac{3}{2}-\kappa} \dot{\Psi} \|_{C^0\underline{H}^1} +  \| \tilde{T} \psi \rho^{-\frac{3}{2}+\kappa} -\dot{\Psi} \|_{C^0} + \|\rho^{-\frac{1}{2}+\frac{s}{4}} \hat{\partial}_u \tilde{T} \psi \|_{C^0} <b \, ,
\ee
where we define $\tilde{T}\psi = \tilde{\partial}_u \psi +  \tilde{\partial}_v \psi$. We also require that at $u = u_0$ we have
\bean
& \quad \rt(u,u_0) = \overline{\rt}(u),\quad  T\rt(u, u_0) = \overline{T\rt}(u), \quad \varpi_N(u, u_0) = \overline{\varpi_N}(u),& \\ &\psi(u, u_0) = \overline{\psi}(u),\quad \tilde{T} \psi(u, u_0) = \overline{\tilde{T}\psi}(u). &
\eean
\end{Definition}

It is convenient to note the following bounds that can be derived for elements of the commuted ball $\mathcal{B}^1_b$:
\begin{Lemma} \label{Tcommest}
Suppose $(\rt, \varpi_N, \psi) \in \mathcal{B}^1_b$. Then the following estimates hold:
\be
\abs{T \varpi} \leq C_b \cdot \rho^{-2 \kappa}, \qquad \abs{T\left( \frac{\Omega^2}{r^2}\right)} \leq C_b , \qquad |T(V)| \leq C_b \rho^{\min(2, 3-2 \kappa)}
\ee
and
\be
 \abs{T\left( \frac{\rt_u}{\rt}\right)} +  \abs{T\left( \frac{\rt_v}{\rt}\right)} \leq C_b, \quad  \abs{\partial_u\left( \frac{T\rt}{\rt}\right)} +  \abs{\partial_v \left( \frac{T\rt}{\rt}\right)} \leq C_b
\ee
\end{Lemma}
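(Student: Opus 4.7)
My plan is to establish each of the five bounds by direct algebraic manipulation, exploiting two structural features: first, $T\rho = 0$ (since $\rho = (u-v)/2$), so the singular $1/\rho$ piece of $\tilde{r}_u/\tilde{r} = 1/(2\rho) + O(1)$ is invisible to $T$; second, both $\tilde{r}$ and $T\tilde{r}$ vanish on $\scri = \{u=v\}$, so ratios like $T\tilde{r}/\tilde{r}$ that are apparently ``$0/0$'' at the boundary can be bounded by integrating from the boundary after exhibiting a cancellation. Constants $C_b$ depend only on $b$.

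The first three bounds are essentially routine. For $T\varpi$ use $\varpi = \varpi_N + 2\pi g r^3\psi^2/l^2$; the bound $|T\psi| \leq C_b \rho^{3/2-\kappa}$ follows from the identity $\tilde{T}\psi = T\psi + (\kappa-\tfrac{3}{2})\psi\, T\tilde{r}/\tilde{r}$ (the $T$-analogue of the relation used in Lemma \ref{lem:twisteq}) together with the pointwise bounds in the definition of $\mathcal{B}^1_b$ and Lemma \ref{lem:basicballbounds}, and the dominant $\rho^{-2\kappa}$ scaling comes from the $r^3\psi T\psi$ contribution. For $T(\Omega^2/r^2)$ we write $\Omega^2/r^2 = -4\tilde{r}_u\tilde{r}_v \cdot r^2/(1-\mu)$ and observe that $r^2/(1-\mu) \to l^2$ smoothly, so that the apparent singularity in $T(1-\mu)$ is compensated by $(1-\mu)^{-2}$; the bound then reduces to the $C_b$-control of $T\tilde{r}_u, T\tilde{r}_v$ (via $T\tilde{r}_u = \tilde{r}_{uu}+\tilde{r}_{uv}$, both bounded in $\mathcal{B}_b \cap \mathcal{B}^1_b$, and similarly for $T\tilde{r}_v$) together with the already-established bound on $T\varpi$. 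For $TV$, differentiating \eq{e2b} term by term, the $\varpi_N/r^3$ contribution produces a factor $\rho^3$, the $\psi^2$ term gives $\rho^{3-2\kappa}$, and $1/r^2$ gives $\rho^2$, yielding the minimum $\rho^{\min(2,3-2\kappa)}$.

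The main obstacle is the final pair of bounds. The naive expansion of $T(\tilde{r}_u/\tilde{r})$ loses a power of $\rho$, since $T\tilde{r}_u/\tilde{r}$ and $\tilde{r}_u T\tilde{r}/\tilde{r}^2$ are only $\O{\rho^{-1}}$ individually. A direct calculation gives
\[
T(\tilde{r}_u/\tilde{r}) = \frac{\tilde{r}_{uu}+\tilde{r}_{uv}}{\tilde{r}} - \frac{\tilde{r}_u(\tilde{r}_u+\tilde{r}_v)}{\tilde{r}^2} = \partial_u(T\tilde{r}/\tilde{r}),
\]
so the two bounds in question are the same statement. To prove it, introduce $B := \tilde{r}\, T\tilde{r}_u - \tilde{r}_u\, T\tilde{r}$, so $T(\tilde{r}_u/\tilde{r}) = B/\tilde{r}^2$. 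A one-line computation uncovers the crucial cancellation
\[
\partial_u B = \tilde{r}\, T\tilde{r}_{uu} - \tilde{r}_{uu}\, T\tilde{r},
\]
the $\tilde{r}_u T\tilde{r}_u$ cross terms dropping out. Since $B|_{u=v}=0$, integrating along a constant-$v$ line from the boundary and using $|\tilde{r}| + |T\tilde{r}| \leq C_b\, \rho$ together with $|\tilde{r}_{uu}| + |T\tilde{r}_{uu}| \leq b$ yields $|B(u,v)| \leq C_b \rho^2$, and division by $\tilde{r}^2 \sim \rho^2$ delivers the bound. The estimate for $T(\tilde{r}_v/\tilde{r})$ then follows \emph{not} by a symmetric integration in $v$ (which would demand $T\tilde{r}_{vv}$, outside the data in $\mathcal{B}^1_b$) but from the purely algebraic identity $T(\tilde{r}_v/\tilde{r}) = T(T\tilde{r}/\tilde{r}) - T(\tilde{r}_u/\tilde{r})$, combined with $T(T\tilde{r}/\tilde{r}) = TT\tilde{r}/\tilde{r} - (T\tilde{r}/\tilde{r})^2$, each term of which is $C_b$-bounded via $|TT\tilde{r}/\rho| \leq b$ and $\rho/\tilde{r} \leq C_b$.
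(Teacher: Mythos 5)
Your proof is correct, and for the crucial final pair of estimates it takes a genuinely different (if closely related) route from the paper. The paper likewise disposes of the first three bounds by direct computation using the boundedness of $T\rt/\rt$, but for $T(\rt_u/\rt)$ it mimics Corollary \ref{cor:sc}: it integrates $\partial_v\bigl(\rho T\rt_u - \tfrac12 T\rt\bigr) = \rho T\rt_{uv} - \tfrac12 TT\rt$ from the boundary to get $\abs{T\rt_u/\rt - T\rt/(2\rho\rt)} \leq 3b e^b$, and then decomposes $T(\rt_u/\rt) = \bigl(T\rt_u/\rt - T\rt/(2\rho\rt)\bigr) + (T\rt/\rt)\bigl(1/(2\rho) - \rt_u/\rt\bigr)$, invoking Corollary \ref{cor:sc} for the last factor; the $v$-derivative is handled ``in a similar fashion'' (via $\rho T\rt_v + \tfrac12 T\rt$, integrated in $u$, again using only $T\rt_{uv}$ and $TT\rt$), and the second line of the Lemma follows, exactly as you observe, from $T(\rt_u/\rt)=\partial_u(T\rt/\rt)$. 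Your Wronskian $B=\rt\,T\rt_u - \rt_u\,T\rt$ with the cancellation $\partial_u B = \rt\,T\rt_{uu} - \rt_{uu}\,T\rt$, integrated in $u$ from $\scri$, replaces the paper's $\rho$-normalized quantity; both arguments rest on the same inputs (the relevant quantity vanishes at $u=v$, and $T\rt_{uu}$, $T\rt_{uv}$, $TT\rt/\rho$ are bounded in $\mathcal{B}^1_b$), while your algebraic reduction $T(\rt_v/\rt)=TT\rt/\rt - (T\rt/\rt)^2 - T(\rt_u/\rt)$ is a clean, slightly shorter way to avoid the unavailable $T\rt_{vv}$ than the paper's symmetric integration. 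Two cosmetic remarks: in the $T(\Omega^2/r^2)$ step the bound on $T\rt_v$ should be routed through $T\rt_v = TT\rt - T\rt_u$ (since $\rt_{vv}$ itself is not controlled in $\mathcal{B}_b$), which is presumably what your ``similarly'' intends; and in the $T\varpi$ estimate the term $r^2\psi^2\,Tr$ contributes at the same order $\rho^{-2\kappa}$ as $r^3\psi\,T\psi$, so neither dominates — the stated bound is of course unaffected.
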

\begin{proof}
The first three estimates follow by direct computation, making use of the fact that we already know $T\rt/\rt$ is bounded. To prove the final estimates, first note, as in Corollary \ref{cor:sc}, we have
\be
\partial_v \left( \rho T\rt_u - \frac{1}{2} T \rt\right) = \rho T\rt_{uv} - \frac{1}{2} TT \rt
\ee
whence we immediately estimate
\be
\abs{ \rho T\rt_u - \frac{1}{2} T \rt} \leq 3 b \int_v^u  (u-v') dv' \leq 3 b \rho^2
\ee
which gives
\be
\abs{\frac{T\rt_u}{\rt} - \frac{T\rt}{2 \rho \rt}} < 3b e^b.
\ee
Now, we note that
\bean
T\left( \frac{\rt_u}{\rt} \right) &=& \frac{T\rt_u}{\rt} - \frac{\rt_u T \rt}{\rt^2} \\
&=& \left(\frac{T\rt_u}{\rt} - \frac{T\rt}{2 \rho \rt}\right) +   \frac{T\rt}{ \rt}\left( \frac{1}{2\rho}  - \frac{\rt_u}{\rt}\right)
\eean
whence it immediately follows that $T\left( \frac{\rt_u}{\rt} \right) $ is bounded by some $C_b$. The $v-$derivative follows in a similar fashion. The final estimate follows by noting that $T\left( \frac{\rt_u}{\rt} \right) = T( \partial_u \log \rt )= \partial_u( T \log \rt )= \partial_u \left( \frac{T\rt}{\rt} \right) $.
\end{proof}

\subsection{Propagation of regularity}
We are now ready to state the main result of this section concerning the propagation of regularity.
\begin{Proposition} \label{propreg}
Suppose that the initial data is in the $H^2$-class. Then the map $\Phi:\mathcal{B}_b \to \mathcal{B}_b$ defined in \S\ref{contractionsec} maps $\mathcal{B}^1_b$ into itself for $\delta$ sufficiently small.
\end{Proposition}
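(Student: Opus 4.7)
The plan is to apply the vector field $T = \partial_u + \partial_v$ to the three defining equations \eqref{req}, \eqref{oldpsi}, \eqref{conm} of the contraction map $\Phi$, and to show that the commuted quantities $T\widehat{\tilde{r}}$, $\widetilde{T}\widehat{\psi}$, $T\widehat{\varpi_N}$ satisfy bounds of exactly the form required by the definition of $\mathcal{B}^1_b$. The overarching principle is that $T$ is tangent to $\scri$, so commutation does not introduce any boundary singularities in the differentiation direction; the heart of the matter is verifying that applying $T$ to the right-hand sides of the contraction map produces expressions with the same integrable decay in $\tilde{r}$ that was exploited in Proposition \ref{prop:maptoball}.

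First I would treat $T\widehat{\tilde{r}}$. Differentiating \eqref{req}, the contribution from the data $\overline{\tilde{r}}(u)-\overline{\tilde{r}}(v)$ is controlled directly by the $H^2$-data bound \eqref{Tsr1}, while $T$ applied to the double integral produces $T$ of the integrand. Using Lemma \ref{Tcommest} to estimate $T(\Omega^2/r^2)$, $T\varpi_N$, $T\psi$ and the $T$-derivatives of $\tilde{r}$, one sees that $T$ of the integrand satisfies the same $\rho^{\min(1,2-2\kappa)}$ pointwise bound (with a new constant of type $C_b$) as the original integrand. Integration in $v$ then yields the required bounds on $T\widehat{\tilde{r}}/\rho$, $(T\widehat{\tilde{r}})_u$, $(T\widehat{\tilde{r}})_{uv}$ and $(T\widehat{\tilde{r}})_{uu}$, each step mirroring the corresponding estimate of Section \ref{sec:maptoball}.

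Next, for the renormalized mass, I would apply $T$ to \eqref{conm}. The boundary contribution is controlled by $\overline{T\varpi_N}-\dot{M}_N$ via \eqref{Tsr3}, while $T$ of the integrand decomposes into a piece containing $\widetilde{T}\widehat{\psi}$ (handled by the energy bound on $\widetilde{T}\widehat{\psi}$ to be proved in the next step) plus terms depending on $T\tilde{r}$, $T\varpi_N$ and $T\psi$, each of which is small thanks to $(\rt,\varpi_N,\psi)\in \mathcal{B}^1_b$ and Lemma \ref{Tcommest}. This gives both the sup bound on $T\widehat{\varpi_N}-\dot{M}_N$ and, upon a further $u$-differentiation (following the tedious computation already performed for $\partial_u\widehat{\varpi_N}$ in \eqref{tgt}), the weighted bound on $\rho^{1+s}\partial_u T\widehat{\varpi_N}$, with smallness provided by the short time interval $\delta$.

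The main obstacle, and the core of the argument, is the commuted wave equation for $\widetilde{T}\widehat{\psi}$. Applying $T$ to \eqref{e2} produces a twisted wave equation for $\widetilde{T}\widehat{\psi}$ with source containing $T(V)\psi + V\widetilde{T}\psi$ together with commutator terms generated by $[T,\tilde\nabla^\dagger\tilde\nabla]$ acting on $\widehat{\psi}$. By Lemma \ref{Tcommest}, in particular $|T(V)|\leq C_b\rho^{\min(2,3-2\kappa)}$, these sources enter the energy identity with the same $\rho$-weight as in \eqref{enres}, so the restriction $\kappa<2/3$ continues to suffice. The boundary condition is commuted by applying $T$ to \eqref{bcpsi}, producing a Robin-type condition for $\widetilde{T}\widehat{\psi}$ whose inhomogeneity is the $T$-derivative of the original data together with a term $-2(T\beta)\rho^{-3/2+\kappa}\widehat{\psi}$, both smooth along $\scri$ by the hypothesis on $\beta,\gamma$. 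Running the energy estimate of Proposition \ref{wpprop} on the commuted equation, and then using the transport form of the wave equation to extract the pointwise bound on $\hat\partial_u\widetilde{T}\widehat{\psi}$, one recovers the three required bounds on $\widetilde{T}\widehat{\psi}$. Throughout this step the subtle book-keeping, as in the original proof, is to exploit Lemma \ref{lem:twisteq} to pass between $\rho$- and $\tilde{r}$-twisting so that $\widetilde{T}\widehat{\psi} = \tilde\partial_u\widehat{\psi}+\tilde\partial_v\widehat{\psi}$ appears naturally in the energy quantities, ensuring the closing of the estimates for $\delta$ small.
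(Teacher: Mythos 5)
Your proposal follows essentially the same route as the paper: commute the three components of $\Phi$ with $T$, use the commuted-ball bounds of Lemma \ref{Tcommest} together with the $H^2$-data assumptions to rerun the estimates of Section \ref{sec:maptoball}, treating $\tilde{T}\hat\psi$ as a solution of an inhomogeneous twisted wave equation with the commuted Robin condition and recovering the pointwise bound from the transport form of the equation. The only points the paper spells out that you gloss over are the verification that the commuted initial data are actually attained (via the $H^2_{loc.}$ regularity of $\hat{\psi}$ from Section \ref{sec:well posed} and the rewriting of the data terms using \eq{rvinitial}), and the final passage from $f\,T(\hat{\psi}/f)$, twisted with the input $\rt$, to $\hat{f}\,T(\hat{\psi}/\hat{f})$, twisted with the output $\hat{\rt}$, controlled by the correction term $\left(T\hat{\rt}/\hat{\rt}-T\rt/\rt\right)\hat{\psi}$ — both routine given Lemma \ref{Tcommest}.
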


Before we prove this result, we note the following:
\begin{Corollary}\label{highregcor}
Suppose we start with initial data in the $H^2-$class. Then then the weak solution $(\rt, \varpi_N, \psi)\in \mathcal{B}_b$ to the renormalised \EKG system which we constructed above in fact belongs to $\mathcal{B}^1_b$. As a consequence the associated metric $g$ is of class $C^0$.
\end{Corollary}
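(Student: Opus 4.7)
The strategy is to repeat the Banach fixed-point argument of Section \ref{contractionsec}, working now on the smaller set $\mathcal{B}^1_b$ equipped with a natural stronger distance $d^1$ obtained from $d$ by adding terms which track the differences of the $T$-commuted quantities appearing in the definition of $\mathcal{B}^1_b$. The weights are the same as those used there, namely $\rho^{1+s}$ for $\partial_u(T\varpi_N)$, $\rho^{-\frac{3}{2}+\kappa}$ for $\tilde T\psi$, and $\rho^{-\frac{1}{2}+\frac{s}{4}}$ for $\hat\partial_u\tilde T\psi$, together with the plain $C^0$-norm for the $T$-commuted derivatives of $\rt$. The $H^2$-class hypotheses on the free data set translate into analogues of the bounds of Lemma \ref{lem:databounds} for the commuted quantities, with the initial-data contribution made arbitrarily small by shrinking $\delta$.

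Proposition \ref{propreg} provides the ``map-to-ball'' half of the argument. I would then establish the contraction property on $(\mathcal{B}^1_b, d^1)$ by commuting each of the three defining equations of $\Phi$ with $T=\partial_u+\partial_v$ and repeating the estimates of Section \ref{sec:conmap} on the resulting system. Concretely, $T\widehat{\rt}$ satisfies a second-order equation whose right hand side is obtained by applying $T$ to the integrand in (\ref{req}); $T\widehat{\varpi_N}$ is obtained by $v$-integration after commuting (\ref{conm}) with $T$; and $\widehat{\tilde T\psi}$ satisfies a twisted wave equation with source the commutator $[T,\tilde\nabla^\dagger\tilde\nabla]\widehat\psi$ plus lower-order terms, together with the Robin-type boundary condition obtained from (\ref{bcpsi}) having inhomogeneity $\dot\gamma(t)-2\dot\beta(t)\rho^{-\frac{3}{2}+\kappa}\psi$, which is admissible since $\beta,\gamma\in C^\infty(\scri)$. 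The bounds collected in Lemma \ref{Tcommest} are exactly what is needed to ensure that every coefficient arising from the commutator has the same $\rho$-decay as in the uncommuted estimates, the only loss being at most a single power of $\rho^{-1}$ which is absorbed by the weight $\rho^{1+s}$ used for $\partial_u(T\varpi_N)$.

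The hard part of the argument is the energy estimate for $\widehat{\tilde T\psi}$, specifically the boundary contribution on $\scri$. After the integration by parts mirroring that of Section \ref{sec:maptoball}, the critical new term is of the form $\int_\scri T(\rt^{-2}\rho^{-1+2\kappa}\beta f^2)(\tilde T\widehat\psi/f)^2$, which is controlled using the $C^1$ hypothesis on $\beta$ together with the bound $|T\rt/\rho|\leq C_b$ from Corollary \ref{cor:sc} and Lemma \ref{Tcommest}. Once the contraction on $(\mathcal{B}^1_b,d^1)$ is established, the Banach fixed-point theorem yields a unique fixed point there; by uniqueness of the fixed point in the larger ball $\mathcal{B}_b$ (Corollary \ref{maincor}) this coincides with the weak solution $(\rt,\varpi_N,\psi)$ of Theorem \ref{theo:main}, which therefore lies in $\mathcal{B}^1_b$. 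Unpacking: from the $C^0$-bounds on $T\rt_{uu},T\rt_{uv},T\rt_v$ together with those already present in $\mathcal{B}_b$ one gets $\rt_{uu},\rt_{uv},\rt_{vv}\in C^0$, hence $\rt\in C^2_{loc.}$; from $T\varpi_N$ and $\partial_u\varpi_N$ being $C^0$ one gets $\varpi_N\in C^1_{loc.}$; consequently $r=1/\rt$, $\varpi=\varpi_N+2\pi g r^3\psi^2/l^2$ and $\Omega^2=-4r^4\rt_u\rt_v/(1-\mu)$ are continuous on $\Delta$, and the metric $g=-\Omega^2\,du\,dv+r^2 d\sigma_{S^2}$ is of class $C^0$.
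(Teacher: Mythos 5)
Your plan reaches the right conclusion, but it takes a genuinely different and substantially heavier route than the paper. The paper never proves (nor needs) a contraction in a commuted metric $d^1$: Corollary \ref{highregcor} is drawn from Proposition \ref{propreg} alone, combined with the contraction already established in the \emph{weaker} metric $d$. Since $\Phi$ is a $d$-contraction on $\mathcal{B}_b$ and, by Proposition \ref{propreg}, maps $\mathcal{B}^1_b$ into itself, one starts the Picard iteration at a point of $\mathcal{B}^1_b$ built from the $H^2$ data; the iterates remain in $\mathcal{B}^1_b$, converge in $d$ to the unique fixed point of Corollary \ref{maincor}, and the uniform $C^0$ bounds on the $T$-commuted quantities pass to the limit, with continuity of the limiting $T$-derivatives recovered from the equations they satisfy. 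This is the classical ``bounded in the high norm, contract in the low norm'' device, and it avoids exactly the step you propose: proving that $\Phi$ contracts with respect to $d^1$ would force you to redo all of the difference estimates of \S\ref{sec:conmap} for the commuted system (differences of the source $F$, of $T(\rt_u/\rt)$ and $T(\rt_v/\rt)$, of the commuted boundary term on $\scri$, and of the $T\varpi_N$ transport equation), none of which you actually carry out; this is precisely the kind of top-norm contraction that in quasilinear problems is prone to derivative loss, so while it looks plausible in this twisted, semilinear-looking reformulation, it is an unproven and unnecessary claim. You would also need to address completeness of $(\mathcal{B}^1_b, d^1)$ before invoking the Banach fixed point theorem, an issue the paper's route sidesteps.

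Your closing steps are fine and agree with the intended argument: identifying your fixed point with the solution of Theorem \ref{theo:main} via uniqueness of the fixed point in $\mathcal{B}_b$ is correct, and the unpacking (from $T\rt_v, \rt_{uv}\in C^0$ one gets $\rt_{vv}\in C^0$, hence $\rt\in C^2_{loc.}$ and $\varpi_N\in C^1_{loc.}$, so $r$, $\varpi$, $\Omega^2$ and therefore $g$ are continuous) is exactly how the final statement of the Corollary is meant to follow. In short: correct in outline, but the paper's proof needs only the invariance of $\mathcal{B}^1_b$ under $\Phi$ plus the existing $d$-contraction, whereas your plan hinges on a stronger, unverified $d^1$-contraction that buys nothing beyond what the weaker argument already delivers.
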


\begin{proof}[Proof of Proposition \ref{propreg}]
As in \S \ref{contractionsec} we define $(\hat{\rt}, \widehat{\varpi_N}, \hat{\psi}):= \Phi(\rt, \varpi_N, \psi)$. We first note that the conditions
\be
\psi(u, u_0) = \overline{\psi}(u),  \quad \rt(u,u_0) = \overline{\rt}(u), \quad \varpi_N(u, u_0) = \overline{\varpi_N}(u).
\ee
are clearly respected by the contraction map. Now note that the condition that $(\rt, \varpi_N, \psi) \in \mathcal{B}^1_b$ permits us to directly differentiate \eq{req} and establish that
\bean \label{Treq}
T\widehat{\tilde{r}} &=& \overline{r}_u(u)-  \overline{r}_u(v)+ \int_v^udu' \left[\frac{\Omega^2}{r^2} \left(\frac{3 \varpi_N}{2 r^2} -\frac{1}{2r} + \frac{2 \pi  r \psi^2}{l^2} \left(-a+\frac{3}{2}g\right)\right) \right](u', u_0) \\&&+ \int_v^u du^\prime \int_{u_0}^v  dv^\prime T\left[\frac{\Omega^2}{r^2} \left(\frac{3 \varpi_N}{2 r^2} -\frac{1}{2r} + \frac{2 \pi  r \psi^2}{l^2} \left(-a+\frac{3}{2}g\right)\right) \right](u', v')
\eean
We can re-write the first line, using the fact that by \eq{rvinitial}
\be
\overline{r_v}(u)-\overline{r_v}(v)= \int_v^u du' \frac{-4\overline{r}^2\overline{\tilde{r}}_u \overline{\tilde{r}_v}}{1-\frac{2\overline{\varpi_N}}{\overline{r}} + \frac{\overline{r}^2}{l^2} -4\pi g \frac{\overline{r}^2}{l^2}\overline{\psi}^2} \left(\frac{3 \overline{\varpi_N}}{2 \overline{r}^2} -\frac{1}{2\overline{r}} + \frac{2 \pi  \overline{r} \overline{\psi}^2}{l^2} \left(-a+\frac{3}{2}g\right)\right) (u'),
\ee
together with the initial conditions assumed on $(\rt, \varpi_N, \psi)$ to give
\ben{Treq}
T\widehat{\tilde{r}} = \overline{Tr}(u)-  \overline{Tr}(v)+ \int_v^u du^\prime \int_{u_0}^v  dv^\prime T\left[\frac{\Omega^2}{r^2} \left(\frac{3 \varpi_N}{2 r^2} -\frac{1}{2r} + \frac{2 \pi  r \psi^2}{l^2} \left(-a+\frac{3}{2}g\right)\right) \right](u', v')
\een
Clearly we recover from here the condition
\be
T \rt(u,u_0) = \overline{T\rt}(u) .
\ee
Now, since acting on any of the fields with $T$ leaves the behaviour near $u=v$ unchanged, we can repeat the arguments of \S \ref{sec:maptoball} to show that
\be
\norm{\frac{TT\hat \rt}{\rho}}{C^0} + \norm{T\hat \rt_{uv}}{C^0}+ \norm{T\hat \rt_{uu}}{C^0} <b\, ,
\ee
for $\delta$ sufficiently small.

Now let us consider the wave equation. Now note that by the results of \S \ref{sec:well posed} we know that $\hat \psi \in H^2_{loc.}$. As a consequence, since the wave equation holds in $C^0$ along the initial data ray, with $\psi(u, u_0) = \overline{\psi}(u)$, together with the boundary conditions at $(u_0, u_0)$, we can deduce that $\tilde{T} \psi(u, u_0) = \overline{\tilde{T} \psi}(u)$. Moreover, we have sufficient regularity to differentiate the wave equation. Doing so, we deduce that $\widehat{\tilde{T}\psi} := f T(f^{-1} \hat \psi)$ is a weak solution of the wave equation:
\bean
\partial_v \left(f r \left(\partial_u \frac{\widehat{\tilde{T}\psi}}{f}\right)\right) &=& - \partial_u \left(rf\right) \left(\partial_v \frac{\widehat{\tilde{T}\psi}}{f}\right) +F, \ \ \textrm{ or equivalently} \\
\partial_u \left(f r \left(\partial_v \frac{\widehat{\tilde{T}\psi}}{f}\right)\right) &=& - \partial_v \left(rf\right) \left(\partial_u \frac{\widehat{\tilde{T}\psi}}{f}\right) +F, \nonumber
\eean
where
\be
F :=\left(\kappa-\frac{1}{2}\right) \left[ T\left( \frac{\rt_u}{\rt}\right) f r \partial_v \left( \frac{\hat\psi}{f}\right)+T\left( \frac{\rt_v}{\rt}\right) f r \partial_u \left( \frac{\hat\psi}{f}\right)\right] - T\left[ \frac{\Omega^2 V \psi}{f}\right] fr,
\ee
and $\widehat{\tilde{T}\psi}$ weakly satisfies the boundary condition
\[
\rho^{-\frac{1}{2}-\kappa} \left(\tilde{\partial}_{u} - \tilde{\partial}_v\right)\tilde{T} \hat{\psi} + 2\beta \left(t\right) \rho^{-\frac{3}{2}+\kappa} \tilde{T}\hat{\psi}=\gamma'\left(t\right)- 2\beta' \left(t\right) \rho^{-\frac{3}{2}+\kappa} \hat{\psi}.
\]
Now, note that since we control the $C^0\H^1(\Delta)-$norm of $\hat \psi$ in terms of $b$ from the lower order energy estimates, we can immediately bound
\be
\int_\Delta du dv F^2 \leq \frac{b}{100}
\ee
provided that $\delta$ is sufficiently small, making use of the estimates of Lemma \ref{Tcommest}. Proceeding as in \S \ref{sec:maptoball} we deduce that for $\delta$ sufficiently small we have
\be
 \|\widehat{\tilde{T}\psi}-\rho^{\frac{3}{2}-\kappa} \dot{\Psi} \|_{C^0\underline{H}^1} +  \| \widehat{\tilde{T}\psi} \rho^{-\frac{3}{2}+\kappa} -\dot{\Psi} \|_{C^0} + \|\rho^{-\frac{1}{2}+\frac{s}{4}} \hat{\partial}_u \widehat{\tilde{T}\psi} \|_{C^0} <b \, ,
\ee
Finally, note that we actually wish to control $\hat{\tilde{T}}\hat{\psi} = \hat{f} T(\hat{f}^{-1} \psi)$. However, we have that
\be
\hat{\tilde{T}}\hat{\psi}  - \widehat{\tilde{T}\psi} = \left( \frac{T \hat \rt}{\hat \rt} - \frac{T \rt}{\rt} \right)\hat \psi
\ee
and by Lemma \ref{Tcommest} the term in brackets belongs to $C^1(\Delta)$, so for small enough $\delta$ we have 
\be
 \|\hat{\tilde{T}}\hat{\psi}-\rho^{\frac{3}{2}-\kappa} \dot{\Psi} \|_{C^0\underline{H}^1} +  \| \hat{\tilde{T}}\hat{\psi} \rho^{-\frac{3}{2}+\kappa} -\dot{\Psi} \|_{C^0} + \|\rho^{-\frac{1}{2}+\frac{s}{4}} \hat{\partial}_u \hat{\tilde{T}}\hat{\psi} \|_{C^0} <b \, .
\ee
Now let us finally consider the equation for $\varpi_N$. Differentiating in $T$ and making use of the expression \eq{Tvarpiinitial} for $\overline{\partial_v \varpi_N}$ we deduce that
\be
\begin{split} 
T\widehat{\varpi}_N = \overline{T\varpi_N}\left(u\right) + \int_{v_0}^v dv^\prime T\Big[ -8\pi r^2 \frac{{r}_u}{{\Omega}^2}  \left[ f \partial_v \left(\frac{\widehat{\psi}}{f}\right) \right]^2+ 4\pi g \left(r-2\varpi_N\right) \widehat{\psi} \left( f \partial_v \frac{\widehat{\psi}}{f} \right) \\
+  2\pi \widehat{\psi}^2  r_v \left( g^2 \left(1-\frac{2\varpi_N}{r}\right) \right) -16\pi^2g^2 \frac{r^3}{l^2} \widehat{\psi}^3 \left( f \partial_v \frac{\widehat{\psi}}{f} \right) - 8\pi^2 g^3 \frac{r^2}{l^2} r_v \widehat{\psi}^4 \Big] \left(u,v^\prime\right).
\end{split}
\ee
Making use of the bounds for $T-$derivatives on the unhatted functions, together with the bounds derived above for $\hat \psi$, we can again verify that the argument of \S \ref{sec:maptoball} goes through without serious alteration, so that for sufficiently small $\delta$ we have.
\be
\|  T \varpi_N -\dot{M}_N \|_{C^0} + \| \rho^{1+s} \partial_u \left(T \varpi_N\right)_1 \|_{C^0} <b \, ,
\ee
whence we are done.
\end{proof}

\section{Well posedness for the wave equation with rough coefficients} \label{sec:well posed}

In constructing the contraction map in \S \ref{contractionsec} we assumed the following result:
\begin{Proposition}\label{wpprop}
Suppose that $(\rt, \varpi_N, \psi) \in \mathcal{B}_b$ and let $g$ be the metric of the spherically symmetric spacetime defined by these functions, with twisted derivative $\tilde{\nabla}_\mu$. Then there exists a unique solution $\hat{\psi} \in C^0\H^1(\Delta)$ to the wave equation
\ben{wp:weqn}
\tn^\dagger_\mu \tn^\mu  \hat{\psi} = F
\een
with initial conditions
\be
\hat{\psi} = \overline{\psi}\quad  \textrm{ on }\quad  v=u_0
\ee
and boundary conditions
\be
\rho^{-\frac{1}{2}-\kappa} \left(\tilde{\partial}_u - \tilde{\partial}_v \right) \hat{\psi} + 2\beta\left(t\right) \rho^{-\frac{3}{2} + \kappa} \hat{\psi} = \gamma\left(t\right) \ \ \  \textrm{on } \quad \scri
\ee
where $\beta$ is at least $C^1$, provided the spherically symmetric data $F, \overline{\psi}, \gamma$ satisfy
\begin{enumerate}[i)]
\item \be
\int_\Delta F^2 \rho^{-6} du dv<\infty
\ee
\item \be
\int_{u_0}^{u_1} \left[ \left(\overline{f} \partial_u \left( \frac{\overline{\psi}}{\overline{f}}\right)\right)^2  + \overline{\psi}^2\right] \left(u-u_0\right)^{-2} du    <\infty
\ee
\item \be
\norm{\gamma}{H^{1-\kappa}(\scri)} <\infty
\ee
\end{enumerate}
\end{Proposition}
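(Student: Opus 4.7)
The plan is to combine an $\H^1$ energy estimate for the twisted wave operator on the rough background with an approximation scheme, using the fact that much of the relevant machinery has already been developed in \S \ref{contractionsec}. First I would derive the a priori estimate
\[
\|\hat\psi\|^2_{C^0\H^1(\Delta)} \leq C_b \left( \|\overline{\psi}\|^2_{\H^1(\mathcal{N})} + \|\gamma\|^2_{H^{1-\kappa}(\scri)} + \int_\Delta F^2 \rho^{-6}\, du\, dv \right),
\]
valid for any sufficiently smooth solution of \eq{wp:weqn}. This is obtained by repeating the computation of \S \ref{sec:maptoball}: one multiplies the inhomogeneous analogue of \eq{e2}, together with its $u\leftrightarrow v$ counterpart, by $fr(\partial_u \pm \partial_v)(\hat\psi/f)$ and integrates by parts over $\Delta$. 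The bulk terms are controlled using Lemmas \ref{lem:basicballbounds} and \ref{lem:twisteq} under the standing restriction $\kappa<2/3$; the boundary contribution on $\scri$ is treated exactly as in the passage leading to \eq{fo1} via the Robin condition, using $\beta\in C^1$ and the duality pairing between the $H^\kappa$-trace of $\rho^{\kappa-\frac{3}{2}}\hat\psi$ and the inhomogeneity $\gamma \in H^{1-\kappa}(\scri)$; and the forcing term is absorbed by Cauchy-Schwarz, with the weight $\rho^{-6}$ arising from pairing $F$ against $\tilde\partial \hat\psi$ together with the geometric factors $\Omega^2 r^2\sim \rho^{-4}$ from the wave operator and the $\rho^{-2}$ weight built into the $\H^1$-norm \eq{c0h1}.

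Uniqueness is then immediate: applying the above estimate to the difference of two solutions of \eq{wp:weqn} with identical data forces the difference to vanish.

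For existence, I would use regularization of the rough background. Mollify $(\rt, \varpi_N, \psi) \in \mathcal{B}_b$ to obtain a sequence of smooth triples $(\rt^{(n)}, \varpi_N^{(n)}, \psi^{(n)})$ converging in $\mathcal{C}$, taking care to preserve the boundary behavior $\rt\to 0$ on $\scri$; correspondingly smooth the data $(\overline{\psi}, F, \gamma)$ in the topologies i)--iii). For each $n$ the associated metric is smooth and classical, so a solution $\hat\psi^{(n)} \in \H^1(\Delta)$ is furnished by the result of \cite{Warnick:2012fi}. Since the constant $C_b$ in the energy estimate depends only on the bounds satisfied by elements of $\mathcal{B}_b$, it is uniform in $n$, and one obtains a uniform bound on $\|\hat\psi^{(n)}\|_{C^0\H^1(\Delta)}$. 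Extract a weakly converging subsequence and pass to the limit in the weak formulation of \eq{wp:weqn}; the initial trace at $v=u_0$ is preserved since restriction is continuous on $C^0\H^1$.

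The main technical obstacle is verifying that the boundary condition on $\scri$ passes to the limit, since \eq{bcpsi} must be interpreted in the weak sense of \cite{Warnick:2012fi} (realised through the integration by parts in the energy identity). The uniform $\H^1$-bound on $\hat\psi^{(n)}$ provides, via the trace theorem, a uniform $H^\kappa(\scri)$-bound on $\rho^{\kappa-\frac{3}{2}}\hat\psi^{(n)}$, giving precisely the compactness needed to pass to the limit in the duality pairing with $\beta, \gamma \in H^{1-\kappa}(\scri)$. A secondary subtlety is that the twisting in \eq{wp:weqn} is with respect to the dynamical quantity $\rt$ rather than the smooth defining function $\rho$, but the equivalence recorded in Lemma \ref{lem:twisteq} ensures this mismatch is harmless at the $\H^1$-level and the estimate closes.
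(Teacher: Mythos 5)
Your a priori estimate and the uniqueness argument are in the right spirit (they mirror the energy identity of \S\ref{sec:maptoball}), but the existence step contains a genuine gap, and it is exactly the one the paper's closing Remark in \S\ref{sec:well posed} warns about. You mollify the whole triple $(\rt,\varpi_N,\psi)$ and assert that, the mollified metric being ``smooth'', the result of \cite{Warnick:2012fi} furnishes $\hat\psi^{(n)}$. But the well-posedness theorem of \cite{Warnick:2012fi} requires not just interior smoothness but specific asymptotic regularity of the conformally rescaled metric at $\scri$ (an expansion in integer powers of $\rho$). The metric associated to an element of $\mathcal{B}_b$ through \eq{defaux} has $\Omega^2$ containing the term $r^2\psi^2$ in $1-\mu$, which generically produces non-integer powers $\rho^{3-2\kappa}$ in the boundary expansion; mollification in the interior does not remove these. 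If instead you force $\psi^{(n)}$ to vanish rapidly near $\scri$ so that the approximate metric does satisfy the hypotheses of \cite{Warnick:2012fi}, then $\psi^{(n)}$ cannot converge to $\psi$ in the topology you invoke (the $d_\psi$ distance controls $\sup|\rho^{-3/2+\kappa}(\psi^{(n)}-\psi)|$, which stays bounded below by $|\Psi|$ for cutoff approximants): such points are not dense in $\mathcal{B}_b$. So the approximation scheme as stated either has no linear theory to quote for the approximants, or the approximants do not approximate.

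The paper resolves this by a different decomposition: it observes that $\Omega^2$ does not enter the principal part of the twisted equation at all, so setting $F'=\rt^2\Omega^2F$ the equation is equivalent to the twisted wave equation for the simple metric $g'=\rt^{-2}(-du\,dv+d\sigma_{S^2})$, whose boundary regularity depends only on $\rt$. One then approximates only $\rt$ by smooth functions extending evenly in $\rho$ across $\scri$ (Lemma \ref{wp:lemma1} applies to each), and uses the quantitative stability estimate of \S\ref{contractionsec}, $d_\psi(\hat\psi_1,\hat\psi_2)\leq C\, d_{\rt}(\rt_1,\rt_2)$, to conclude that the approximate solutions form a Cauchy sequence converging \emph{strongly} in $d_\psi$ to the desired solution; no weak-compactness or trace-passage argument is needed, and the boundary condition is inherited in the weak sense along the strong limit. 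If you want to keep your weak-limit strategy, you would at minimum have to (i) replace the mollification of the full triple by the conformal reduction above (or otherwise justify applicability of \cite{Warnick:2012fi} to your approximants), and (ii) justify that the a priori energy estimate holds for $C^0\H^1$ weak solutions of the rough-coefficient equation, which is what your uniqueness claim implicitly uses.
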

In this section we shall prove this result. Before we do so, let us note that the subtlety here is in the low regularity assumed on the function $\rt$. From the results of \cite{Warnick:2012fi}, the following Lemma follows:
\begin{Lemma}\label{wp:lemma1}
Suppose that in addition to the assumptions above we have that $\rt$ is $C^\infty$ on $\Delta$, and extends smoothly to $\scri$ as an even function\footnote{Equivalently, the extension of $\rt$ across $\scri$ defined by 
\be
\breve{\rt}(u, v)=\left \{ \begin{array}{lcl} 
\rt(u, v) & \quad& v \leq u \\
-\rt(v, u) & \quad& v > u				
\end{array} \right.
\ee 
should be $C^\infty$.} of $\rho$, then Proposition \ref{wpprop} holds.
\end{Lemma}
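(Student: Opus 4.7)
The plan is to reduce the Lemma to a direct application of the linear well-posedness theorem of \cite{Warnick:2012fi}, once it is verified that the additional smoothness and evenness hypothesis on $\tilde r$ places the present setup inside the framework of that theorem. First I would show that the conformally rescaled metric $\rho^2 g$ extends to a smooth non-degenerate Lorentzian metric across $\scri$. Because $\tilde r$ is smooth and even in $\rho=(u-v)/2$, the ratio $\tilde r/\rho$ is smooth and strictly positive on $\overline\Delta$, so the angular coefficient $(\rho/\tilde r)^2$ of $\rho^2 g$ is a smooth positive function there. Combining the definition $\Omega^2 = -4\tilde r^{-4}\tilde r_u \tilde r_v /(1-\mu)$ from \eq{defaux} with the asymptotic behaviour $1-\mu \sim r^2/l^2 = (\tilde r\, l)^{-2}$ near infinity, the null coefficient $\rho^2\Omega^2$ extends to a smooth function approaching $4l^2$ on $\scri$. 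Consequently $\scri$ is a smooth timelike boundary of $(\overline\Delta,\rho^2 g)$, and $\tilde r$ furnishes a smooth geometric boundary defining function.

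Next I would identify the twisted wave operator $\tn^\dagger_\mu \tn^\mu$ twisted with $f=\tilde r^{3/2-\kappa}$, together with the Robin-type boundary condition stated in terms of $\tilde\partial_u - \tilde\partial_v$, with the twisted Klein--Gordon operator and boundary condition treated in \cite{Warnick:2012fi}. Under the smoothness assumption on $\tilde r$, twisting with respect to $\tilde r$ differs from twisting with respect to the smooth boundary defining function $\rho$ by a smooth, strictly positive multiplicative factor, so the twisted $\underline H^1$-energies are uniformly equivalent. The equivalence of the two boundary-condition formulations is then precisely the argument of Section \ref{sec:bdycondition}, which goes through since here we have smoothness of the background far in excess of what is required there. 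The lower-order coefficients built from $\varpi_N$ and $\psi$ — which are only in $C^0$ and $C^0\underline H^1$ respectively — enter the wave equation multiplied by $\psi$ through $(\Omega^2/4)rV\psi$ and contribute as bounded-coefficient lower-order perturbations of the principal part; the weighted pointwise bounds of Lemma \ref{lem:basicballbounds} and Corollary \ref{cor:sc} ensure they satisfy the requirements of the energy method of \cite{Warnick:2012fi}.

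With the geometric framework in place, the integrability conditions (i)--(iii) are exactly the data hypotheses of the linear $\underline H^1$ well-posedness theorem of \cite{Warnick:2012fi}. Condition (i) is finiteness of the appropriately weighted $L^2$-norm of the inhomogeneity $F$ against the volume form associated with $g$; condition (ii) is finiteness of the twisted $\underline H^1$-energy of the initial datum $\overline\psi$ on $v=u_0$; and condition (iii) gives the trace regularity $H^{1-\kappa}(\scri)$ of the Robin datum $\gamma$, matching the trace theorem at the $\underline H^1$-level used in \cite{Warnick:2012fi}. Invoking that theorem then produces a unique $\hat\psi\in C^0\underline H^1(\Delta)$ solving \eq{wp:weqn} with the prescribed initial and boundary conditions, which is the content of Proposition \ref{wpprop} under the present smoothness assumption.

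The main obstacle is purely a translation between conventions: our variables twist with respect to the dynamical area radius $\tilde r$, while \cite{Warnick:2012fi} uses a fixed smooth boundary defining function. Under the smoothness and evenness hypothesis no regularity is lost in passing between the two, but care is needed to match the twisted energies, the weighted volume form, and the boundary trace formalism. Once this bookkeeping is carried out the conclusion is immediate from \cite{Warnick:2012fi}; no new analytical input is required at this stage, which is why the much harder rough-coefficient Proposition \ref{wpprop} will need to be established separately, presumably via an approximation scheme that regularises $\tilde r$ and passes to the limit using uniform energy bounds supplied by the present Lemma.
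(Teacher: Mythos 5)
There is a genuine gap at the first step. You claim that because $\rt$ is smooth and even, the rescaled metric $\rho^2 g$ extends smoothly across $\scri$, in particular that $\rho^2\Omega^2$ is smooth up to the boundary with limit $4l^2$. But the hypothesis of the Lemma only upgrades $\rt$; the other two members of the triple remain a generic element of $\mathcal{B}_b$, so $\varpi_N$ is merely $C^0$ and $\psi$ merely $C^0\underline{H}^1$. Since $\Omega^2$ is \emph{defined} through \eq{defaux} by $\Omega^2=-4r^4\rt_u\rt_v/(1-\mu)$ with $1-\mu = 1-\tfrac{2\varpi}{r}+\tfrac{r^2}{l^2}$ and $\varpi=\varpi_N+2\pi g\,r^3\psi^2/l^2$, the coefficient $\rho^2\Omega^2$ inherits the roughness of $\varpi_N$ and $\psi$; the correction to your leading-order asymptotic $1-\mu\sim r^2/l^2$ contains the term $4\pi g\, r^2\psi^2/l^2\sim \rt^{1-2\kappa}$, a non-integer power which is not smooth (and for $\kappa\geq\tfrac12$ not even $C^1$) at $\scri$. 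Hence neither $g$ nor $\rho^2 g$ satisfies the regularity hypotheses of \cite{Warnick:2012fi}, and the theorem there cannot be invoked for the operator built from $g$. This is exactly the pitfall flagged in the Remark at the end of Section \ref{sec:well posed}: elements of $\mathcal{B}_b$ whose associated metrics meet the regularity conditions of \cite{Warnick:2012fi} are not dense, precisely because of the $r^2\psi^2$ term in the denominator of $1-\mu$.

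The paper's proof sidesteps this by exploiting the fact that $\Omega^2$ does not enter the principal part of the twisted equation in double null form: \eq{wp:weqn} reads $\partial_v\bigl(fr\,\partial_u(\hat\psi/f)\bigr)=-\partial_u(rf)\,\partial_v(\hat\psi/f)-\tfrac{\Omega^2}{4}rF$, so setting $F'=\rt^2\Omega^2F$ one sees that $\hat\psi$ solves \eq{wp:weqn} if and only if it solves $(\tn')^\dagger_\mu(\tn')^\mu\hat\psi=F'$ for the auxiliary metric $g'=\rt^{-2}\bigl(-du\,dv+d\sigma_{S^2}\bigr)$, which depends on $\rt$ alone and is therefore smooth with the required evenness under the Lemma's hypothesis. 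The rough factor $\rt^2\Omega^2$ is merely bounded on $\Delta$, which is all that is needed to see that condition $i)$ for $F'$ reduces to condition $i)$ for $F$. Your proposal needs this conformal elimination of $\Omega^2$ (or some substitute for it); as written, the step ``the geometric framework of \cite{Warnick:2012fi} applies to $g$'' fails, and with it the rest of the argument.
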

\begin{proof}
Recall that the wave equation \eq{wp:weqn} takes the form
\be
\partial_v \left(f r \left(\partial_u \frac{\hat {\psi}}{f}\right)\right) = - \partial_u \left(rf\right) \left(\partial_v \frac{\hat{\psi}}{f}\right) - \frac{\Omega^2}{4} r F, 
\ee
As a result, defining $F' = \rt^2 \Omega^2 F$, we see that $\hat{\psi}$ solves \eq{wp:weqn} if and only if it solves
\be
(\tn')^\dagger_\mu (\tn')^\mu  \hat{\psi} = F'
\ee
where $(\tn')^\dagger_\mu (\tn')^\mu$ is constructed from the metric
\be
g' = \frac{-du dv + d\sigma_{S^2}}{\rt^2}.
\ee
This metric satisfies the regularity and boundary conditions of \cite{Warnick:2012fi}, thus the well posedness results of that paper apply, in particular Theorem 6.1. Thus if $i)-iii)$ hold (substituting $F'$ for $F$)  then the conclusions of Proposition \ref{wpprop} hold. Finally, noting that $\rt^2 \Omega^2$ is bounded on $\Delta$, we see that the condition on $F'$ reduces to that on $F$.
\end{proof}

Armed with this result, we are able to prove Proposition \ref{wpprop} by approximating $\rt$, finding the corresponding solutions to the wave equation, and then showing that the sequence of approximations so obtained converges.
\begin{proof}[Proof of Proposition \ref{wpprop}]
Now suppose that $(\rt, \varpi_N, \psi)$ is an arbitrary element of $\mathcal{B}_b$ and fix $F' = \rt^2 \Omega^2 F$. Let $\rt_1, \rt_2$ be two radial functions satisfying the conditions of Lemma \ref{wp:lemma1} and let $\hat{\psi}_i$ be the unique weak solution of
\be
\partial_v \left(f_i r_i \left(\partial_u \frac{\hat {\psi}_i}{f_i}\right)\right) = - \partial_u \left(r_if_i\right) \left(\partial_v \frac{\hat{\psi}_i}{f_i}\right) -\frac{\rt_i^{-3}}{4}  F', 
\ee
such that $\hat{\psi}_i$ satisfies the initial conditions
\be
\hat{\psi}_i = \overline{\psi}\quad  \textrm{ on }\quad  v=u_0
\ee
and boundary conditions
\be
\rho^{-\frac{1}{2}-\kappa} f_i \left(\partial_u - \partial_v \right) \left(\frac{\hat{\psi}_i }{ f_i}\right)+ 2\beta\left(t\right) \rho^{-\frac{3}{2} + \kappa} \hat{\psi}_i = \gamma\left(t\right) \ \ \  \textrm{on } \quad \scri,
\ee
where $f_i = (\rt_i)^{\frac{3}{2}-\kappa}$.

Recall now the estimate from \S \ref{contractionsec} which controls how the solution of the wave equation on the backgrounds defined by two different elements of the ball differ in terms of the distance between the points in the ball. Estimating in precisely the same way\footnote{In \S \ref{contractionsec} we were also able to show that the RHS was $\delta$-small, but this was as a consequence of slightly higher regularity for the inhomogeneity, which we do not assume here.}, but with the simplification now that we may replace terms $\Omega^2_i r_i V_i$ appearing there with $\rt_i^{-3} F'$, we can show that
\begin{align}
 d_\psi(\hat{\psi}_1, \hat{\psi}_2)  \leq C_{b, \overline{\psi}, g, F'} \cdot   d_{\rt}\left({\tilde{r}}_1,\tilde{r}_2 \right) \, .
\end{align}
Clearly, we can take a sequence of points $\rt_i$, with each $\rt_i$ satisfying the postulates of Lemma \ref{wp:lemma1}, such that  $\rt_i \to \rt$ with respect to the $d_{\rt}$ metric. Then the corresponding solutions $\hat{\psi}_i$ of \eq{wp:weqn} converge with respect to the $d_\psi$ metric to the unique weak solution of
\be
\partial_v \left(f r \left(\partial_u \frac{\hat {\psi}}{f}\right)\right) = - \partial_u \left(rf\right) \left(\partial_v \frac{\hat{\psi}}{f}\right) - \frac{\rt^{-3}}{4}  F', 
\ee
which is nothing other than \eq{wp:weqn} on recalling the definition of $F'$.
\end{proof}

We note in passing that a similar result holds for $T\hat{\psi}$ provided $(\rt, \varpi_N, \psi) \in \mathcal{B}^1_b$, with $TF, \overline{T\psi}, T\gamma$ assumed to satisfy $i)-iii)$. This can be deduced by commuting with $T$ and making use of estimates established in \S \ref{sec:improreg}. In particular, we deduce from this that under these assumptions $\hat{\psi} \in H^2_{loc.}$.

\begin{Remark}
We found approximate solutions by first eliminating $\Omega^2$ from the principle part of the operator, and then simply approximating $\rt$. One might instead consider trying to prove this result by considering points $(\rt, \varpi_N, \psi)$ in $\mathcal{B}_b$ such that the metrics they define directly satisfy the regularity conditions of \cite{Warnick:2012fi}. Unfortunately such points are not dense in $\mathcal{B}_b$. To see this, we note that in constructing $\Omega$ from $(\rt, \varpi_N, \psi)$, as in \eq{defaux}, a term of the form $r^2 \psi^2$ appears in the denominator. This will introduce a non-integer power into the expansion of $\Omega$ unless $\psi$ is assumed to vanish sufficiently rapidly near $\scri$, however functions $\psi$ vanishing too rapidly near $\scri$ are not dense in $C^0\H^1$.
\end{Remark}

\newpage

\appendix
\section{The linear equations}
In many places during the course of our arguments, we shall need to estimate properties of solutions to various linear equations in the weakly asymptotically AdS spacetimes. In order to streamline these arguments, we collect in this section some of these estimates.

\subsection{Estimates for the transport equation}

The equation for the renormalised Hawking mass, $\varpi_N$ takes the form of a linear transport equation in $v$, with rough coefficients. We give here a Lemma to allow us to handle such an equation. Firstly let us define the Banach space $C^0_u L^1_v (\overline \Delta_{\delta, u_0})$ which consists of functions on $\delta, u_0$ such that
\be
\norm{\alpha}{C^0_u L^1_v} := \sup_{u \in [u_0, u_0+\delta]} \int_{u_0}^u \abs{\alpha(u, v)} dv < \infty
\ee
We say that\footnote{recall $\Delta_{\delta, u_0} =  \overline{\Delta}_{\delta, u_0} \setminus \scri$} $\phi \in C^0_u L^1_v (\Delta_{\delta, u_0})$ if for any $0<\epsilon<\delta$ we have that
\be
\sup_{u \in (u_0+\epsilon, u_0+\delta]} \int_{u_0}^{u-\epsilon} \abs{\alpha(u, v)} dv < \infty.
\ee
Clearly $C^0_u L^1_v (\overline \Delta_{\delta, u_0})\subset C^0_u L^1_v (\Delta_{\delta, u_0})$.
\begin{Lemma}\label{translem}
Suppose $\alpha, \beta \in  C^0_u L^1_v (\Delta_{\delta, u_0})$ and $\phi_0 \in C^0((u_0, u_0+\delta])$. Then there exists a unique $\phi \in C^0(\Delta_{\delta, u_0})$ such that:
\begin{enumerate}[i)]
\item For each $u \in (u_0, u_0+\delta]$, and $0<\epsilon< u-u_0$, the map $v \mapsto \phi(u, v)$ is absolutely continuous on the interval $[u_0, u-\epsilon]$.
\item The transport equation
\be
\partial_v \phi = \alpha \phi + \beta
\ee
holds for all $u$ and almost every $v$ in $\Delta_{\delta, u_0}$ with the initial condition $\phi(u, u_0) = \phi_0(u)$.
\item  If moreover $\alpha, \beta \in C^0_u L^1_v (\overline \Delta_{\delta, u_0})$ and $\phi_0 \in C^0([u_0, u_0+\delta])$, then: $\phi \in C^0(\overline \Delta_{\delta, u_0})$;  for each $u \in [u_0, u_0+\delta]$ the map $v \mapsto \phi(u, v)$ is absolutely continuous on the interval $[u_0, u]$; and we have the estimate
\ben{odeest}
\norm{\phi}{C^0} \leq e^{2 \norm{\alpha}{C^0_u L^1_v} }\left( \norm{\beta}{C^0_u L^1_v} + \norm{\phi_0}{C^0}\right)
\een
\end{enumerate}
Suppose now that additionally $\partial_u \alpha, \partial_u \beta \in C^0_u L^1_v (\Delta_{\delta, u_0})$ and $\phi_0 \in C^1((u_0, u_0+\delta])$. Then\footnote{We understand the derivative here to be a weak derivative, which will agree with the strong derivative almost everywhere.} $\partial_u\phi \in C^0(\Delta_{\delta, u_0})$ and we have
\begin{enumerate}[i)]
\item For almost every $u \in (u_0, u_0+\delta]$, and for any $0<\epsilon<\delta$, the map $v \mapsto \partial_u \phi(u, v)$ is absolutely continuous on the interval $[u_0, u-\epsilon]$.
\item The equation
\ben{odediff}
\partial_u \partial_v \phi=\partial_v \partial_u \phi = \alpha \partial_u \phi +  ( \partial_u\alpha) \phi+ \partial_u \beta
\een
holds almost everywhere in $\Delta_{u_0, \delta}$.
\item  If moreover $\alpha, \beta, \partial_u \alpha, \partial_u \beta \in C^0_u L^1_v (\overline \Delta_{\delta, u_0})$ and $\phi_0 \in C^1([u_0, u_0+\delta])$ then: $\partial_u \phi \in C^0(\overline \Delta_{\delta, u_0})$;  for each $u \in [u_0, u_0+\delta]$ the map $v \mapsto \partial_u \phi(u, v)$ is absolutely continuous on the interval $[u_0, u]$; and we have the estimate
\ben{odeest2}
\norm{\partial_u \phi}{C^0} \leq 2 e^{2 \norm{\alpha}{C^0_u L^1_v} }\left( \norm{\beta}{C^0_u L^1_v}+ \norm{\partial_u \beta}{C^0_u L^1_v}+ \norm{\partial_u \alpha}{C^0_u L^1_v}  + \norm{\phi_0}{C^1}\right)
\een
\end{enumerate}
\end{Lemma}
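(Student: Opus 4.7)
The plan is the textbook integrating-factor method for the linear transport equation, with some care taken to track the (very weak) regularity of the coefficients. For the first (existence/uniqueness/estimate) part, set
\be
\mu(u,v) := \exp\!\left(-\int_{u_0}^v \alpha(u, v')\,dv'\right),
\ee
and observe that $\mu^{\pm 1}$ are jointly continuous on $\Delta_{\delta, u_0}$ (continuity in $v$ from absolute continuity of $v\mapsto\int_{u_0}^v\alpha(u,\cdot)$, continuity in $u$ from $\alpha\in C^0_u L^1_v$) and pointwise bounded by $e^{\|\alpha\|_{C^0_uL^1_v}}$. Take as candidate solution
\be
\phi(u,v) := \mu(u,v)^{-1}\Big[\phi_0(u) + \int_{u_0}^v \mu(u,v')\beta(u,v')\,dv'\Big],
\ee
and verify in order: (i) $\phi\in C^0(\Delta_{\delta,u_0})$ using $\beta\in C^0_u L^1_v$; (ii) for each fixed $u$, absolute continuity of $v\mapsto\phi(u,v)$ on $[u_0,u-\epsilon]$, since the indefinite $v$-integral of an $L^1$ function is absolutely continuous and $\mu^{\pm 1}$ is bounded away from $0$ and $\infty$; (iii) direct differentiation at Lebesgue points of $\alpha(u,\cdot), \beta(u,\cdot)$ yields $\partial_v\phi=\alpha\phi+\beta$ for a.e.\ $v$; (iv) uniqueness via a standard Grönwall argument on $w=\phi_1-\phi_2$, using only $\alpha(u,\cdot)\in L^1$. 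Under the closed-domain hypotheses $\mu$ and the integral term extend continuously to $\overline{\Delta}_{\delta,u_0}$, and the quantitative bound \eq{odeest} is read off directly from $|\mu^{\pm 1}|\le e^{\|\alpha\|_{C^0_uL^1_v}}$.

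For the differentiation-in-$u$ part, rather than try to differentiate the solution formula pointwise (which is awkward, since $\alpha$ is only $L^1$ in $v$ and $\mu$ is defined by a $v$-integral of it), I apply the already-established first part to the equation that $\partial_u\phi$ formally satisfies, namely
\be
\partial_v \psi = \alpha\psi + (\partial_u\alpha)\phi + \partial_u\beta,\qquad \psi(u,u_0)=\phi_0'(u).
\ee
Under the additional hypotheses $\partial_u\alpha, \partial_u\beta\in C^0_uL^1_v$, and using the $C^0$-bound on $\phi$ already obtained, the right-hand-side inhomogeneity $(\partial_u\alpha)\phi + \partial_u\beta$ lies in $C^0_uL^1_v$. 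The first part then produces a unique continuous $\psi$ satisfying properties (i)--(iii) of the second half of the statement for its own equation, and the estimate \eq{odeest2} is obtained by two applications of \eq{odeest}: first to bound $\|\phi\|_{C^0}$ in terms of $\|\beta\|_{C^0_uL^1_v}$ and $\|\phi_0\|_{C^0}$, then to bound $\|\psi\|_{C^0}$ in terms of $\|\partial_u\alpha\|_{C^0_uL^1_v}\cdot\|\phi\|_{C^0}$, $\|\partial_u\beta\|_{C^0_uL^1_v}$ and $\|\phi_0'\|_{C^0}$.

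The remaining task is the identification $\psi=\partial_u\phi$. The clean way is to test against $\chi\in C_c^\infty(\Delta_{\delta,u_0})$: using the solution formula for $\psi$, swap the order of integration via Fubini, and integrate by parts in $u$ on the three terms coming from $\phi_0'(u)$, $\partial_u\alpha$, and $\partial_u\beta$. The resulting distributional identity collapses, via the solution formula for $\phi$, to $\iint \psi\chi\,du\,dv = -\iint \phi\,\partial_u\chi\,du\,dv$, which is exactly the definition of $\partial_u\phi=\psi$ in the weak sense; since $\psi\in C^0$, it equals the classical $u$-derivative at almost every $u$. Absolute continuity of $v\mapsto\partial_u\phi(u,v)$ and the pointwise validity of \eq{odediff} are then inherited from the properties of $\psi$ as a solution of its own equation.

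\textbf{Main obstacle.} The integrating-factor derivation and the Grönwall uniqueness are routine. The one genuinely delicate point is the identification of the solution of the differentiated equation with the weak $u$-derivative of $\phi$: one cannot differentiate $\mu$ in $u$ pointwise because it is built from a $v$-integral of an only-$L^1$ function, so any argument that formally interchanges $\partial_u$ with the $v$-integral must be avoided. The Fubini/integration-by-parts route sketched above only ever uses $\partial_u\alpha$ and $\partial_u\beta$ integrated against bounded kernels, and this is what I expect to make the identification work at the level of regularity assumed.
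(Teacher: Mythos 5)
Your first half coincides with the paper's own proof: the same integrating factor $\gamma(u,v)=e^{-\int_{u_0}^v\alpha(u,v')\,dv'}$, the same explicit formula for $\phi$, absolute continuity in $v$, a.e.\ validity of the transport equation, and \eq{odeest} read off from $\abs{\gamma^{\pm1}}\le e^{\norm{\alpha}{C^0_uL^1_v}}$; your Gr\"onwall uniqueness is equivalent to the paper's observation that $\partial_v(\phi\gamma)=0$. For the $u$-derivative you genuinely diverge: the paper differentiates the solution formula directly, using that $\partial_u\gamma=-\gamma\int_{u_0}^v\partial_u\alpha\,dv'$ holds as a weak derivative (differentiation under the integral sign, legitimate because $\partial_u\alpha\in C^0_uL^1_v$), whereas you solve the commuted transport equation for a candidate $\psi$ by the first part and then identify $\psi=\partial_u\phi$ by testing. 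Your route has the virtue of recycling the first part and mirrors the commutation strategy used elsewhere in the paper, but it is longer, and the bound it produces (like the paper's own computation) contains the product $\norm{\partial_u\alpha}{C^0_uL^1_v}\cdot\norm{\phi}{C^0}$, i.e.\ a term of the form $e^{4\norm{\alpha}{C^0_uL^1_v}}\norm{\partial_u\alpha}{C^0_uL^1_v}\left(\norm{\beta}{C^0_uL^1_v}+\norm{\phi_0}{C^0}\right)$ rather than the purely additive right-hand side of \eq{odeest2}; you should record your estimate in that form.

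The one genuine weak point is the identification step, specifically your claim that the Fubini/integration-by-parts argument ``only ever uses $\partial_u\alpha$ and $\partial_u\beta$ integrated against bounded kernels'' and so avoids differentiating $\mu$ in $u$. It does not: in the term $\iint \gamma^{-1}(u,v)\,\phi_0'(u)\,\chi(u,v)\,du\,dv$ (and likewise in the terms carrying $\partial_u\alpha$ and $\partial_u\beta$), integrating by parts in $u$ moves the derivative onto $\gamma^{-1}(u,v)\chi(u,v)$, so you need exactly the weak $u$-derivative of the integrating factor that you set out to avoid. This is not actually an obstruction --- for fixed $v$ the map $u\mapsto\int_{u_0}^v\alpha(u,v')\,dv'$ is absolutely continuous with derivative $\int_{u_0}^v\partial_u\alpha(u,v')\,dv'$ by Fubini (the integrand is jointly integrable on the relevant compact sets), and the chain rule for absolutely continuous functions then gives $\partial_u\gamma^{\mp1}=\pm\gamma^{\mp1}\int_{u_0}^v\partial_u\alpha\,dv'$, which is moreover continuous --- but you must supply this lemma rather than claim it is unnecessary. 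Once you have it, the detour through the commuted equation is still valid but no longer buys anything: you may as well differentiate the formula for $\phi$ directly and then once in $v$ to obtain \eq{odediff}, which is the paper's (shorter) argument.
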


\begin{proof}
First we note that if $\alpha \in C^0_u L^1_v(\Delta_{\delta, u_0})$ then the function
\be
\gamma(u, v) := e^{-\int_{u_0}^v \alpha(u, v') dv'}
\ee
belongs to $C^0(\Delta_{\delta, u_0})$, and for any $u_0< u \leq u_0+\delta$, $0<\epsilon<u-u_0$, the map $v \mapsto \gamma(u, v)$ is absolutely continuous on the interval $[u_0, u-\epsilon]$. Furthermore, if $\alpha \in C^0_u L^1_v (\overline \Delta_{\delta, u_0})$  we estimate
\bean 
\abs{\gamma(u, v)} &\leq& e^{\int_{u_0}^v \abs{\alpha(u, v')} dv'} \leq e^{\int_{u_0}^u \abs{\alpha(u, v')} dv'} \\
&\leq& e^{\norm{\alpha}{C^0_u L^1_v} }
\eean
and a similar estimate holds for $\abs{\gamma(u, v)^{-1}}$.

Now let us define
\be
\phi(u, v) = \gamma(u, v)^{-1}\left( \int_{u_0}^v \beta(u, v') \gamma(u, v') dv' + \phi_0(u)\right).
\ee
We readily verify that this is absolutely continuous in $v\in[u_0, u-\epsilon]$, for any $u_0< u \leq u_0+\delta$, $0<\epsilon<u-u_0$. Furthermore, $\phi$ satisfies
\be
\partial_v \phi = \alpha \phi + \beta, \qquad \psi(u, u_0) = \psi_0(u)
\ee
for all $u\in(u_0, u_0+\delta]$ and almost every $v\in[u_0, u)$. To prove uniqueness, suppose $\beta=0$, $\phi_0=0$. We can differentiate $\phi \gamma$ to find
\be
\partial_v (\phi \gamma) = 0
\ee
for all $u$ and almost every $v$, whence $\phi \equiv 0$. Finally we may directly estimate from the equation for $\phi$ to show \eq{odeest} holds if the coefficients are assumed to be globally bounded.

Now we consider the case where  $\partial_u \alpha, \partial_u \beta \in C^0_u L^1_v (\Delta_{\delta, u_0})$ and $\phi_0 \in C^1((u_0, u_0+\delta])$. Since $\alpha$ and $\partial_u \alpha$ are locally integrable on $\Delta_{u_0, \delta}$, we have that
\be
\partial_u \gamma(u, v) = \gamma(u, v) \left(-\int_{u_0}^v \partial_u \alpha(u, v') dv' \right)
\ee
holds almost everywhere. Furthermore, the right hand side is in $C^0(\Delta_{\delta, u_0})$. Directly differentiating the expression for $\phi$ above, making a similar argument to differentiate $\beta$ under the integral sign, we conclude that $\partial_u \phi\in C^0(\Delta_{\delta, u_0})$. Here we must interpret the derivative as a weak derivative, so that continuity holds modulo redefinition on a set of measure zero. Differentiating once more with respect to $v$ we conclude \eq{odediff} holds. Finally, if we make the further assumption that $\alpha, \beta, \partial_u \alpha, \partial_u \beta \in C^0_u L^1_v (\overline \Delta_{\delta, u_0})$ and $\phi_0 \in C^1([u_0, u_0+\delta])$ we can readily estimate \eq{odeest2} by applying the estimate from the previous discussion.
\end{proof}

%
%
%
\bibliographystyle{utphys}
\bibliography{AdS}

\end{document}